\title{Tight (Double) Exponential Bounds for 
Identification Problems: Locating-Dominating Set and Test Cover} 
\titlerunning{Tight (Double) Exponential Bounds for 
Identification Problems} 
\author{Dipayan Chakraborty}{Université Clermont Auvergne, CNRS, Mines Saint-Étienne, Clermont Auvergne INP, LIMOS, 63000 Clermont-Ferrand, France\and Department of Mathematics and Applied Mathematics, University of Johannesburg, Auckland Park, 2006, South Africa \and
 \url{https://dipayan5186.github.io/Website/}}{dipayan.chakraborty@uca.fr}{https://orcid.org/0000-0001-7169-7288}{International Research Center ``Innovation Transportation and Production Systems'' of the I-SITE CAP 20-25.}
\author{Florent Foucaud}{Université Clermont Auvergne, CNRS, Mines Saint-Étienne, Clermont Auvergne INP, LIMOS, 63000 Clermont-Ferrand, France\and \url{https://perso.limos.fr/ffoucaud}}{florent.foucaud@uca.fr}{https://orcid.org/0000-0001-8198-693X}{ANR project GRALMECO (ANR-21-CE48-0004), French government IDEX-ISITE initiative 16-IDEX-0001 (CAP 20-25), International Research Center ``Innovation Transportation and Production Systems'' of the I-SITE CAP 20-25, CNRS IRL ReLaX.}
\author{Diptapriyo Majumdar}{Indraprastha Institute of Information Technology Delhi, New Delhi, India \and \url{https://diptapriyomajumdar.wixsite.com/toto}}{diptapriyo@iiitd.ac.in}{https://orcid.org/0000-0003-2677-4648}{Supported by Science and Engineering Research Board (SERB) grant SRG/2023/001592.}
\author{Prafullkumar Tale}{Indian Institute of Science Education and Research Pune, Pune, India \and \url{https://pptale.github.io/}}{prafullkumar@iiserpune.ac.in}{https://orcid.org/0000-0001-9753-0523}{Supported by INSPIRE Faculty Fellowship  DST/INSPIRE/04/2021/00314.}
\authorrunning{Chakraborty, Foucaud, Majumdar, and Tale} 
\keywords{Identification Problems, Locating-Dominating Set, Test Cover, Double-Exponential Lower Bound, ETH, Kernelization Lower Bounds} 
\newcommand{\ff}[1]{\textcolor{violet}{#1 - FF}}
\newcommand{\pt}[1]{\textcolor{purple}{#1 - PT}}
\newcommand{\LD}{\textsc{Locating-Dominating Set}\xspace}
\newcommand{\TCPB}{\textsc{Test Cover}\xspace}
\newcommand{\setrep}{\texttt{set-rep}}
\newcommand{\bitrep}{\texttt{bit-rep}}
\newcommand{\bit}{\texttt{bit}}
\newcommand{\dptw}{\text{\textbf{d}}}
\newcommand{\ETH}{{\textsf{ETH}}}
\newcommand{\tw}{\textsf{tw}}
\newcommand{\vc}{\textsf{vc}}
\newcommand{\calA}{\mathcal{A}}
\newcommand{\calB}{\mathcal{B}}
\newcommand{\calF}{\mathcal{F}}
\newcommand{\calO}{\mathcal{O}}
\newcommand{\calP}{\mathcal{P}}
\newcommand{\calT}{\mathcal{T}}
\newcommand{\calX}{\mathcal{X}}
\newcommand{\calY}{\mathcal{Y}}
\newcommand{\true}{\texttt{True}}
\newcommand{\false}{\texttt{False}}
\newcommand{\yes}{\textsc{Yes}}
\newcommand{\no}{\textsc{No}}
\newtheorem*{theorem*}{Theorem}
\newtheorem{reduction rule}[theorem]{Reduction Rule}
\newcommand{\defproblem}[3]{
  \vspace{1mm}
\noindent\fbox{
  \begin{minipage}{0.96\textwidth}
  \begin{tabular*}{\textwidth}{@{\extracolsep{\fill}}lr} #1 \\ \end{tabular*}
  {\bf{Input:}} #2  \\
  {\bf{Question:}} #3
  \end{minipage}
  }
  \vspace{1mm}
}
\begin{document}

\maketitle

\begin{abstract}
Foucaud et al. [ICALP 2024] demonstrated  
that some problems in \NP\ can admit (tight) double-exponential
lower bounds when parameterized by treewidth or vertex cover number.
They showed these results by
proving \ETH-based conditional 
lower bounds for certain graph problems, in particular,
the metric-based identification problems \textsc{(Strong) Metric Dimension}. 
We continue this line of research and 
highlight the usefulness of this type of problems, to prove
relatively rare types of (tight) lower bounds.
We investigate fine-grained algorithmic aspects for
classical (non-metric-based) identification problems
in graphs, namely \textsc{Locating-Dominating Set}, and 
in set systems, namely \textsc{Test Cover}.
In the first problem, 
an input is a graph $G$ on $n$ vertices 
and an integer $k$, and the objective is to decide whether there is 
a subset $S$ of $k$ vertices such that any two distinct vertices not in $S$ 
are dominated by distinct subsets of $S$. 
In the second problem, an input is a set $U$ of items,
a collection $\calF$ of subsets of $U$ called \emph{tests},
and an integer $k$, and 
the objective is to select a set $S$ of at most $k$ tests
such that any two distinct items are contained in
a distinct subset of tests of $S$. 

For our first result, we adapt the techniques introduced
by Foucaud et al. [ICALP 2024] to prove similar (tight) lower bounds for
these two problems.
\begin{itemize}[nolistsep]
\item \textsc{Locating-Dominating Set} (respectively, \textsc{Test Cover}) 
parameterized by the treewidth of the input graph 
(respectively, the incidence graph) does not admit an algorithm 
running in time  $2^{2^{o(\tw)}} \cdot {\rm poly}(n)$
(respectively, $2^{2^{o(\tw)}} \cdot {\rm poly}(|U| +  |\calF|))$),
unless the \ETH\ fails.
\end{itemize}
This augments the short list of \NP-complete problems that admit tight
double-exponential lower bounds when parameterized by treewidth, and
shows that ``local'' (non-metric-based) problems can also admit such
bounds. (Note that the lower bounds in fact hold even for the parameter vertex integrity, and thus, also treedepth and pathwidth.) We show that these lower bounds are tight by designing
treewidth-based dynamic programming schemes with matching running
times.

Next, we prove that
these two problems also admit ``exotic'' (and tight) lower bounds,
when parameterized by the solution size $k$.
We prove that unless the \ETH\ fails,
\begin{itemize}[nolistsep]
\item \textsc{Locating-Dominating Set} does not 
admit an algorithm running in time $2^{o(k^2)} \cdot {\rm poly}(n)$,
nor a polynomial-time kernelization algorithm that reduces 
the solution size and outputs a kernel with $2^{o(k)}$ vertices, and
\item \textsc{Test Cover} does not 
admit an algorithm running in time 
$2^{2^{o(k)}} \cdot {\rm poly}(|U| + |\calF|)$ nor 
a kernel with $2^{2^{o(k)}}$ vertices.
\end{itemize}
Again, we show that these lower bounds are tight
by designing (kernelization) algorithms with matching running times.
To the best of our knowledge, \textsc{Locating-Dominating Set} is the first known
problem which is \FPT\ when parameterized by the solution size $k$, where the optimal
running time has a quadratic function in the exponent.
These results also extend the (very) small list of problems 
that admit an \ETH-based lower bound on the number of vertices in 
a kernel, and (for \textsc{Test Cover}) a double-exponential lower bound 
when parameterized by the solution size.
Moreover, this is the first example, to the best of our knowledge, 
that admits a double-exponential lower bound for the number of vertices in a kernel.
\end{abstract}

\tableofcontents
\newpage

\section{Introduction}
\label{sec:intro}

The article aims to study the algorithmic properties of certain identification problems in discrete structures. 
In identification problems, one wishes to select a solution substructure 
of an input structure (a subset of vertices, the coloring of a graph, etc.) 
so that the solution substructure uniquely identifies each element.
Some well-studied examples are, for example, the problems \TCPB\ for set systems and
\textsc{Metric Dimension} for graphs 
(Problems~[SP6] and [GT61] in the book by Garey and Johnson~\cite{GJ79}, respectively). 
This type of problem has been studied since the 1960s both in the combinatorics 
community (see e.g. R\'enyi~\cite{renyi1961} or 
Bondy~\cite{bondy1972induced}), and in the algorithms 
community since the 1970s~\cite{BDK05,BHHHLRS03,CGJSY12,MS85}. 
They have multiple practical and theoretical applications, such as network 
monitoring~\cite{Rao93}
, medical diagnosis~\cite{MS85}, 
bioinformatics~\cite{BDK05}, coin-weighing problems~\cite{sebo04},
graph isomorphism~\cite{B80}, games~\cite{Chvatal83}, machine 
learning~\cite{CN98} etc. An online bibliography on the topic with
over 500 entries as of 2024 is maintained at~\cite{onlineBIBidproblems}.

In this article, we investigate fine-grained algorithmic aspects of two identification problems.
One of them is {\LD} which is a graph-theoretic problem, and the other one {\TCPB} is a problem on set systems.
Like most other interesting and practically motivated computational
problems, identification problems also turned out to be
\NP-hard, even in very restricted settings.
See, for example, \cite{colbourn1987locating} and \cite{GJ79}, 
respectively.
We refer the reader to `Related Work' towards the end of 
this section for a more detailed overview on their algorithmic complexity.

To cope with this hardness,
these problems have been studied through the lens
of parameterized complexity.
In this paradigm, we associate each instance $I$ with a parameter
$\ell$, and are interested to know whether the problem
admits a \emph{fixed parameter tractable} (\FPT) algorithm,
i.e., an algorithm with the running time
 $f(\ell) \cdot |I|^{\calO(1)}$, for some computable
function $f$. 
A parameter may originate from the formulation of the problem itself or can be a property of the input graph.
If a parameter originates from the formulation of the problem itself, then it is called a \emph{natural parameter}.
Otherwise a parametera that is described by the structural property of the input graph is called a \emph{structural parameter}. 
One of the most well-studied structural parameters is
`treewidth' (which, informally, quantifies how close
the input graph is to a tree, and is {denoted by ${\tw}$}).
We refer readers to \cite[Chapter $7$]{cygan2015parameterized} for a formal definition.
Courcelle's celebrated theorem~\cite{Courcelle90}
states that the class of graph problems expressible in 
Monadic Second-Order Logic (MSOL) of constant size 
admit an algorithm running in time $f(\tw)\cdot \poly(n)$.
Hence, a large class of problems admit an \FPT\ 
algorithm when parameterized by the treewidth.
Unfortunately, the function $f$ is a tower of exponents whose height 
depends roughly on the size of the MSOL formula. 
Hence, this result serves as a starting point to
obtain an (usually impractical) \FPT\ algorithm.

Over the years, researchers have searched for more efficient problem-specific
algorithms when parameterized by the treewidth.
There is a rich collection of problems that admit an \FPT\
algorithm with single- or almost-single-exponential dependency 
with respect to treewidth,
i.e., of the form $2^{\calO(\tw)} \cdot n^{\calO(1)}$ or
$2^{\calO(\tw \log (\tw))} \cdot n^{\calO(1)}$,
(see, for example,~\cite[Chapter~$7$]{cygan2015parameterized}). 
There are a handful of graph problems that only admit \FPT\ algorithms with double- or triple-exponential dependence in the treewidth~\cite{BliznetsH-TAMC-24,DBLP:journals/ai/FichteHMTW23,DBLP:conf/sat/FichteHMW18,DBLP:conf/lics/FichteHP20,DBLP:journals/corr/abs-2402-14927,HKL24,MM16}.
In the respective articles, the authors prove that this double- (respectively,
triple-) dependence in the treewidth cannot be improved
unless the Exponential-Time Hypothesis (\ETH)\footnote{The \ETH\ roughly states that $n$-variable {\sc 3-SAT} cannot be solved in time $2^{o(n)}n^{\calO(1)}$.
See \cite[Chapter $14$]{cygan2015parameterized}.} fails.

All the double- (or triple-) exponential lower bounds in treewidth mentioned 
in the previous paragraph are for problems that are $\#$\NP-complete, $\Sigma_2^p$-complete, or $\Pi_2^p$-complete. Indeed, until recently, this type of lower bounds were known only for problems that are complete for levels that are higher than \NP\ in the polynomial hierarchy.
Foucaud et al.~\cite{FoucaudGK0IST24} recently proved that it is not necessary to go to higher levels of the polynomial hierarchy to achieve double-exponential lower bounds in the treewidth.
The authors studied three {\NP-complete} \emph{metric-based graph problems}
viz \textsc{Metric Dimension}, \textsc{Strong Metric Dimension}, and \textsc{Geodetic Set}. 
They proved that these problems 
admit double-exponential lower bounds in $\tw$ (and,  
in fact in the size of minimum vertex cover size $\vc$ 
for the second problem) 
under the {\ETH}. 
The first two of these three problems are identification problems.

In this article, we continue this line of research and 
highlight the usefulness of identification problems to prove
relatively rare types of lower bounds,
by investigating fine-grained algorithmic aspects of \LD and 
\TCPB, two classical (non-metric-based) identification problems.
This also shows that this type of bounds can hold for ``local''
(i.e., non-metric-based) problems (the problems studied in~\cite{FoucaudGK0IST24} were all metric-based).
Apart from serving as examples for 
double-exponential dependence on treewidth, these problems are of interest in their own right, 
and possess a rich literature both in the algorithms 
and discrete mathematics communities, as highlighted in 
`Related Work'.

\defproblem{\LD}{A graph $G$ on $n$ vertices and an integer $k$.}{Does 
there exist a locating-dominating set of size $k$ in $G$, that is,
a set $S$ of $V(G)$ of size at most $k$ such that for any two different 
vertices $u, v \in V(G) \setminus S$,
their neighborhoods in $S$ are different, 
i.e., $N(u) \cap S \neq N(v) \cap S$ and non-empty?} 

\defproblem{\TCPB}{A set of items $U$, a collection $\calF$ of subsets of $U$ called \emph{tests}, and an integer $k$.}{Does 
there exist a collection of at most $k$ tests
such that for each pair of items, there is a test 
that contains exactly one of the two items?}

\noindent As \TCPB\ is defined over set systems, 
for structural parameters, we define an \emph{incidence graph}
in the natural way:
A bipartite graph $G$ on $n$ vertices with bipartition $\langle R, B \rangle$
of $V(G)$ such that sets $R$ and $B$ contain a vertex
for every set in $\calF$ and for every item in $U$, respectively,
and $r \in R$ and $b \in B$ are adjacent 
if and only if the set corresponding to $r$ contains 
the element corresponding to $b$.
In this incidence graph, a test cover corresponds to a subset $S$ of $R$ such that any two vertices of $B$ have distinct neighborhoods within $S$.

The \LD\ problem is also a \emph{graph domination problem}.
In the classical \textsc{Dominating Set} problem, an input is an undirected graph $G$ and an integer $k$, and the objective is to decide whether there is a subset $S \subseteq V(G)$ of size $k$ such that for any vertex $u
\in V(G) \setminus S$, at least one of its neighbors is in $S$. 
It can also be seen as a local version of \textsc{Metric Dimension}\footnote{Note that \textsc{Metric Dimension} 
is also an identification
problem, but it is inherently non-local in nature, and indeed was
studied together with two other non-local problems
in~\cite{FoucaudGK0IST24}, where the similarities
between these non-local problems were noticed.}
in which the input is the same and the objective is to determine 
a set $S$ of $V(G)$ such that for any two vertices $u, v \in V(G) \setminus S$,
there exists a vertex $s \in S$ such that 
$dist(u, s) \neq dist(v, s)$.

We demonstrate the applicability of the techniques 
from~\cite{FoucaudGK0IST24} to the ``local'' problems {\LD} and {\TCPB}, showing that the metric nature of e.g. \textsc{Metric Dimension} was in fact not essential to obtain this type of lower bounds.
To do so, we adapt the main techniques developed in~\cite{FoucaudGK0IST24} to our setting, namely, the \emph{set-representation gadgets} and \emph{bit-representation gadgets}.

We prove the following result.

\begin{restatable}{theorem}{locdomtreewidthlb}
  \label{thm:TW}
Unless the \ETH\ fails,
\LD (respectively, \TCPB) 
does not admit an algorithm 
running in time  $2^{2^{o(\tw)}} \cdot {\rm poly}(n)$,
where $\tw$ is the treewidth and $n$ is the order of the graph (respectively, of the incidence graph).
\end{restatable}

We also prove that both \LD and \TCPB  admit an algorithm
with matching running time (Theorem~\ref{thm:LD-tw-algo}
and Theorem~\ref{thm:TC-tw-algo}), by nontrivial dynamic programming schemes on tree-decompositions.

In contrast to Theorem~\ref{thm:TW}, \textsc{Dominating Set} admits an algorithm
running in time $\calO(3^{\tw} \cdot
n^{2})$~\cite{LokshtanovMS18,RooijBR09}. 

We remark that the algorithmic lower bound of Theorem~\ref{thm:TW}
holds true even with respect to vertex integrity~\cite{DBLP:journals/tcs/GimaHKKO22}, a parameter larger than
treewidth. This also implies the same lower bounds for parameters treedepth and pathwidth, whose values both always lie between the vertex integrity and the treewidth.

Theorem~\ref{thm:TW} adds
\LD\ and \TCPB\ to the short list of {\NP}-complete problems that admit (tight)
double-exponential lower bounds for treewidth.
Using the techniques mentioned in~\cite{FoucaudGK0IST24},
two more problems (from learning theory), viz. \textsc{Non-Clashing Teaching Map} and \textsc{Non-Clashing Teaching Dimension},
were recently shown
in~\cite{DBLP:journals/corr/abs-2309-02876} to 
admit similar lower bounds.

Next, we prove that \LD and \TCPB
also admit `exotic' lower bounds,
when parameterized by the solution size $k$.
First, note that both problems are trivially \FPT\ 
when parameterized by the solution size. 
Indeed, as any solution must have size at
least logarithmic in the number of elements/vertices (assuming no
redundancy in the input), the whole instance is a trivial
single-exponential kernel for \LD, and double-exponential in the case
of \TCPB.
To see this, note that in both problems, any two vertices/items must be assigned 
a distinct subset from the solution set. Hence, if there are more than 
$2^k$ of them, we can safely reject the instance. Thus, for \LD, we can
assume that the graph has at most $2^k+k$ vertices, and for \TCPB, at most $2^k$ items. 
Moreover, for \TCPB, one can also assume that every test is unique (otherwise, delete any redundant test), in which case 
there are at most ${2^2}^k$ tests.
Hence, \LD\ admits a kernel with size $\calO(2^{k})$, 
and an \FPT\ algorithm running in time $2^{\calO(k^2)}$
(see Proposition~\ref{prop:LD-algo}).
We prove that both of these bounds are optimal.

\begin{restatable}{theorem}{locdomsetsolution}
\label{thm:locating-dom-set-sol-size-lb}
Unless the \ETH\ fails, \LD 
does not admit  
\begin{itemize}[nolistsep]
\item an algorithm running in time $2^{o(k^2)} \cdot {\rm poly}(n)$,
nor
\item a polynomial-time kernelization algorithm that reduces the solution size 
and outputs a kernel with $2^{o(k)}$ vertices.
\end{itemize}
\end{restatable}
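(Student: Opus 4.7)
The plan is to reduce $3$-SAT with $N$ variables and (after applying the sparsification lemma) $\calO(N)$ clauses to an instance of \LD\ with target solution size $k = \calO(\sqrt{N})$. Any algorithm for \LD\ running in time $2^{o(k^2)} \cdot \POLY(n)$ would then yield a $2^{o(N)}$ algorithm for $3$-SAT, contradicting the \ETH\ and proving the first part of the theorem.

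I partition the $N$ variables into $q = \lceil \sqrt{N} \rceil$ groups $V_1, \ldots, V_q$ of size $q$. For each group $V_i$, I build an ``assignment selector'' gadget containing $2^q$ designated \emph{assignment vertices} $\{a_i^{\alpha} : \alpha \in \{0,1\}^q\}$, one per possible truth assignment of $V_i$, together with auxiliary vertices and edges. Mirroring the bit-representation and set-representation gadgets of~\cite{FoucaudGK0IST24}, the gadget is engineered so that any locating-dominating set of size $k$ must select exactly one assignment vertex $a_i^{\alpha_i}$ per group. The $q$ chosen vertices then jointly encode a truth assignment of all $N$ variables while using only $\calO(\sqrt{N})$ solution vertices overall, since each choice carries $q$ bits of information out of a pool of size $2^q$. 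For each clause $C$, I attach a ``clause checker'' gadget built only out of non-solution vertices, whose edges connect checker vertices to precisely those assignment vertices that falsify the corresponding literals of $C$. The gadget is arranged so that if every selected $a_i^{\alpha_i}$ falsifies every literal of $C$ then two checker vertices acquire identical neighbourhoods within the solution (violating the LD property), whereas any satisfying selection keeps the checker vertices separated.

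The main technical obstacle is twofold. First, the clause checker gadgets must consume \emph{zero} budget from the solution: any per-clause solution vertex would inflate $k$ to $\Omega(N)$ and destroy the compression. The clause verification must therefore be routed entirely through the neighbourhood signatures induced by the selected $a_i^{\alpha_i}$'s, which forces a careful bit-representation-style encoding of the falsifying literals into the edges between checker and assignment vertices. Second, the selector gadget must, under the tight budget $k = \Theta(\sqrt{N})$, enforce ``at least one $a_i^{\alpha}$ in the solution'' so that the pigeonhole principle delivers ``exactly one''; this is secured through standard non-solution twin pairs attached to each selector that become indistinguishable unless some assignment vertex of the group is chosen.

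Part (ii) of the theorem follows from part (i) by the standard kernel-plus-brute-force composition: a polynomial-time kernelisation producing an equivalent \LD\ instance with $2^{o(k)}$ vertices (and solution size at most $k$) can be followed by exhaustive enumeration over all subsets of the kernel of size at most $k$, running in time $(2^{o(k)})^{k} = 2^{o(k^2)}$; combined with the kernelisation step this yields an algorithm with total running time $\POLY(n) + 2^{o(k^2)} = 2^{o(k^2)} \cdot \POLY(n)$, contradicting part (i).
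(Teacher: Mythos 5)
Your proposal follows essentially the same route as the paper's proof: partition the $N$ variables into $\sqrt{N}$ buckets, introduce one vertex per bucket-assignment (the sets $A_i$ in the paper), force exactly one such vertex per bucket via pendant-backed twin pairs (the paper's paths $b^{\circ}_i, b'_i, b^{\star}_i$), add a zero-budget pair of vertices per clause that is separated only by assignment vertices compatible with the clause (the paper's $c^{\circ}_j, c^{\star}_j$), and locate everything else cheaply with bit-representation gadgets, giving $k=\calO(\sqrt{N})$ and $2^{\calO(\sqrt{N})}$ vertices; the kernel lower bound then follows by the same kernel-plus-brute-force composition as in the paper. One local detail should be corrected: as written, wiring the checker vertices to the assignment vertices that \emph{falsify} the clause inverts (or destroys) the intended equivalence -- if both checkers receive the same falsify-edges they are never separated, and if only one does, the pair is separated exactly when the clause is \emph{unsatisfied}; the correct wiring, as in the paper, joins $c^{\circ}_j$ to precisely the assignment vertices whose assignments satisfy $C_j$ and gives $c^{\star}_j$ no such neighbours, so the pair is located if and only if a satisfying assignment vertex is selected. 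With that fix, the argument matches the paper's.
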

To the best of our knowledge, \LD is
the first known problem to admit such an algorithmic lower bound, 
with a matching upper bound, when parameterized by the solution size.  
The only other problems known to us, admitting similar lower bounds, are for
structural parameterizations like vertex
cover~\cite{DBLP:journals/toct/AgrawalLSZ19,DBLP:journals/corr/abs-2309-02876,FoucaudGK0IST24}
or pathwidth~\cite{DBLP:conf/mfcs/Pilipczuk11,DBLP:journals/iandc/SauS21}.
The second result is also quite rare in the literature. 
The only results known to us about \ETH-based conditional lower 
bounds on the number of vertices in a kernel when parameterized
by the solution size are for 
\textsc{Edge Clique Cover}~\cite{DBLP:journals/siamcomp/CyganPP16} and \textsc{Biclique Cover}~\cite{DBLP:conf/iwpec/ChandranIK16}\footnote{Additionally,
{\sc Point Line Cover} does not admit a kernel with
$\calO(k^{2-\epsilon})$ {\em vertices}, for any $\epsilon >0$, unless
$\NP \subseteq \coNP/poly$~\cite{DBLP:journals/talg/KratschPR16}.}.
Theorem~\ref{thm:locating-dom-set-sol-size-lb} also improves upon a
``no $2^{\calO(k)}n^{\calO(1)}$ algorithm'' bound from~\cite{BIT20}
(under $\W[2]\neq\FPT$) and a $2^{o(k\log k)}$ \ETH-based lower bound
recently proved in~\cite{DBLP:journals/corr/abs-2011-14849}.
 
Now, consider the case of \TCPB. As mentioned before,
it is safe to assume that $|\calF| \le 2^{|U|}$ and $|U| \le 2^{k}$.
By Bondy's celebrated theorem~\cite{bondy1972induced},
which asserts that in any feasible instance of \TCPB, there is always a solution of
size at most $|U| - 1$, we can also assume that $k \le |U| - 1$.
Hence, the brute-force algorithm that enumerates all the
sub-collections of tests of size at most $k$ runs in time
$|\calF|^{\calO(|U|)} = 2^{\calO(|U|^2)} = 2^{2^{\calO(k)}}$.  Our
next result proves that this simple algorithm is again optimal.

\begin{restatable}{theorem}{testcoversolsize}
  \label{thm:TC-solsize}
  Unless the \ETH\ fails,
  \TCPB\ does not admit 
\begin{itemize}[nolistsep]
\item an algorithm running in time $2^{2^{o(k)}} \cdot (|U|+|\mathcal F|)^{\calO(1)}$, nor
\item a polynomial-time kernelization algorithm that reduces the solution size and outputs a kernel with $2^{2^{o(k)}}$ vertices.
\end{itemize}
\end{restatable}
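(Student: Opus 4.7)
The plan is to establish both lower bounds through a single polynomial-time reduction from $3$-SAT on $n$ variables and $\calO(n)$ clauses (invoking the sparsification lemma) to \TCPB, producing a set system with solution size $k=\Theta(\log n)$ and polynomial total size in $n$.

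The heart of the reduction adapts the bit-representation and set-representation gadgets of Foucaud et al.\ (ICALP 2024) that were already used for the $\tw$-bound of Theorem~\ref{thm:TW}. Since a solution of $k$ tests can assign at most $2^k$ distinct signatures to items, choosing $k=\Theta(\log n)$ is exactly what is needed to let each variable of the $3$-SAT formula correspond to a bit-pattern produced by the solution. Items will encode the variables, the clauses, and auxiliary separator elements; the tests (polynomially many overall) are designed so that any test cover of size $k$ both (i) fixes a consistent bit-signature over the variable-items, which is read off as a truth assignment, and (ii) separates each clause-item only if at least one of its three literals is satisfied under this assignment. The two directions of equivalence (satisfying assignment $\Leftrightarrow$ test cover of size $k$) are then verified in the usual way, using bit-representation to recover the assignment from the signatures.

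Given such a reduction, the algorithmic bound is immediate: a $2^{2^{o(k)}}\cdot(|U|+|\calF|)^{\calO(1)}$-time algorithm would, composed with the reduction, solve $3$-SAT in time $2^{2^{o(\log n)}}\cdot\text{poly}(n)=2^{n^{o(1)}}=2^{o(n)}$, contradicting the \ETH. For the kernelization bound, suppose there exists a polynomial-time kernelization that reduces the solution size and outputs $2^{2^{o(k)}}$ vertices. Applying it to the reduced instance yields an equivalent \TCPB\ instance with $N\le 2^{2^{o(\log n)}}=2^{n^{o(1)}}$ vertices and preserved parameter $k'\le k=\calO(\log n)$. Brute-force enumeration of $k'$-subsets of tests in the kernel runs in time $N^{k'}\cdot\text{poly}(N)\le (2^{n^{o(1)}})^{\calO(\log n)}=2^{n^{o(1)}\cdot\calO(\log n)}=2^{o(n)}$, since $n^{o(1)}\cdot\log n$ is itself $n^{o(1)}$. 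Combined with the polynomial-time reduction and kernelization, this would solve $3$-SAT in $2^{o(n)}$ time, again refuting the \ETH.

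The main obstacle is engineering the reduction so that the $\Theta(\log n)$ tests in a solution simultaneously encode a consistent variable assignment via bit-representation \emph{and} enforce clause satisfaction, while keeping $|U|+|\calF|$ polynomial. Because clauses can mix variables with unrelated bit-patterns, the clause-gadget has to communicate with all relevant bit-positions of its literals at once, and the bit-representation gadget must be rigid enough that every size-$k$ test cover corresponds to a well-defined assignment (no degenerate signatures and no "collapsed" solutions). Achieving exactly $k=\Theta(\log n)$ rather than a slightly larger polylog in $n$ is crucial for the tightness of both bounds, and in particular for the brute-force step in the kernelization argument to remain within $2^{o(n)}$.
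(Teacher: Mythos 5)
There is a genuine gap, and it lies at the heart of your plan: you insist on a \emph{polynomial-time} reduction from $3$-SAT that outputs an instance of \TCPB\ with $|U|+|\calF|=\mathrm{poly}(n)$ and $k=\Theta(\log n)$. Such a reduction cannot exist unless the \ETH\ already fails: \TCPB\ can always be solved by brute force in time $|\calF|^{k}\cdot \mathrm{poly}(|U|+|\calF|)$, so on your reduced instances this is $n^{\calO(\log n)}=2^{\calO(\log^2 n)}$, i.e., a subexponential algorithm for $3$-SAT. Equivalently, a solution of $k=\Theta(\log n)$ tests drawn from polynomially many tests carries only $\calO(\log^2 n)$ bits of information, far too little to encode an assignment to $n$ variables, which is what your reverse direction ("read off a truth assignment from the signatures") requires. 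So the object your proof sketch relies on is never constructed, and its stated parameters are mutually inconsistent under the very hypothesis you want to use; your remark that keeping the instance polynomial is ``crucial'' is exactly backwards.

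The paper's reduction avoids this by being subexponential, not polynomial, in size: the $n$ variables are partitioned into $r=\Theta(\log n)$ buckets of $n/\log n$ variables each, and for every bucket one adds a \emph{test for each of the $2^{n/\log n}$ partial assignments} of that bucket; a dominating item per bucket forces any solution to pick at least one such test, clause items come in pairs $c^{\circ}_j,c^{\star}_j$ separated only by tests corresponding to satisfying partial assignments, and a bit-representation gadget (plus an isolated item) handles all remaining separations with $\calO(\log m)$ forced tests. This gives $k=\calO(\log n)$ but $|U|+|\calF|=2^{\Theta(n/\log n)}$, and the reduction runs in time $2^{\calO(n/\log n)}$ — which is perfectly acceptable, since then a $2^{2^{o(k)}}\cdot(|U|+|\calF|)^{\calO(1)}$ algorithm composed with the reduction runs in $2^{n^{o(1)}}\cdot 2^{\calO(n/\log n)}=2^{o(n)}$, contradicting the \ETH. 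Your composition and kernelization arguments (brute force $N^{k'}$ on a kernel with $N=2^{2^{o(k)}}$ vertices) are fine and essentially match the paper once the instance size and reduction time are corrected, but without the bucketing idea — i.e., encoding assignments in \emph{which} test is chosen from an exponentially large family rather than in bit-patterns over polynomially many tests — the reduction you describe cannot be realized.
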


This result adds \TCPB\ to the relatively rare list of \NP-complete
problems that admit such double-exponential lower bounds
when parameterized by the solution size and have a matching algorithm.
The only other examples that we know of are \textsc{Edge Clique Cover}~
\cite{DBLP:journals/siamcomp/CyganPP16}, 
\textsc{Distinct Vectors Problem}~\cite{DBLP:journals/dmtcs/PilipczukS20},
and \textsc{Telephone Broadcast}~\cite{tale2024}.
For double-exponential algorithmic lower bounds with respect to structural
parameters, please see
\cite{DBLP:journals/talg/FominGLSZ19,
HKL24,
JKL23,
DBLP:journals/toct/KnopPW20,
KLMPS24,
kunnemann_et_al:LIPIcs.ICALP.2023.131,
DBLP:conf/sat/LampisMM18,
DBLP:conf/fsttcs/Lokshtanov0SX21}.
The second result in the theorem is a simple corollary 
of the first result. 
Assume that the problem admits a kernel with $2^{2^{o(k)}}$ vertices.
Then, the brute-force algorithm enumerating all the possible solutions works in
time $\binom{2^{2^{o(k)}}}{k} \cdot (|U|+|\mathcal F|)^{\calO(1)}$,
which is $2^{k \cdot 2^{o(k)}} \cdot (|U|+|\mathcal F|)^{\calO(1)}$,
which is $2^{2^{o(k)}} \cdot (|U|+|\mathcal F|)^{\calO(1)}$,
contradicting the first result.
To the best of our knowledge, \TCPB\ is the first problem that 
admit a double-exponential kernelization lower bound 
for the number of vertices when parameterized by solution size,
or by any natural parameter.

\subparagraph{Related Work.}
\label{subsec:related-work}
\LD\ was introduced by Slater in the
1980s~\cite{DBLP:journals/networks/Slater87,slater1988dominating}.
The problem is \NP-complete~\cite{colbourn1987locating}, 
even for special graph classes such as planar unit disk graphs~\cite{MS09}, 
planar bipartite subcubic graphs, chordal bipartite graphs, split graphs and
co-bipartite graphs~\cite{F15}, interval and permutation graphs of
diameter~2~\cite{FMNPV17intervals2}. 
By a straightforward
application of Courcelle's theorem~\cite{C90}, \LD\ is \FPT\ for
parameter treewidth and even cliquewidth~\cite{CMR00}. 
Explicit polynomial-time algorithms were given for
trees~\cite{DBLP:journals/networks/Slater87}, block
graphs~\cite{ABLW20}, series-parallel
graphs~\cite{colbourn1987locating}, and
cographs~\cite{FMNPV17intervals1}. 
Regarding the approximation
complexity of \LD, 
see~\cite{F15,GKM08,S07}.

In~\cite{CGS21}, structural parameterizations of \LD\ were studied. It
was shown that the problem admits a linear kernel for the parameter
max-leaf number, however (under standard complexity assumptions) no
polynomial kernel exists for the solution size, combined with either
the vertex cover number or the distance to clique. They also provide a
double-exponential kernel for the parameter distance to cluster. In~\cite{DBLP:conf/ciac/ChakrabortyFMT25}, the authors of the present paper design an improved parameterized algorithm for \LD\ with respect to vertex cover number and a linear kernel for the feedback edge set number.

It was shown in~\cite{BIT20} that \LD\ cannot be solved in time
$2^{o(n)}$ on bipartite graphs, nor in time $2^{o(\sqrt{n})}$ on
planar bipartite graphs, assuming the
\ETH. Moreover, they also showed that \LD\ cannot be solved in time
$2^{\calO(k)}n^{\calO(1)}$ on bipartite graphs, unless
$\W[2]=\FPT$. Note that the authors of~\cite{BIT20} have designed a
complex framework with the goal of studying a large class of
identification problems related to \LD and similar problems. In
the full version~\cite{DBLP:journals/corr/abs-2011-14849}
of~\cite{CGS21}, it is shown that \LD\ does neither admit a
$2^{o(k\log k)}n^{\calO(1)}$-time nor an $n^{o(k)}$-time algorithm,
assuming the \ETH. In~\cite{DBLP:conf/ciac/ChakrabortyFMT25},
the authors of the present paper showed that \LD\ does not admit a compression algorithm returning an input with a subquadratic number of bits, under standard assumptions.

\TCPB\ was shown to be \NP-complete by Garey and Johnson~\cite[Problem
  SP6]{GJ79} and it is also hard to approximate within a ratio of
$(1-\epsilon)\ln n$~\cite{BHHHLRS03} (an approximation algorithm with
ratio $1+\ln n$ exists by reduction to \textsc{Set
  Cover}~\cite{BDK05}). As any solution has size at least $\log_2(n)$,
the problem admits a trivial kernel of size $2^{2^k}$, and thus
\TCPB\ is \FPT\ parameterized by solution size $k$.
\TCPB was studied within the framework of
``above/below guarantee'' parameterizations 
in~\cite{BFRS16,CGJMY16,CGJSY12,GMY13} and kernelization
in~\cite{BFRS16,CGJMY16,GMY13}. These results have shown an intriguing
behavior for \TCPB, with some nontrivial techniques being developed
to solve the problem~\cite{BFRS16,CGJSY12}.
\TCPB\ is \FPT\ for parameters $n-k$, but $W[1]$-hard for parameters
$m-k$ and $k-\log_2(n)$~\cite{CGJSY12}. However, assuming standard
assumptions, there is no polynomial kernel for the parameterizations
by $k$ and $n-k$~\cite{GMY13}, although there exists a ``partially
polynomial kernel'' for parameter $n-k$~\cite{BFRS16} (i.e. one with
$O((n-k)^7)$ elements, but potentially exponentially many tests). When
the tests have all a fixed upper bound $r$ on their size, the
parameterizations by $k$, $n-k$ and $m-k$ all become FPT with a
polynomial
kernel~\cite{CGJMY16,GMY13}. In~\cite{DBLP:conf/ciac/ChakrabortyFMT25},
the authors of the present paper showed that \TCPB\ can be solved in
time $2^{\calO(|U|\log|U|)}(|U|+|\calF|)^{\calO(1)}$, but does not admit a compression algorithm returning an input with a subquadratic number of bits, under standard assumptions.

The problem \textsc{Discriminating Code}~\cite{CCCCHL08} is very
similar to \TCPB\ (with the distinction that the input is presented as
a bipartite graph, one part representing the elements and the other,
the tests, and that every element has to be covered by some solution
test), and has been shown to be \NP-complete even for planar
instances~\cite{CCHL08}.

\subparagraph{Organization.}
We use standard notations which we specify in Section~\ref{sec:prelims}.
We use the \LD\ problem to demonstrate key technical concepts 
regarding our lower bounds and algorithms.
We present an overview of the arguments about \LD in 
Sections~\ref{sec:LD-tw} and~\ref{sec:LD-sol-size}, respectively, for 
parameters treewidth and solution size.
The arguments regarding \TCPB\ follow the same
line and are presented in Section~\ref{sec:modifications-for-test-cover}.
We conclude with some open problems in 
Section~\ref{sec:conclusion}.

\section{Preliminaries}
\label{sec:prelims}

For a positive integer $q$, we denote {the} set $\{1, 2, \dots, q\}$ by $[q]$.
We use $\mathbb{N}$ to denote the collection of all non-negative integers.

\subparagraph*{Graph theory.}
We use standard graph-theoretic notation, and we refer the reader 
to~\cite{Diestel12} for any undefined notation. For an undirected graph $G$, 
sets $V(G)$ and $E(G)$ denote its set of vertices and edges, respectively.
We denote an edge with two endpoints $u, v$ as $uv$.
Unless otherwise specified, we use $n$ to denote the number of vertices in 
the input graph $G$ of the problem under consideration.
Two vertices $u, v$ in $V(G)$ are \emph{adjacent} if there is an edge $uv$ {in 
$G$}. 
The \emph{open neighborhood} of a vertex $v$, denoted by $N_G(v)$, is the 
set of vertices adjacent to $v$.
The \emph{closed neighborhood} of a vertex $v$, denoted by $N_G[v]$, is 
the set $N_G(v) \cup \{v\}$.
We say that a vertex $u$ is a \emph{pendant vertex} if $|N_G(v)| = 1$.
We omit the subscript in the notation for neighborhood if the graph under 
consideration is clear.
For a subset $S$ of $V(G)$, we define $N[S] = \bigcup_{v \in S} N[v]$ and 
$N(S) = N[S] \setminus S$.
For a subset $S$ of $V(G)$, we denote the graph obtained by deleting $S$ 
from $G$ by $G - S$.
We denote the subgraph of $G$ induced on the set $S$ by $G[S]$.

The \emph{vertex integrity} of a graph $G$ is the minimum over $|S|+\max_{D\in cc(G-S)}|D|$, where $S$ is a subset of $V(G)$ and $cc(H)$ denotes the set of connected components of $H$. It is known that the vertex integrity of a graph $G$ is an upper bound for the treedepth of $G$, and thus, also of the pathwidth and the treewidth of $G$. See~\cite{DBLP:journals/tcs/GimaHKKO22} for more details.


\subparagraph*{Locating-Dominating Sets.}
A subset of vertices $S$ in graph $G$ is called its \emph{dominating set}
if $N[S] = V(G)$.
A dominating set $S$ is said to be a \emph{locating-dominating set}
if for any two different vertices $u, v \in V(G) \setminus S$, we have $N(u) 
\cap S \neq N(v) \cap S $.
In this case, we say vertices $u$ and $v$ are
\emph{distinguished} by the set $S$.
We say a vertex $u$ is \emph{located} by set $S$
if for any vertex $v \in V(G) \setminus \{u\}$,
$N(u) \cap S \neq N(v) \cap S$. By extension,
a set $X$ is \emph{located} by $S$ if all vertices
in $X$ are located by $S$.
We note the following simple observation (see also \cite[Lemma 5]{chakraborty2024n2boundlocatingdominatingsetssubcubic}).

\begin{observation}
\label{obs:nbr-of-pendant-vertex-in-sol}
If $S$ is a locating-dominating set of a graph $G$,
then there exists a locating-dominating set $S'$ of $G$
such that $|S'| \le |S|$ and that contains all vertices that are adjacent a pendant vertices (i.e. vertices of degree~1) in $G$.
\end{observation}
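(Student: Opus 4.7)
The plan is to prove the statement by iteratively modifying $S$ via local swaps, one per vertex $u \notin S$ that is adjacent to some pendant. Pick any such $u$ and any pendant $v$ of $u$. Since $v$'s only neighbor is $u$, and $u \notin S$, we must have $v \in S$; otherwise $N(v) \cap S = \emptyset$, violating the domination requirement on $v$. I then define $S_{\text{new}} = (S \setminus \{v\}) \cup \{u\}$, which has $|S_{\text{new}}| = |S|$, and show that $S_{\text{new}}$ remains a locating-dominating set.

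For the verification, domination is immediate: the only vertex whose domination by $S$ relied on $v$ is $u$ itself, and $u \in S_{\text{new}}$; moreover, $v$ is now dominated by $u \in S_{\text{new}}$. For location, first note that any $w \in V(G) \setminus S_{\text{new}}$ with $w \neq u, v$ is not adjacent to the degree-one vertex $v$, hence $N(w) \cap S_{\text{new}} = (N(w) \cap S) \cup (N(w) \cap \{u\})$. A brief case analysis based on whether each of two such vertices $w_1, w_2$ is adjacent to $u$ shows that if $N(w_1) \cap S \neq N(w_2) \cap S$, then also $N(w_1) \cap S_{\text{new}} \neq N(w_2) \cap S_{\text{new}}$; this uses crucially that $u \notin S$. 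Second, one must check that $v$ itself is now located: $N(v) \cap S_{\text{new}} = \{u\}$, and if some other $w \in V(G) \setminus S_{\text{new}}$ had $N(w) \cap S_{\text{new}} = \{u\}$, then (since $w \neq u$ implies $w \not\sim v$) one would derive $N(w) \cap S = \emptyset$, contradicting that the original $S$ dominates $w$.

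Finally, the whole argument iterates: each swap only removes a pendant and adds its unique support $u$, never removing any support already in the set, so the number of supports contained in the current set strictly increases at each step. After at most one swap per support vertex of $G$, the process terminates at the desired $S'$ with $|S'| = |S|$ containing every vertex adjacent to a pendant. The main obstacle I foresee is the location verification: one must show that inserting $u$ does not collapse two previously separated neighborhoods, and the assumption $u \notin S$, which is precisely the condition triggering the swap, is exactly what makes this verification go through.
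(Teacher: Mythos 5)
Your proof is correct and follows essentially the same route as the paper's: whenever a support vertex is missing from the set, domination forces its pendant neighbour to be in the set, so one swaps the pendant out and the support in, verifies the exchange preserves domination and location, and iterates over all supports. Your write-up is somewhat more explicit than the paper's (you check separately that previously separated pairs stay separated and that the ejected pendant is located, and you give an explicit termination count), but the underlying argument is the same.
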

\begin{claimproof}
Let $u$ be a pendant vertex which is adjacent to a vertex $v$ of $G$. We now look for a locating-dominating set $S'$ of $G$ such that $|S'| \le |S|$ and contains
the vertex $v$.
As $S$ is a (locating) dominating set,  we have $\{u, v\} \cap S \neq 
\emptyset$. If $v \in S$, then take $S' = S$.
Therefore, let us assume that $u \in S$ and $v \not\in S$.
Define $S' = (S \cup \{v\}) \setminus \{u\}$.
It is easy to see that $S'$ is a dominating set.
If $S'$ is not a locating-dominating set, then there exists $w$, apart from 
$u$,
in the neadjacenthood of $v$ such that both $u$ and $w$ are adjacent to \emph{only} $v$ in $S'$.
As $u$ is a pendant vertex and $v$ its unique neighbor, $w$ is not adjacent to $u$.
Hence, $w$ was not adjacent any vertex in $S' \setminus \{v\} = S 
\setminus \{u\}$.
This, however, contradicts the fact that $S$ is a (locating) dominating set.
Hence, $S'$ is a locating-dominating set and $|S'| = |S|$. Thus, the result follows from repeating this argument for each vertex of $G$ adjacent to a pendant vertex.
\end{claimproof}


\subparagraph*{Parameterized complexity.}
\label{prelim:pc}
An instance of a parameterized problem $\Pi$ {consists} of an input $I$, 
which is an input of the non-parameterized version of the problem, and an 
integer $k$, which is called the \emph{parameter}.
Formally, $\Pi \subseteq \Sigma^* \times \mathbb{N}$.
A problem $\Pi$ is said to be \emph{fixed-parameter tractable}, or \FPT, if 
given an instance $(I,k)$ of $\Pi$, we can decide whether  $(I,k)$ is a 
\yes-instance 
of $\Pi$ in  time $f(k)\cdot |I|^{\calO(1)}$.
Here, {$f: \mathbb{N} \mapsto \mathbb{N}$} is some computable function 
{depending} only on $k$.
A parameterized problem $\Pi$ is said to admit a {\em kernelization} if given 
an instance $(I, k)$ of $\Pi$, there is an algorithm that runs in time 
polynomial in $|I| + k$ and constructs an instance $(I', k')$ of $\Pi$ such that 
{(i)} $(I, k) \in \Pi$ if and only if $(I', k')\in \Pi$, and {(ii)} $|I'| + k' \leq 
g(k)$ for some computable function $g: \mathbb{N} \mapsto \mathbb{N}$ 
depending only on $k$.
If $g(\cdot)$ is a polynomial function, then $\Pi$ is said to admit a 
{\em polynomial} kernelization.
For {a} detailed introduction to parameterized complexity and related 
terminologies, we refer the reader to the recent books by Cygan et 
al.~\cite{cygan2015parameterized} and Fomin et 
al.~\cite{fomin2019kernelization}.

\section{\LD\ Parameterized by Treewidth}
\label{sec:LD-tw}

In the first subsection, we present a dynamic programming algorithm 
that solves the problem in $2^{2^{\calO(\tw)}} \cdot n^{\calO(1)}$.
In the second subsection, we prove that this dependency on treewidth
is optimal, upto the multiplicative constant factors, under the \ETH.

\subsection{Upper Bound}
\label{sub-sec:lds-tw-algo}

In this subsection, we present a dynamic programming (DP) based algorithm for the 
\LD\ problem when parameterized by the treewidth of the input graph.
For completeness, we begin with the necessary definitions.

\begin{definition}[Tree-Decomposition]
\label{defn:tree-decomposition}
A \emph{tree-decomposition} of an undirected graph $G = (V, E)$ is a pair
$\calT = (T, \calX = {X_t} \mid {t \in V(T)})$, where $T$ is a tree and $\calX$ is a collection of subsets of $V(G)$, called \emph{bags}, such that:
\begin{enumerate}[nolistsep]
\item For every vertex $u \in V(G)$, there exists $t \in V(T)$ such that $u \in X_t$.
\item For every edge $uv \in E(G)$, there exists $t \in V(T)$ such that $u, v \in X_t$.
\item For every vertex $u \in V(G)$, the set $\{t \in V(T) \mid u \in X_t\}$ induces a connected subtree of $T$.
\end{enumerate}
Given a tree-decomposition $\calT$, its \emph{width} is defined as $\max_{t \in V(T)}( |X_t| - 1)$. 
The \emph{treewidth} of a graph $G$ is the minimum width over all possible tree-decompositions of $G$.
\end{definition}

To facilitate the description of our dynamic programming algorithm, we use 
the notion of a \emph{nice tree-decomposition}~\cite{niceTW}.

\begin{definition}[Nice Tree-Decomposition]
A rooted tree-decomposition $\calT = (T, {X_t}\mid {t \in V(T)})$ is said to be \emph{nice} 
if every node $t \in V(T)$ has at most two children, and is of one of the following types:
\begin{enumerate}[nolistsep]
\item \emph{Root node:} A node $r$ with $X_r = \emptyset$ and no parent.
\item \emph{Leaf node:} A node $t$ with $X_t = \emptyset$ and no children.
\item \emph{Introduce node:} A node $t$ with a unique child $t'$ such that $X_t = X_{t'} \cup \{u\}$ for $u \not\in X_{t'}$.
\item \emph{Forget node:} A node $t$ with a unique child $t'$ such that $X_{t} = X_{t'} \setminus \{u\}$ for $u \in X_{t'}$.
\item \emph{Join node:} A node $t$ with exactly two children $t_1$ and $t_2$ such that $X_t = X_{t_1} = X_{t_2}$.
\end{enumerate}
\end{definition}
For each node $t \in V(T)$, we consider the subtree $T_t$ of $T$ rooted at $t$.
Let $G_t$ denote the subgraph of $G$ induced by the vertices that appear 
in the bags of nodes in $T_t$.

Without loss of generality, we assume that a nice tree-decomposition of width $\calO(\tw)$ is provided. 
If not, one can be constructed in time $2^{\calO(\tw)} n$~\cite{niceTW,K21}.

We now state the main result of this subsection.

\begin{theorem}
\label{thm:LD-tw-algo}
\LD\ admits an algorithm with running time
$2^{2^{\calO(\tw)}} \cdot n$, where $\tw$ is the treewidth
and $n$ is the order of the input graph.
\end{theorem}

Consider a locating-dominating set \( S \subseteq V(G) \), and let \( S_t \subseteq V(G_t) \) denote the 
\emph{partial solution} induced by restricting \( S \) to the subgraph \( G_t \). Formally, we define
\(S_t := S \cap V(G_t).\)

We now informally describe the information needed for a DP-state for a node $t$ of a tree-decomposition, in order to convey to the reader the main ideas needed to understand the algorithms. The formal proof follows.

Observe that every vertex in \( V(G_t) \setminus X_t \) is uniquely located by its neighborhood 
within \( S_t \), since such vertices will not appear 
in any future bag of the decomposition.

Let us now examine the behavior of vertices in the current bag \( X_t \). Define:
\begin{itemize}[nolistsep]
  \item \( Y := S_t \cap X_t \), i.e., the subset of the current bag that is included in the partial solution.
  \item \( W := (N(S_t) \cap X_t) \setminus Y \), i.e., the set of vertices in \( X_t \setminus Y \) that are 
  dominated by vertices in \( S_t \) (but may not be located).
\item We define \(W_0\) as the collection of vertices in \(W\) such that \(N_{G_t}(w) \cap S_t \subseteq Y\).
Alternatively, every vertex in \(W \setminus W_0\) is adjacent to some vertex in \(S_t \setminus Y\).
\end{itemize}

Note that the vertices in $X_t\setminus(Y\cup W)$ are not dominated by $S_t$.

Since \( S \) is a locating-dominating set, it must hold that no two vertices outside of \( S \) 
have identical neighborhoods within \( S \). In particular, if there exists a vertex \( u \in V(G_t) \) 
such that \( N(u) {\cap S} \subseteq S_t \) and \( N(u) {\cap S} \neq \emptyset\), 
then we must ensure that no vertex \( v \in V(G) \setminus V(G_t) \) 
satisfies \( {N(v)\cap S = N(u)\cap S} \). Otherwise, \( u \) and \( v \) would be indistinguishable in \( G \), 
violating  the location constraint.
We remark that for the above case to happen, it must be that $N(u) \subseteq (S_t \cap X_t)$.
To facilitate the extension of the partial solution \( S_t \) to a full solution \( S \subseteq V(G) \), 
we track which subsets of \( Y \) are used to identify vertices. 
Specifically, we consider subsets \( A \subseteq Y \) such that there exists 
\(u \in V(G_t) \setminus S_t \text{ with } N_{G_t}(u) \cap S_t = A.\)
If \( u \in V(G_t) \setminus X_t \), then \( A \) is \emph{not usable for location} in future extensions, 
since no vertex added in later bags can change the neighborhood of $u$ in the solution.
Suppose that \( \mathcal{Y} \) denotes the collection of subsets of \( Y \) that are not usable for future location.

For the reason hinted above, we need to distinguish between the vertices in \(W\) that are only adjacent 
to a subset of \(Y\) and those that are are adjacent to some vertices in \(S_t \setminus Y\) also. 
{This is why we defined $W_0$ above.}

Finally, we define a set \( \mathbb{W} \) to keep track of pairs of vertices that 
currently have identical neighborhoods in \( S_t \) and must be distinguished in 
future extensions. 
Each element of \( \mathbb{W}\) is of one of the 
following forms:
\begin{itemize}[nolistsep]
  \item A pair \( (w_1, w_2) \) {in $X_t$} and more specifically in \(W_t\), 
  indicating that \( w_1 \) and \( w_2\) currently have identical 
  neighborhoods within \( S_t \), and some vertex in \( S \setminus S_t \) must be adjacent 
  to exactly one of them to distinguish them.

  \item A pair \( (w_1, +) \), indicating that \( w_1 \in W \) and some vertex 
  \( u \in V(G_t) \setminus X_t \) share the same neighborhood in \( S_t \), and 
  hence a vertex in \( S \setminus S_t \) must be adjacent to \( w_1 \) to ensure distinguishability.
\end{itemize}
Note that for any \( (w_1, w_2) \in \mathbb{W}\), vertex \(w_1\) is in  \(W_0\) (respectively,
	\(W \setminus W_0\)) if and only if \(w_2\) is in \(W_0\) (respectively,
	\(W \setminus W_0\)).

\smallskip
For every node \( t \in V(T) \), we define a DP-state using the notion of a \emph{valid tuple},
which is a combination of the sets described above.
\begin{definition}[Valid Tuple]
\label{def:valid-tuple}
Let \( t \in V(T) \) be a node in the tree-decomposition. 
A tuple \(\langle Y, W, W_0, \calY, \mathbb{W} \rangle\) is said to be a 
\emph{valid tuple} at $t$ if the following conditions hold:
\begin{itemize}[nolistsep]
  \item \( Y, W \subseteq X_t \) are disjoint sets such that \( N(Y) \cap X_t \subseteq W \),  \(W_0 \subseteq W\),
  \item \( \calY \) is a family of subsets of \( Y \),
  \item \( \mathbb{W} \) {is a collection of pairs which can be of two types:}
  pairs \( (w_1, w_2) \) with \( w_1, w_2 \in W \), and
  pairs of the form \( (w_1, +) \), with \( w_1 \in W \).
\end{itemize}
\end{definition}

{We now define how a valid tuple for node $t$ can correspond to a potential subset $S_t$ of a locating-dominating $S$ of $G$.}

\begin{definition}[Candidate Solution]
\label{def:candidate-sol}
Consider a valid tuple \( \tau = \langle Y, W, W_0, \calY, \mathbb{W} \rangle\) at node $t$.
We say that set \( S_t \subseteq V(G_t) \) is a \emph{candidate solution} at $t$ with respect to \(\tau\) 
if it satisfies the following properties:
\begin{enumerate}[nolistsep]
  \item \( S_t \) {locates (in $G_t$)} all vertices in \( V(G_t) \setminus X_t \), i.e., 
  for any distinct vertices \( u, v \in V(G_t) \setminus (S_t \cup X_t) \), the sets \( N_{G_t}(u) \cap S_t \) and 
  \( N_{G_t}(v) \cap S_t \) are both nonempty and distinct.

  \item The intersection of \( S_t \) with the current bag is exactly \( Y \), and the vertices in 
  \( X_t \setminus Y \) that are dominated by \( S_t \) are given by \( W \), i.e., 
  \(Y = S_t \cap X_t,\)  and \( W =  (N(S_t) \cap X_t) \setminus Y \).
  Moreover, every vertex \(w \in W \setminus W_0\) is adjacent to some vertex in \(S_t \setminus Y\), 
  {and no vertex in $W_0$ is adjacent to a vertex in $S_t\setminus Y$}.
  
  \item A subset $A$ belongs to \( \calY \) if and only if there exists a vertex \( u \in V(G_t) \setminus (S_t \cup X_t) \) 
  such that \( N_{G_t}(u) \cap S = A \). {(By the first item, if $u$ exists it is unique.)} 


  \item 
  \begin{itemize}
	\item  A pair \( (w_1, w_2) \in \mathbb{W} \) if and only if \( w_1, w_2 \in W \) and
  \(N_{G_t}(w_1) \cap S_t = N_{G_t}(w_2) \cap S_t\).
  
	\item A pair \( (w_1, +) \in \mathbb{W} \) if and only if {$w_1\in W$ and} there exists a vertex 
 \( u \in V(G_t) \setminus (S_t \cup X_t) \) such that
  \(N_{G_t}(w_1) \cap S_t = N_{G_t}(u) \cap S_t\). {(By the first item, if $u$ exists it is unique.)}
  \end{itemize}
\end{enumerate}
\end{definition}

For a valid tuple \( \tau = \langle Y, W, W_0, \calY, \mathbb{W} \rangle\) at node $t$, 
we define \(\dptw[t, \tau]\) as the minimum cardinality of a candidate solution at $t$.
If no candidate solution exists, we define
\( \dptw[t, \tau] := \infty \).

\smallskip
We now proceed to describe a recursive algorithm used to update 
entries in the dynamic programming table 
for each node type in a nice tree-decomposition.

\subparagraph*{Leaf node.}
If $t$ is a leaf node, then $X_t$ is an empty set
and the following claim is trivial.
\begin{lemma}
\label{lemma:leaf-node}
If $t$ is a leaf node, then
$\dptw[t, \tau] = 0$ for
$\tau = \langle\emptyset, \emptyset, \emptyset, \emptyset, \emptyset \rangle$
and $\dptw[t, \tau] = \infty$ for any 
other $\tau$.
\end{lemma}

\subparagraph*{Introduce Node.}
Let \( t \in V(T) \) be an {introduce node} with unique 
child \( t' \), 
such that \(X_t = X_{t'} \cup \{x\}\),
for some vertex \( x \in V(G_t) \setminus V(G_{t'}) \). 
Let \(\tau = \langle Y, W, W_0, \calY, \mathbb{W} \rangle\)
be a valid tuple at node \( t \). 
We present the update rule for \( \dptw[t, \tau] \) 
based on where $x$ is in the set $X_t$.
We need the following two definitions for the lemma.

\begin{definition}[Introduce-\( Y \)-compatible]
\label{def:introduce-Y-compatible}
{Consider the case when $x$ is in $Y$.}
Consider a tuple \( \tau' = \langle Y', W', W'_0, \calY', \mathbb{W}' \rangle \) which is 
valid at $t'$.
We say $\tau'$ is \emph{introduce-\( Y \)-compatible} 
with $\tau$ if it satisfies the following conditions:
\begin{enumerate}[nolistsep]
\item \( Y' = Y \setminus \{x\} \), \(W =  W' \cup (N_{G_t}(x) \setminus {N_{G_t}[Y]}) \), \(W_0 = W'_0 \cup (N_{G_t}(x) \setminus {N_{G_t}[Y]})\),
\item \( \calY' = \calY \), and 
\item \( \mathbb{W} \subseteq \mathbb{W}' \) such that 
\begin{itemize}[nolistsep]
\item for each element of the type
\( (w_1, w_2) \in \mathbb{W} \setminus \mathbb{W}' \), 
exactly one of \( w_1 \) and \( w_2 \) is adjacent to \( x \); and 
\item for each element of the type \( (w_1, +) \in \mathbb{W} \setminus \mathbb{W}' \), 
\( w_1\) is adjacent to \( x \) in \( G_t \).
\end{itemize}
\end{enumerate}
Let \( I^Y_c(\tau) \) be the collection of all valid tuples 
at $t'$ that are {introduce-\( Y \)-compatible} 
with \( \tau \).
\end{definition}

{Consider the case when $x$ is in $W$.}
Define \( A := N_{G_t}(x) \cap Y \) as the set of 
neighbors of \( x \) that are included in the partial solution. 
If \( A = \emptyset \), then \( x \) is not adjacent 
to any vertex in the partial solution, 
violating the definition of set $W$. 
In this case, set \(\dptw[t, \tau] := \infty\).

\begin{definition}[Introduce-\( W \)-compatible]
\label{def:introduce-W-compatible}
Consider the case when $x$ is in $W$ and suppose \( A := N_{G_t}(x) \cap Y\neq \emptyset \). 
Consider a valid tuple  \( \tau' = \langle Y', W', W'_0, \calY', \mathbb{W}' \rangle \) at \( t' \). 
We say that \( \tau' \) is \emph{introduce-\( W \)-compatible} with \( \tau \) if:
\begin{enumerate}[nolistsep]
\item \( Y' = Y \),  {$W=W'\cup\{x\}$, $W_0=W'_0\cup\{x\}$,} 
\item \( \calY' = \calY \), and 
\item \(\mathbb{W}\) is obtained from \(\mathbb{W}'\) by the following procedure.
  \begin{enumerate}[nolistsep]
  \item If \( A \in \calY'\), i.e., {$A$ models some $u \in V(G_{t'}) \setminus X_{t'}$
  such that $N(u) \cap S_{t'} = A$ in a potential candidate solution $S_{t'}$ at $t'$}, then add \((x, +)\) to \(\mathbb{W}'\). 
 \item For every vertex $w_1$ in $W'_0$ such that 
{ \( A = N_{G_{t'}}(w_1) \cap Y \),}
 add \((x, w_1)\) to \(\mathbb{W}'\).
  \end{enumerate}
\end{enumerate}
Let \( I^W_c(\tau) \) denote the set of all introduce-\( W \)-compatible tuples at \( t' \).
\end{definition}

\begin{lemma}
\label{lemma:introduce-node}
Let \( t \) be an introduce node with child \( t' \), where \( X_t = X_{t'} \cup \{x\} \), and let 
\( \tau = \langle Y, W, W_0, \calY, \mathbb{W} \rangle \) be a valid tuple at \( t \). Then,
\[
\dptw[t, \tau] =
\begin{cases}
1 + \min\limits_{\tau' \in I^Y_c(\tau)} \dptw[t', \tau'] & \text{if } x \in Y, \\
\min\limits_{\tau' \in I^W_c(\tau)} \dptw[t', \tau'] & \text{if } x \in W, \text{ and } N_{G_t}(x) \cap Y \neq \emptyset \\
\dptw[t', \tau] & \text{if } x \notin (Y \cup W) \text{ and } N_{G_t}[x] \cap Y {=} \emptyset.
\end{cases}
\]
Otherwise \(\dptw[t, \tau] = \infty\).
\end{lemma}
\begin{proof}
As discussed earlier, we consider three mutually disjoint and exhaustive cases based 
on the membership of $x$ in the sets specified by $\tau$: whether $x \in Y$, $x \in W$, or $x \in X_t \setminus (Y \cup W)$.

\smallskip

\textbf{Case 1: $x$ is in  $Y$.} 
Assume that the minimum value of \(\min\limits_{\tau' \in I^Y_c(\tau)} \dptw[t', \tau']\) 
is finite and attained at $\tau'$ and is witnessed by the set $S_{t'}$. 
Define $S_t := S_{t'} \cup \{x\}$. 
We show that $S_t$ is a valid candidate solution at $t$ by verifying that it satisfies 
the conditions in Definition~\ref{def:candidate-sol}.
We refer the readers to Definition~\ref{def:introduce-Y-compatible} 
for the relationship between the sets specified by $\tau$ and $\tau'$.

\begin{enumerate}[nolistsep]
    \item As \(S_{t'}\) {locates} all vertices in \(V(G_{t'}) \setminus X_{t'}\),
     set \(S_t \), which is a superset of \(S_{t'}\), is a locating-dominating set for all vertices in 
     \(V(G_t) \setminus X_t\), which is same as \(V(G_{t'}) \setminus X_{t'}\).
    
    \item As \(S_t = S_{t'} \cup \{x\}\) and \(Y = Y' \cup \{x\}\),
    the intersection of \(S_t\) with $X_t$ is exactly \(Y\). 
    Moreover, the vertices in \(X_t \setminus Y\) that are dominated by \(S_t\) form the set \(W\), as
    \(W =  W' \cup (N_{G_t}(x) \setminus N_{G_t}[Y]) \).
    Also, vertices in \(W'\) whose neigbhors in the partial soslution $S_{t'}$ are all in \(Y'\) remain the same,
    apart from those that are adjacent to only \(x\).
    Hence, \(W_0 = W'_0 \cup (N_{G_t}(x) \setminus {N_{G_t}[Y]}) \).
    
    \item Since $x$ is introduced in this bag, $x$ is not adjacent to any vertex in \(V(G_t) \setminus X_t\). 
    Thus, the collection of unusable subsets $\mathcal{Y}$ at $t$ is identical to the collection 
    $\mathcal{Y}'$ at $t'$.
    
    \item Since no new vertex is added to $W$ other than potentially those in \(N_{G_t}(x)\), 
    we have \({\mathbb{W} \subseteq \mathbb{W}'}\).
    Furthermore, for any pair \((w_1, w_2)\) or \((w_1, +)\) which is in \(\mathbb{W}'\) but not in \(\mathbb{W}\), 
     the vertex $x$ must locate one of these vertices:
    \begin{itemize}[nolistsep]
        \item In the case \((w_1, w_2)\), $x$ is adjacent to exactly one of $w_1$ or $w_2$.
        \item In the case \((w_1, +)\), $x$ is adjacent to $w_1$.
    \end{itemize}
\end{enumerate}
This confirms that \(S_t = S_{t'} \cup \{x\}\) is a valid candidate solution for $\dptw[t, \tau]$.
As \(x\) is introduced at this node, \(x \not\in S_{t'}\) and hence \(|S_t| = |S_{t'}| + 1\).
Thus we have:
\(\dptw[t, \tau] \le {|S_t| = 1+|S_{t'}| = } 1 + \dptw[t', \tau']\).

\smallskip

Conversely, suppose that $S_t$ is an optimal candidate solution at $t$ corresponding to $\tau$. 
We show that \(S_{t'} := S_t \setminus \{x\}\) is a valid candidate solution at $t'$ for $\tau'$.

\begin{enumerate}[nolistsep]
    \item Since $x$ is adjacent only to vertices in $X_t$ within $G_t$, and $S_t$ is a locating-dominating set for \(V(G_t) \setminus X_t\), it follows that \(S_{t'}\) is a locating-dominating set for \(V(G_{t'}) \setminus X_{t'}\).
    
    \item By construction of \(S_{t'}\), we have \(S_{t'} \cap X_{t'} = Y'\). Additionally, 
    since \(W =  W' \cup (N_{G_t}(x) \setminus {N_{G_t}[Y]}) \), 
    $W'$ captures precisely those vertices in \(X_{t'}\) that are dominated by some vertex in \(S_{t'}\).
    Using the same argument, \(W'_0\) precisely contains the vertices in \(W'\) whose neighborhood
    in the partial solution is  in \(Y'\).
    
    \item A subset \(A \in \mathcal{Y}\) if and only if there exists a unique vertex 
    \(u \in V(G_t) \setminus (S_t \cup X_t)\) such that \(N_{G_t}(u) \cap S_t = A\).
     Since $x$ is not adjacent to any vertex in \(V(G_{t'}) \setminus X_{t'}\), the collection of unusable 
     sets remains unchanged, i.e., \(\mathcal{Y}' = \mathcal{Y}\). 
     
    \item Note that $x$ may locate certain vertex pairs in $W'$ that were not located by $S_{t'}$. Hence, we have \(\mathbb{W} \subseteq \mathbb{W}'\). More precisely:
    \begin{itemize}[nolistsep]
        \item For \((w_1, w_2) \in \mathbb{W}' \setminus \mathbb{W}\), this occurs only if exactly one of \(w_1\) or \(w_2\) is adjacent to $x$.
        \item For \((w_1, +) \in \mathbb{W}' \setminus \mathbb{W}\), this occurs only if \(w_1\) is adjacent to $x$.
    \end{itemize}
\end{enumerate}
Thus, \(S_{t'}\) is a valid candidate solution at $t'$.
As \(S_t\) is a valid candidate at $t$, and $x$ is in $Y$,
this implies that \(x\) is in \(S_t\), and
hence, \(|S_{t'}| = |S_t| - 1\). {Thus, $\dptw[t', \tau']\leq |S_{t'}| = |S_t|-1 = \dptw[t, \tau]-1$.}
We conclude:
\(\dptw[t, \tau] \ge 1 + \dptw[t', \tau']\).

The above arguments establish the correctness of the recursive computation at introduce nodes 
when the introduced vertex $x$ is included in the candidate solution.

\smallskip

\textbf{Case 2: \(x\) is in \(W\).} 
Assume that the minimum value of \(\min\limits_{\tau' \in I^W_c(\tau)} \dptw[t', \tau']\) 
is finite and attained at \(\tau'\) and witnessed by a set \(S_{t'}\). 
We argue that \(S_t := S_{t'}\) is also a valid candidate solution 
at \(t\) by verifying that it satisfies the conditions in Definition~\ref{def:candidate-sol}.
We refer the reader to Definition~\ref{def:introduce-W-compatible} 
for the correspondence between the sets specified by \(\tau\) and \(\tau'\).
Moreover, as noted in the discussion preceding Definition~\ref{def:introduce-W-compatible},
it suffices to consider the case when \(A := N_{G_t}(x) \cap Y \neq \emptyset\).

\begin{enumerate}[nolistsep]
  \item Since \(S_{t'}\) {locates} all vertices in \( V(G_{t'}) \setminus X_{t'} \), 
  the same holds for \(S_t\) with respect to \( V(G_t) \setminus X_t \).

  \item As no new vertex is included in the candidate solution, the intersection 
  \( S_t \cap X_t \) equals \( Y = Y' \). 
  Furthermore, since the newly introduced vertex belongs to \(W\),
  we have \( W' = W \setminus \{x\} \).
  Note that \(x\) can only be adjacent to vertices in $S_t$ 
  that are in $X_t$, and hence {$W_0=W'_0\cup\{x\}$.} 

  \item Again, because the candidate solution remains unchanged, the collection of unusable subsets
  \(\calY\) at \(t\) is identical to \(\calY'\) at \(t'\).

  \item As no new vertex is added to the candidate solution, no pairs from \(\mathbb{W}'\) are located, and thus \(\mathbb{W}' \subseteq \mathbb{W}\). 
  Any pair in \(\mathbb{W} \setminus \mathbb{W}'\) necessarily involves \(x\).
  The other vertex in the pair depends on the structure of the set \(A\). We consider two subcases:
  \begin{enumerate}[nolistsep]
    \item If \((x, +) \in \mathbb{W} \setminus \mathbb{W}'\), then \(A \in \calY = \calY'\); 
    that is, there exists a vertex \(u \in V(G_{t'}) \setminus X_{t'}\) such that 
    \(N(u) \cap S_{t'} = A\).

    \item If \((x, w_1) \in \mathbb{W} \setminus \mathbb{W}'\), then \(w_1 \in W'\) and 
    \(A = N_{G_{t'}}(w_1) \cap S_{t'}\).
  \end{enumerate}
\end{enumerate}

This verifies that \(S_t = S_{t'}\) is a valid candidate solution for \(\dptw[t, \tau]\),
and hence we obtain: \(\dptw[t, \tau] \le \dptw[t', \tau']\).

\smallskip

Conversely, suppose that \(S_t\) is an optimal candidate solution at node \(t\) corresponding to tuple \(\tau\). 
We show that \(S_{t'} := S_t\) is a valid candidate solution at \(t'\) for tuple \(\tau'\).

\begin{enumerate}[nolistsep]
  \item Since \(x\) is introduced in this bag, 
  \( S_{t'} \) {locates} all vertices in \( V(G_{t'}) \setminus X_{t'} \).

  \item The partial solution remains unchanged, so \(Y = Y'\).
  By definition, \( W =  W' \cup \{x\} \),
  and hence \(W'\) consists of all vertices in \(V(G_t)\) that are dominated
  by \(S_{t'}\).
  Also, as \(x\) can only be adjacent to vertices in $S_t$
  that are in the bag, we have {$W_0=W'_0\cup\{x\}$.} 
  
  \item Again, since the partial solution remains unchanged, the set of unusable 
  subsets of \(Y\) is preserved.  
  
  \item As \(W = W' \cup \{x\}\), we again have \(\mathbb{W}' \subseteq \mathbb{W}\).
  Every pair in  \(\mathbb{W} \setminus \mathbb{W}'\) must involve \(x\),
  and the other component of the pair depends on how the neighborhood of \(x\) intersects \(Y\),
  as detailed in Definition~\ref{def:introduce-W-compatible}.
\end{enumerate}

This confirms that \(S_{t'} = S_t\) is a valid candidate solution at \(t'\), and thus 
\(\dptw[t, \tau] \ge \dptw[t', \tau']\).

Combining both directions, we establish the correctness of the recursive update at 
introduce nodes when the newly introduced vertex \(x\) is not included in the partial solution
but is dominated by some vertex already in the partial solution.

\smallskip
\textbf{Case 3: \(x\) is  in \(X_t \setminus (Y \cup W)\):} 
If \(N_{G_t}[x] \cap Y \neq \emptyset\), then \(x\) is either in the candidate solution 
or adjacent to a vertex in the candidate solution, contradicting the assumption that \(x\) lies 
in \(X_t \setminus (Y \cup W)\). In this case, we set \(\dptw[t, \tau] := \infty\).
Otherwise, the new vertex \(x\) imposes no additional constraint on the partial solution.  
Therefore, \(\tau\) is also a valid tuple at \(t'\), and the DP value remains unchanged:
\(\dptw[t, \tau] = \dptw[t', \tau]\).

This completes the proof of the lemma.
\end{proof}

\subparagraph*{Forget Node.}
Let \( t \in V(T) \) be a {forget node} with unique child \( t' \), 
such that \(X_t = X_{t'} \setminus \{x\}\),
for some vertex \( x \in X_{t'} \).
Let \(\tau = \langle Y, W, W_0, \calY, \mathbb{W} \rangle\)
be a valid tuple at node \( t \). 
The following claim presents the update rule 
for \( \dptw[t, \tau] \) based on which set $x$ belongs to in $\tau'$. 
We follow the same template as before, we present definitions
of sets $F^Y_c$ and $F^W_c$ which are \emph{compatible} with $\tau$
before stating the lemma.

\begin{definition}[Forget-\( Y' \)-compatible]
\label{def:forget-Y-compatible}
Consider tuple  
\( \tau' = \langle Y', W', W'_0, \calY', \mathbb{W}' \rangle \) which is valid at \( t' \). 
We say $\tau'$ is \emph{forget-\( Y' \)-compatible} with
\( \tau \) if it satisfies the following conditions:
\begin{enumerate}[nolistsep]
\item \( Y = Y' \setminus \{x\} \), \( W = W' \), \( W_0 = W'_0 \setminus \{w \in W' \mid x \in N_{G_{t'}}(w) \}\),
\item \( \calY = \calY' \setminus \{A \in \calY \mid x \in A\} \), and 
\item \( \mathbb{W}' = \mathbb{W} \).
\end{enumerate}
Let \( F^Y_c(\tau) \) denote the collection of all valid
tuples at $t'$ that are forget-\(Y'\)-compatible with $\tau$.
\end{definition}

Consider the case when $x$ is in $W'$. 

\begin{definition}[Forget-\( W' \)-compatible]
\label{def:forget-W-compatible}
Consider tuple  \( \tau' = \langle Y', W', W'_0, \calY', \mathbb{W}' \rangle \) at \( t' \)
such that 
$(x, +) \not\in \mathbb{W}'$. 
We say that \( \tau' \) is \emph{forget-\( W \)-compatible} with \( \tau \) if:
\begin{enumerate}[nolistsep]
\item \( Y = Y' \),  \( W = W' \setminus \{x\} \), and \( W_0 = W'_0 \setminus \{x\} \),
\item If 
\(x \in W'_0\),  then \(\calY' = \calY \cup \{A\}\) where \(A := N_{G_{t'}}(x) \cap Y'\); otherwise \( \calY' = \calY \), and
\item $\mathbb{W}$ is obtained from $\mathbb{W}'$ 
by replacing every tuple of the form $(w_1, x)$, for some \(w_1\) in \(W\),
by \((w_1, +)\).
\end{enumerate}
Let \( F^W_c(\tau) \) denote the collection of all valid
tuples at $t'$ that are forget-\(W\)-compatible with $\tau$.
\end{definition}

\begin{lemma}
\label{lemma:forget-node}
Let \( t \) be a forget node with child \( t' \), where \( X_t = X_{t'} \setminus \{x\} \), and 
let \( \tau = \langle Y, W, W_0, \calY, \mathbb{W} \rangle \) be a valid tuple at \( t \). Then,
\[
\dptw[t, \tau] =
\begin{cases}
\min\limits_{\tau' \in F^Y_c(\tau)} \dptw[t', \tau'] & \text{if } x \in Y', \\
\min\limits_{\tau' \in F^W_c(\tau)} \dptw[t', \tau'] & \text{if } x \in W', \text{ and } N_{G_{t'}}(x) \cap Y' \neq \emptyset; 
\text{ and }
(x, +) \not\in \mathbb{W}' \\
\infty & \text{if } x \notin Y' \cup W' \text{ and } N_{G_{t'}}[x] \cap Y' \neq \emptyset.
\end{cases}
\]
Otherwise \(\dptw[t, \tau] = \infty\).
\end{lemma}
\begin{proof}
We consider the three mutually disjoint and 
exhaustive cases depending on whether 
$x$ is in $Y'$, $W'$, or $X_{t'} \setminus (Y' \cup W')$ 
in the sets specified by $\tau'$.

\smallskip

\textbf{Case 1: $x$ is in  $Y'$.} 
Suppose that the minimum value of \(\min\limits_{\tau' \in I^Y_c(\tau)} \dptw[t', \tau']\) 
is finite and attained at \(\tau'\), and let \(S_{t'}\) be the corresponding witnessing set. 
Define \(S_t := S_{t'}\). 
We now verify that \(S_t\) constitutes a valid candidate solution at node \(t\) by checking that it satisfies 
the conditions in Definition~\ref{def:candidate-sol}.
The relationship between the sets corresponding to \(\tau\) and \(\tau'\) is specified by 
Definition~\ref{def:forget-Y-compatible}.

\begin{enumerate}[nolistsep]
    \item Since \(S_{t'}\) locates all vertices in \(V(G_{t'}) \setminus X_{t'}\), and
    \((V(G_{t'}) \setminus X_{t'}) \cup \{x\} = V(G_{t}) \setminus X_{t}\), 
    it follows that \(S_t\) locates all vertices in \(V(G_{t}) \setminus X_{t}\). 
   
    \item As \(X_t = X_{t'} \setminus \{x\}\) and \(Y = Y' \setminus \{x\}\),
    the intersection of \(S_t\) with \(X_t\) is precisely \(Y\). 
    Furthermore, since the partial solution remains unchanged, the set of vertices in \(X_t \setminus Y\) that are dominated by \(S_t\) is denoted by \(W\).
    Every vertex in \(W'\) that is adjacent to \(x\) is now adjacent to some vertex in the
    partial solution which is not in the bag, and hence 
    \( W_0 = W'_0 \setminus \{w \in W' \mid x \in N_{G_{t'}}(w) \}\).
    
    \item Since \(x\) is the forgotten vertex, the collection of unusable subsets \(\mathcal{Y}\) at 
    node \(t\) is derived from \(\mathcal{Y}'\) by removing all subsets that contain \(x\).
    
    \item As the partial solution remains unchanged, the set of  vertex pairs that are not located also remains identical.
\end{enumerate}

Consequently, \(S_t = S_{t'}\) is a valid candidate solution for \(\dptw[t, \tau]\), and we obtain
\(\dptw[t, \tau] \le \dptw[t', \tau'].
\)

\smallskip

Conversely, suppose that \(S_t\) is an optimal candidate solution at node \(t\) corresponding to \(\tau\), and that \(x \in S_t\). 
We demonstrate that the set \(S_{t'} := S_t\) constitutes a valid candidate solution at node \(t'\) for \(\tau'\).

\begin{enumerate}[nolistsep]
    \item Since \(S_t\) locates all vertices in \(V(G_t) \setminus X_t\), and
    \((V(G_{t'}) \setminus X_{t'}) \cup \{x\} = V(G_t) \setminus X_t\),
    it follows that \(S_{t'}\) locates all vertices in \(V(G_{t'}) \setminus X_{t'}\).
    
    \item As \(X_t = X_{t'} \setminus \{x\}\) and \(Y = Y' \setminus \{x\}\),
    the intersection of \(S_{t'}\) with \(X_{t'}\) is precisely \(Y'\). 
    Additionally, as the partial solution remains unchanged, the set of vertices in \(X_{t'} \setminus Y'\) dominated by \(S_{t'}\) is denoted by \(W'\).
    	Any vertex which is in {\(W_0'\)} but not in {\(W_0\)} must be adjacent to \(x\)
    	and hence \( W_0 = W'_0 \setminus \{w \in W' \mid x \in N_{G_{t'}}(w) \}\).
         
    \item Again, since the partial solutions are identical, the collection of unusable subsets at \(t'\) differs from that at \(t\) only by the exclusion of those subsets that include \(x\).
        
    \item Likewise, the collection of vertex pairs that must be located via the extension of the partial solution remains the same.
\end{enumerate}
Hence, \(S_{t'}\) is a valid candidate solution at node \(t'\), which implies
\(
\dptw[t, \tau] \ge \dptw[t', \tau'].
\)

Combining both directions of the argument, we conclude that the recursive computation at 
forget nodes correctly preserves the minimum when the forgotten vertex \(x\) is included in the partial solution.

\smallskip

\textbf{Case 2: $x$ is in  $W'$.} 
Suppose that the minimum value of 
\(\min\limits_{\tau' \in I^Y_c(\tau)} \dptw[t', \tau']\) 
is finite and attained at some \(\tau'\), and let \(S_{t'}\) be the corresponding witnessing set. 
Define \(S_t := S_{t'}\). 
We now verify that \(S_t\) constitutes a valid candidate solution at node \(t\) by checking that 
it satisfies the conditions specified in Definition~\ref{def:candidate-sol}. 
The relationship between the sets defined by \(\tau\) and \(\tau'\) is specified by Definition~\ref{def:forget-W-compatible}. 

Note that if $(x, +) \in \mathbb{W}'$, there are two vertices in $G_t$, one of them being $x$, that have identical neighborhoods in $S_{t'}=S_t$. 
As no new vertex can be adjacent to either of these vertices, these two will have same neighbhorhood in any extension of $S_t$ and thus, \(\dptw[t, \tau] := \infty\). Hence, we may assume next that $(x, +) \notin \mathbb{W}'$.

\begin{enumerate}[nolistsep]
    \item 
    As \(V(G_{t'}) \setminus X_{t'}\) and \(V(G_t) \setminus X_t\) differ only by the vertex \(x\), 
    to prove that \(S_t\) locates \(V(G_t) \setminus X_t\), it is suffices to prove that it locates \(x\).
    This follows from the fact that \((x, +) \not\in \mathbb{W}'\), i.e., no vertex \(u \in V(G_{t'})\) satisfies \(N_{G_t}(u) \cap S_t = N_{G_t}(x) \cap S_t\).

    \item Since the partial solution remains unchanged, we have \(S_t \cap X_t = Y = Y'\). Moreover, because the forgotten vertex belongs to \(W\), we have \(W = W' \setminus \{x\}\) and \( W_0 = W'_0 \setminus \{x\} \).

    \item If 
    \(x \in W_0\), then \(x\) is adjacent only to vertices in \(A = N_{G_t}(x) \cap S_t\). 
    Consequently, \(A\) becomes unusable for locating purposes, and we set \(\calY = \calY' \cup \{A\}\). Otherwise, we retain \(\calY = \calY'\).

    \item Any pair present in exactly one of \(\mathbb{W}\) and \(\mathbb{W}'\) necessarily involves the vertex \(x\). Since \((x, +) \not\in \mathbb{W}'\), and for every pair \((w_1, x) \in \mathbb{W}'\), the vertex \(x\) is forgotten at node \(t\), such a pair must be replaced by \((w_1, +)\) in \(\mathbb{W}\), according to the convention.
\end{enumerate}

Thus, \(S_t = S_{t'}\) constitutes a valid candidate solution for \(\dptw[t, \tau]\), and we have
\(\dptw[t, \tau] \le \dptw[t', \tau']\).

\smallskip

Conversely, suppose that \(S_t\) is an optimal candidate solution at node \(t\) for tuple \(\tau\)
such that \(x \not\in S_t\). 
We now demonstrate that \(S_{t'} := S_t\) is a valid candidate solution at node \(t'\) for the 
corresponding tuple \(\tau'\).

\begin{enumerate}[nolistsep]
    \item Since \(x\) is forgotten in this bag, we have \(V(G_{t'}) \setminus X_{t'} \subseteq V(G_t) \setminus X_t\), and therefore \(S_{t'}\) still locates all vertices in \(V(G_{t'}) \setminus X_{t'}\).

    \item The partial solution remains unchanged, implying \(Y = Y'\). By definition, \(W = W' \setminus \{x\}\), and thus \(W'\) consists of all vertices in \(V(G_t)\) that are dominated by \(S_{t'}\).
    Using similar arguments, \(W_0 = W'_0 \setminus \{x\}\).

    \item Define \(A := S_t \cap N_{G_{t}}(x)\).
    If \(A \subseteq Y\), then \(A\) is marked unusable and \(A \in \calY\). However, at node \(t'\), where \(x\) is included in the bag, the set \(A\) becomes usable, and hence \(\calY' = \calY \setminus \{A\}\).
 Otherwise, the set of unusable sets remains unchanged.

    \item Since \(W' = W \cup \{x\}\), any discrepancy between \(\mathbb{W}\) and \(\mathbb{W}'\) involves the vertex \(x\).
     Suppose that there exists \((w_1, +) \in \mathbb{W}\) such that \(N_{G_t}(x) \cap S_t = A\). 
     Then, at node \(t'\), the vertices \(x\) and \(w_1\) share identical neighborhoods in \(S_t\). 
     As \(S_t\) locates \(V(G_t) \setminus X_t\), \(x\) is the unique vertex whose 
     neighborhood within \(S_t\) is \(A\). 
     Consequently, \(\mathbb{W}'\) is obtained from \(\mathbb{W}\) by replacing each such pair \((w_1, +)\) with \((w_1, x)\). 
\end{enumerate}

Hence, \(S_{t'} = S_t\) is a valid candidate solution at node \(t'\), and we have
\(\dptw[t, \tau] \ge \dptw[t', \tau']\).

\smallskip
By combining both directions, we establish the correctness of the recursive update at forget nodes in the case 
where the forgotten vertex \(x\) is not part of the candidate solution but is dominated 
by some vertex already included in the candidate solution.

\smallskip
\textbf{Case 3: \(x\) is  in \(X_{t'} \setminus (Y' \cup W')\).} 
If \(N_{G_{t'}}[x] \cap Y' \neq \emptyset\), then either \( x \) is in the partial solution or 
adjacent to some vertex in it, violating the definition of set $Y'$ or $W'$. 
In this case, set \(\dptw[t, \tau] := \infty\).
Otherwise, the forgotten vertex \( x \) 
was not adjacent to any vertex of the partial solution, and any extention of it will remain invalid, as $x$ will remain undominated.
Hence, also in this case \(\dptw[t, \tau] := \infty\).

This concludes the proof of the lemma.
\end{proof}

\subparagraph*{Join Node.} 
Let \( t \) be a join node with children 
\( t^1 \) and \( t^2 \), where \( X_t = X_{t^1} = X_{t^2} \), 
and let \( \tau = \langle Y, W, W_0, \calY, \mathbb{W} \rangle \) 
be a valid tuple at \( t \).

{A crucial property of tree-decompositions is that the intersection of any two bags corresponding to adjacent nodes in the decomposition tree forms a separator in $G$ (see for example the book~\cite[Lemma 7.1]{cygan2015parameterized}). Hence, if $V(G_{t^1})\setminus X_t$ and $V(G_{t^2})\setminus X_t$ are both nonempty, then $X_t$ forms a separator between the two corresponding subgraphs. We will use this implicitly in the proofs below.}

We start with some useful definitions.

\begin{definition}[Pairwise Join-Compatible]
\label{def:pairwise-join-compatible}
Consider two tuples 
\( \tau^1 = \langle Y^1, W^1, W^1_0, \calY^1, \mathbb{W}^1 \rangle \) and 
\( \tau^2 = \langle Y^2, W^2, W^2_0, \calY^2, \mathbb{W}^2 \rangle \),
valid at nodes \( t^1 \) and \( t^2 \), respectively.
We say that \(\tau^1\) and \(\tau^2\) are 
\emph{pairwise join-compatible}, denoted by \(\langle \tau^1, \tau^2 \rangle\),
if the following conditions hold:
\begin{enumerate}[nolistsep]
    \item \(Y^1 = Y^2\),
    \item \(\calY^1 \cap \calY^2 = \emptyset\), and
    \item 
      for any \(w \in W^1_0 \cap W^2_0\), the pair \((w, +)\) appears in at most one of the sets 
        \(\mathbb{W}^1\) and \(\mathbb{W}^2\).
\end{enumerate}
\end{definition}

Note that even though \(Y^1 = Y^2\), it may happen that 
\(W^1 \setminus N(Y^1) \neq W^2 \setminus N(Y^2)\),
and thus \(W^1 \neq W^2\), because \(W^1\) and \(W^2\) respectively 
denote the vertices in \(X_{t^1}\) and \(X_{t^2}\) that are dominated by vertices 
in the partial solutions of \(G_{t^1}\) and \(G_{t^2}\). 
Additionally, there may exist a vertex \(w \in W^1_0\) such that \(w \not\in W^2_0\) {(or vice-versa)},
as \(w\) could be adjacent to some vertex in \(V(G_{t^2}) \setminus X_{t^2}\) that is 
part of the partial solution.

\begin{definition}[Join-compatible]
\label{def:join-compatible}
We say that \(\tau\) is \emph{join-compatible} with the pair \(\langle \tau^1, \tau^2 \rangle\) 
of pairwise join-compatible tuples {valid at $t^1$ and $t^2$, respectively,} if the 
following conditions are satisfied:
\begin{enumerate}[nolistsep]
    \item \(Y = Y^1 = Y^2\), \(W = W^1 \cup W^2\), and \(W_0 = W^1_0 \cap W^2_0\),
    \item \(\calY = \calY^1 \cup \calY^2\), and
    \item
    \begin{enumerate}
        \item A pair \((w, w')\) is in \(\mathbb{W}\) if and only one of the following conditions is 
        satisfied.  
      		\begin{itemize}
      		\item If \(w \in W^1 \cap W^2\), then \((w, w') \in \mathbb{W}^1 \cap \mathbb{W}^2\).
      		\item If \(w \in W^1 \setminus W^2\), then \((w, w')\) is in \(\mathbb{W}^1\) and not in \(\mathbb{W}^2\).
		
		(This refers to the case when \(w'\) is adjacent to some vertex in the partial solution in \(G_{t^1}\)
		but not adjacent to any vertex of the partial solution in \(G_{t^2}\).)
		\item If \(w \in W^2 \setminus W^1\), then \((w, w')\) is in \(\mathbb{W}^2\) and not in \(\mathbb{W}^1\).
		\end{itemize}
		Note that for any set  in \(\{ W^1\cap W^2, W^1 \setminus W^2, W^2 \setminus W^1\}\), 
		\(w\) is present in the set if and only if \(w'\) is in it as well. 
		\item A pair \((w, +)\) is in \(\mathbb{W}\) if and only if one of the following conditions is 
        satisfied. 
      	\begin{itemize}
      		\item If \(w \in W^1 \cap W^2\), then consider the following subcases:
      			\begin{itemize}
		      		\item If \(w \in W^1_0 \cap W^2_0\), then \((w, +)\) is either in \(\mathbb{W}^1\) or \(\mathbb{W}^2\), 
		      		but not both.
		      		\item If \(w \in W^1_0\) and \(w \in W^2 \setminus W^2_0\), then \((w, +)\)
		      		is in \(\mathbb{W}^2\) (and may be in \(\mathbb{W}^1\)).
      				\item If \(w \in W^2_0\) and \(w \in W^1 \setminus W^1_0\), then \((w, +)\)
		      		is in \(\mathbb{W}^1\) (and may be in \(\mathbb{W}^2\)).
      			\end{itemize}
      		\item If \(w \in W^1 \setminus W^2\), then \((w, +)\) is in \(\mathbb{W}^1\) and not in \(\mathbb{W}^2\).
			\item If \(w \in W^2 \setminus W^1\), then \((w, +)\) is in \(\mathbb{W}^2\) and not in \(\mathbb{W}^1\).
		\end{itemize}
	\end{enumerate}
\end{enumerate}
Let \( J(\tau) \) denote the collection of pairs \(\langle \tau^1, \tau^2 \rangle\) 
that are join-compatible with \(\tau\).
\end{definition}

\begin{lemma}
\label{lemma:join-node}
Let \( t \) be a join node with children 
\( t^1, t^2 \), where \( X_t = X_{t^1} = X_{t^2} \), 
and let \( \tau = \langle Y, W, W_0, \calY, \mathbb{W} \rangle \) be a valid tuple at \( t \). Then,
\[
\dptw[t, \tau] = \min\limits_{\langle \tau^1, \tau^2 \rangle \in J(\tau)} \{ \dptw[t^1, \tau^1] + \dptw[t^2, \tau^2]\}  - |Y|. 
\]
\end{lemma}
\begin{proof}
Suppose that the minimum value of 
\(
\min_{\langle \tau^1, \tau^2 \rangle \in J(\tau)} \left\{ \dptw[t^1, \tau^1] + \dptw[t^2, \tau^2] \right\}
\)
is finite and attained at the pair \((\tau^1, \tau^2)\).  
Let \(S_{t^1}\) and \(S_{t^2}\) be the corresponding partial solutions that realize this minimum.  
Define 
\(
S_t := S_{t^1} \cup S_{t^2}.
\)
We now verify that \(S_t\) is a valid candidate solution at node \(t\) by checking that it satisfies the requirements of Definition~\ref{def:candidate-sol}.  
The relationships among the sets specified by \(\tau\), \(\tau^1\), and \(\tau^2\) are governed by Definition~\ref{def:pairwise-join-compatible} and Definition~\ref{def:join-compatible}.

\begin{enumerate}[nolistsep]
  \item By definition, \(S_{t^1}\) and \(S_{t^2}\) locate all vertices in \(V(G_{t^1}) \setminus X_t\) and \(V(G_{t^2}) \setminus X_t\), respectively.  
  Since \(V(G_{t^1}) \setminus X_t\) and \(V(G_{t^2}) \setminus X_t\) are disjoint, and only vertices in \(X_t\) can have neighbors in both subgraphs, it follows that \(S_t\) locates any \(u \in V(G_t) \setminus X_t\) provided \(N_{G_t}(u) \cap S_t\) is not entirely contained in \(X_t\).  
  Moreover, as \(\tau^1\) and \(\tau^2\) are pairwise compatible, we have \(\mathcal{Y}^1 \cap \mathcal{Y}^2 = \emptyset\).  
  Equivalently, for any \(A \subseteq Y\), there exists at most one vertex in \(V(G_t) \setminus X_t\) with neighborhood \(A\) in \(G_t\).  
  Therefore, \(S_t\) indeed locates every vertex in \(V(G_t) \setminus X_t\).

  \item Since \(Y = Y^1 = Y^2\), the intersection of \(S_t\) with \(X_t\) is precisely \(Y\).  
  Furthermore, \(W = W^1 \cup W^2\) denotes exactly the vertices of \(X_t\) dominated by \(S_t\), and  
  \(W_0 = W_0^1 \cap W_0^2\) consists of those vertices whose neighbors in \(S_t\) lie entirely within \(Y\).

  \item No vertex in \(V(G_{t^1}) \setminus X_t\) is adjacent to any vertex in \(V(G_{t^2}) \setminus X_t\).  
  This implies that no vertex in \(V(G_{t^1}) \setminus (S_{t^1} \cup X_t)\) is adjacent with a 
  vertex in \(S_{t^2} \setminus X_t\).
  A symmetric statement holds for vertices in \(V(G_{t^2}) \setminus (S_{t^2} \cup X_t)\).
  Consequently, the set of unusable subsets at \(t\) is given by  
  \(
  \mathcal{Y} = \mathcal{Y}^1 \cup \mathcal{Y}^2.
  \)

  \item It remains to verify that the set of unresolved pairs in \(X_t\) with respect to \(S_t\) are given by 
  the set \(\mathbb{W}\), as prescribed in Definition~\ref{def:join-compatible}.
  \begin{enumerate}
    \item 
    Consider \((w, w') \in \mathbb{W}\) with \(w, w' \in W\).  
    This pair is unresolved by \(S_t\) if and only if \(N_{G_t}(w) \cap S_t = N_{G_t}(w') \cap S_t\).  
    For any of the three sets \(\{ W^1 \cap W^2,\; W^1 \setminus W^2,\; W^2 \setminus W^1 \}\),  
    the membership of \(w\) and \(w'\) is identical.

    \begin{itemize}
      \item If \(w \in W^1 \cap W^2\), let \(N_i(v) := N_{G_{t^i}}(v) \cap S_{t^i}\) for \(i \in \{1,2\}\).  
      Then \(N_{G_t}(v) \cap S_t = N_1(v) \cup N_2(v)\), and the disjointness of  
      \(S_{t^1} \setminus X_t\) and \(S_{t^2} \setminus X_t\) implies  
      \(
      N_1(w) \cup N_2(w) = N_1(w') \cup N_2(w')\) if and only if \(N_1(w) = N_1(w') \ \text{and} \ N_2(w) = N_2(w').
      \)
      Thus, \((w, w')\) is unresolved by \(S_t\) if and only if it is unresolved in both \(S_{t^1}\) and \(S_{t^2}\),  
      i.e., \((w, w') \in \mathbb{W}^1 \cap \mathbb{W}^2\).

      \item If \(w \in W^1 \setminus W^2\), then \((w, w') \notin \mathbb{W}^2\), implying it is unresolved solely in \(S_{t^1}\), hence \((w, w') \in \mathbb{W}^1\).

      \item If \(w \in W^2 \setminus W^1\), the argument is symmetric to the previous case.
    \end{itemize}

    \item  
    For a vertex \(w \in W\), \((w, +) \in \mathbb{W}\) if \(w\) shares the same neighborhood in 
    \(S_t\) as some \(u \in V(G_t) \setminus (S_t \cup X_t)\).  
    Let \(u_1\) (resp. \(u_2\)) be such a vertex in \(V(G_{t^1}) \setminus (S_{t^1} \cup X_t)\) (resp. \(V(G_{t^2}) \setminus (S_{t^2} \cup X_t)\)), if it exists.  
    Define \(A_1 := N_{G_{t^1}}(w) \cap S_{t^1}\) and \(A_2 := N_{G_{t^2}}(w) \cap S_{t^2}\).

    \begin{itemize}
      \item If \(w \in W^1 \cap W^2\):
      \begin{itemize}
        \item If \(w \in W_0^1 \cap W_0^2\), then both \(u_1\) and \(u_2\) can not exists, else it contradicts that 
        \(S_t\) locates all vertices in \(V(G_t) \setminus X_t\).  
        Hence, \((w, +)\) lies in exactly one of \(\mathbb{W}^1\) or \(\mathbb{W}^2\), but not both.
        \item If \(w \in W^1_0\) and \(w \in W^2 \setminus W^2_0\).
		      then \(A_1\) is in \(Y\) and is a proper subset of \(A_2\).
		      Hence, a vertex in \(A_2 \setminus A_1\) resolves the pair \(w, u_1\), if it exits.
		      This implies that \((w, +)\) is in \(\mathbb{W}^2\).
		      Note that  \((w, +)\) is in \(\mathbb{W}^1\) if and only if \(u_1\) exists.
        \item If \(w \in W_0^2\) and \(w \in W^1 \setminus W_0^1\), the argument is symmetric.
        \item If \(w \in W^1 \setminus W^1_0\) and \(w \in W^2 \setminus W^2_0\),
        then a vertex in \(A_1 \setminus A_2\) resolves the pair \((w, u_2)\) 
      		and a vertex in \(A_2 \setminus A_1\) resolves the pair \((w, u_1)\).
      		This implies that \((w, +)\) is \emph{not} present in \(\mathbb{W}\).
      \end{itemize}

      \item Suppose that \(w \in W^1 \setminus W^2\).
      	In this case, pair \((w, +)\) is not present in \( W^2\).
       	Hence, \(u = u_1\) which implies that \((w, +)\) is in \(\mathbb{W}^1\).
		\item If \(w \in W^2 \setminus W^1\), then \((w, +)\) is in \(\mathbb{W}^2\) (only and not in \(\mathbb{W}^1\)).
		This case follows using symmetric arguments as in the previous case.
    \end{itemize}
  \end{enumerate}
\end{enumerate}

\smallskip
Thus, \(S_t\) is a valid candidate solution corresponding to the tuple \(\tau\).  
Its size is given by 
\(
|S_t| = |S_{t^1}| + |S_{t^2}| - |Y|.
\)
Minimizing over all join-compatible pairs \(\langle \tau^1, \tau^2 \rangle \in J(\tau)\) yields
\(
\dptw[t, \tau] \leq \min_{\langle \tau^1, \tau^2 \rangle \in J(\tau)} \left\{ \dptw[t^1, \tau^1] + \dptw[t^2, \tau^2] \right\} - |Y|.
\)

\smallskip

Conversely, let \(S_t\) be an optimal candidate solution for the tuple \(\tau\) at node \(t\). 
By definition, \(|S_t| = \dptw[t, \tau]\), and \(S_t\) satisfies all conditions specified in Definition~\ref{def:candidate-sol} for \(\tau\). 
Define
\(
S_{t^1} := S_t \cap V(G_{t^1})\),  and \(S_{t^2} := S_t \cap V(G_{t^2}).
\)
We first establish certain properties of \(S_{t^1}\) and \(S_{t^2}\), and then define the tuples
\(
\tau_1 := \langle Y^1, W^1, W_0^1, \mathcal{Y}^1, \mathbb{W}^1 \rangle\), 
and 
\(\tau_2 := \langle Y^2, W^2, W_0^2, \mathcal{Y}^2, \mathbb{W}^2 \rangle.
\)

\begin{enumerate}[nolistsep]
  \item \(S_{t^1}\) locates (in \(G_{t^1}\)) all vertices in \(V(G_{t^1}) \setminus X_{t^1}\), and 
        \(S_{t^2}\) locates (in \(G_{t^2}\)) all vertices in \(V(G_{t^2}) \setminus X_{t^2}\).  
        This follows since \(S_t\) locates all vertices in \(V(G_{t^1})\), and no vertex in \(V(G_{t^1}) \setminus X_{t^1}\) is adjacent to any vertex in \(V(G_{t^2}) \setminus X_{t^2}\).

  \item Define \(Y^1 := S_{t^1} \cap X_{t^1}\).  
        Let \(W^1\) be the set of vertices in \(X_{t^1} \setminus Y^1\) that are dominated by \(S_{t^1}\).  
        Let \(W_0^1 \subseteq W^1\) be the set of vertices whose neighborhood in \(S_{t^1}\) is contained in \(Y^1\).  
        Define \(Y^2, W^2, W_0^2\) analogously for \(t^2\).

  \item Define \(\mathcal{Y}^1\) as the collection of subsets \(A \subseteq Y\) such that there exists a vertex 
        \(u \in V(G_{t^1}) \setminus (S_{t^1} \cup X_{t^1})\) with
        \(
        N_{G_{t^1}}(u) \cap S_t = A.
        \)
        Since \(S_{t^1}\) locates every vertex in \(V(G_{t^1}) \setminus X_{t^1}\), such a vertex \(u\) (if it exists) is unique.  
        Define \(\mathcal{Y}^2\) analogously.

  \item 
        \begin{itemize}
          \item For \(w, w' \in W^1\), add the pair \((w, w')\) to \(\mathbb{W}^1\) if and only if
                \(N_{G_{t^1}}(w) \cap S_{t^1} = N_{G_{t^1}}(w') \cap S_{t^1}\).
          \item For \(w \in W^1\), add the pair \((w, +)\) to \(\mathbb{W}^1\) if and only if there exists 
                \(u \in V(G_{t^1}) \setminus (S_{t^1} \cup X_{t^1})\) such that
                \(N_{G_{t^1}}(w) \cap S_t = N_{G_{t^1}}(u) \cap S_{t^1}\).
        \end{itemize}
        Define \(\mathbb{W}^2\) analogously.
\end{enumerate}

By Definition~\ref{def:valid-tuple}, 
\(\tau_1\) and \(\tau_2\) are valid tuples at \(t^1\) and \(t^2\), respectively.  
By Definition~\ref{def:candidate-sol},
\(S_{t^1}\) and \(S_{t^2}\) are candidate solutions at \(t^1\) with respect to \(\tau_1\), and at \(t^2\) with respect to \(\tau_2\), respectively.  

Next, we prove that \(\tau_1\) and \(\tau_2\) are pairwise join-compatible by showing that they satisfy the properties in Definition~\ref{def:pairwise-join-compatible}:

\begin{enumerate}[nolistsep]
    \item By construction, \(Y^1 = Y^2\).
    \item Suppose that \(\mathcal{Y}^1 \cap \mathcal{Y}^2 \neq \emptyset\).  
          Then there exist vertices \(u_1 \in V(G_{t^1}) \setminus X_{t^1}\) and \(u_2 \in V(G_{t^2}) \setminus X_{t^2}\) such that their neighborhoods in \(S_t\) are identical and contained in \(Y\).  
          This contradicts the fact that \(S_t\) locates every vertex in \(V(G_t) \setminus X_t\).  
          Therefore, \(\mathcal{Y}^1 \cap \mathcal{Y}^2 = \emptyset\).
    \item Suppose that there exists \(w \in W_0^1 \cap W_0^2\) such that \((w, +) \in \mathbb{W}^1 \cap \mathbb{W}^2\).  
          Then there exist \(u_1 \in V(G_{t^1}) \setminus X_{t^1}\) and \(u_2 \in V(G_{t^2}) \setminus X_{t^2}\) whose neighborhoods in \(S_t\) are identical to that of \(w\) and contained in \(Y\).  
          This again contradicts the fact that \(S_t\) locates every vertex in \(V(G_t) \setminus X_t\).
\end{enumerate}
Hence, \(\tau_1\) and \(\tau_2\) are pairwise join-compatible.

Next, we establish that \(\langle \tau^1, \tau^2 \rangle\) is join-compatible with \(\tau\) by verifying that it satisfies all the properties stated in Definition~\ref{def:join-compatible}.

\begin{enumerate}[nolistsep]
    \item By definition, \(Y = Y^1 = Y^2\).

    For any vertex \(w \in V(G_t)\), we have
    \(
        N_{G_t}[w] \cap S_t \;=\; (N_{G_{t^1}}[w] \cap S_{t^1}) \;\cup\; (N_{G_{t^2}}[w] \cap S_{t^2}).
    \)
    Consequently, \(w\) is dominated by \(S_t\) if and only if it is dominated by either \(S_{t^1}\) or \(S_{t^2}\).  
    By definition, \(W^1\) and \(W^2\) are precisely the sets of vertices in \(X_t\) dominated by \(S_{t^1}\) and \(S_{t^2}\), respectively. Therefore,
    \(
        W = W^1 \cup W^2.
    \)

    A vertex \(w \in X_t\) belongs to \(W_0\) if all its neighbors in \(S_t\) are contained in \(Y\), i.e.,
    \(
        N_{G_t}(w) \cap S_t \subseteq Y.
    \)
    This holds if and only if both
    \(N_{G_{t^1}}(w) \cap S_{t^1} \subseteq Y\)
{and}
        \(N_{G_{t^2}}(w) \cap S_{t^2} \subseteq Y\),
    which are precisely the conditions for \(w \in W_0^1\) and \(w \in W_0^2\), respectively.  
    Thus,
    \(
        W_0 = W_0^1 \cap W_0^2.
    \)

    \item Let \(A \in \calY\), and suppose that \(u \in V(G_t) \setminus (S_t \cup X_t)\) is such that \(N_{G_t}(u) \cap S_t = A\).  
    The set of such vertices \(u\) is the disjoint union of:
    \begin{itemize}
        \item those in \(V(G_{t^1}) \setminus (S_{t^1} \cup X_t)\), for which \(N_{G_t}(u) \cap S_t = N_{G_{t^1}}(u) \cap S_{t^1} \in \calY^1\);
        \item those in \(V(G_{t^2}) \setminus (S_{t^2} \cup X_t)\), for which \(N_{G_t}(u) \cap S_t = N_{G_{t^2}}(u) \cap S_{t^2} \in \calY^2\).
    \end{itemize}
    Therefore,
    \(
        \calY = \calY^1 \cup \calY^2.
    \)

    \item
    \begin{enumerate}
        \item Since \(W = W^1 \cup W^2\), consider any \(w, w' \in W\). If \((w, w') \in \mathbb{W}\), then for each set in \(\{ W^1 \cap W^2,\, W^1 \setminus W^2,\, W^2 \setminus W^1 \}\), the membership of \(w\) coincides with that of \(w'\).

        \begin{itemize}
            \item If \(w, w' \in W^1 \cap W^2\), then \((w, w') \in \mathbb{W}\) if and only if
            \(
                (w, w') \in \mathbb{W}^1 \cap \mathbb{W}^2.
            \)
            Indeed, \((w, w') \in \mathbb{W}\) means \(N_{G_t}(w) \cap S_t = N_{G_t}(w') \cap S_t\), which is equivalent to
            \(
                N_{G_{t^1}}(w) \cap S_{t^1} = N_{G_{t^1}}(w') \cap S_{t^1}\)
                {and} 
                \(N_{G_{t^2}}(w) \cap S_{t^2} = N_{G_{t^2}}(w') \cap S_{t^2}.
            \)

            \item If \(w, w' \in W^1 \setminus W^2\), then \((w, w') \in \mathbb{W}\) if and only if \((w, w') \in \mathbb{W}^1\).

            \item If \(w, w' \in W^2 \setminus W^1\), the statement follows symmetrically from the previous case.
        \end{itemize}

        \item Let \(w \in W\), and let \(u_1\) (resp. \(u_2\)) be a vertex, if it exists, in \(V(G_{t^1}) \setminus (S_{t^1} \cup X_t)\) (resp. \(V(G_{t^2}) \setminus (S_{t^2} \cup X_t)\)) whose neighborhood in \(S_{t^1}\) (resp. \(S_{t^2}\)) matches that of \(w\). Denote these neighborhoods by \(A_1\) and \(A_2\), respectively.  
        Similarly, let \(u\) (if it exists) be the vertex in \(V(G_t) \setminus (S_t \cup X_t)\) whose neighborhood in \(S_t\) matches that of \(w\). By construction, \(u\) is either \(u_1\) or \(u_2\).
	
		\begin{itemize}
      	\item Suppose that \(w \in W^1 \cap W^2\), then we consider the following subcases:
      	\begin{itemize}
		      \item Suppose that \(w \in W^1_0 \cap W^2_0\).
		      Vertex \(u\) exists if and only if exactly one of \(u_1\) and \(u_2\) exists as otherwise 
		      it contradicts the fact that \(S_t\) locates in \(G_t\) every vertex in \(V(G_t) \setminus X_t\).
  			 Hence, \((w, +) \in \mathbb{W}\) if and only if 
  			 \((w, +)\) in either in \(\mathbb{W}^1\) or \(\mathbb{W}^2\) 
		     (but not both).
		      \item Suppose that \(w \in W^1_0\) and \(w \in W^2 \setminus W^2_0\).
		      This implies \(A_1\) is in \(Y\) and is a proper subset of \(A_2\).
		      Hence, a vertex in \(A_2 \setminus A_1\) resolve the pair \(w, u_1\).
		      This implies \(u\) exists if and only if \(u_2\) exits.
		      Hence, \((w, +)\) is in \(\mathbb{W}\) if and only if \((w, +)\) is in \(\mathbb{W}^2\).
		      Note that depending on whether \(u_1\) exists or not, \((w, +)\) may be present in 
		      \(\mathbb{W}^1\).      		
      		\item Suppose that \(w \in W^2_0\) and \(w \in W^1 \setminus W^1_0\).
      		This case follows using symmetric arguments as in the previous case.
      		\item Suppose that \(w \in W^1 \setminus W^1_0\) and \(w \in W^2 \setminus W^2_0\).
      		In this case, a vertex in \(A_1 \setminus A_2\) resolves the pair \(w, u_2\) 
      		and a vertex in \(A_2 \setminus A_1\) resolves the pair \(w, u_1\).
      		This implies that \(w, +\) is \emph{not} present in \(\mathbb{W}\).
      	\end{itemize}
      	\item Suppose that \(w \in W^1 \setminus W^2\).
      	In this case, \(u_2\) does not exist, and hence \(u\) exists if and only if \(u_1\) exits.
      	Hence \((w, +)\) is in \(\mathcal{W}\) if and only if \((w, +)\) is in \(\mathbb{W}^1\).
		\item Suppose that \(w \in W^2 \setminus W^1\).
		This case follows using symmetric arguments as in the previous case.
  \end{itemize}

    \end{enumerate}
\end{enumerate}

From the above, it follows that \(\tau\) is join-compatible with the pair \(\langle \tau^1, \tau^2 \rangle\).

\medskip
In summary, \(S_{t^1}\) and \(S_{t^2}\) are valid candidate solutions for \(\tau^1\) and \(\tau^2\), respectively, and \(\langle \tau^1, \tau^2 \rangle\) is join-compatible with \(\tau\). Moreover,
\(
    |S_t| = |S_{t^1} \cup S_{t^2}| = |S_{t^1}| + |S_{t^2}| - |S_{t^1} \cap S_{t^2}| = |S_{t^1}| + |S_{t^2}| - |Y|.
\)
Since \(\dptw[t^1, \tau^1]\) and \(\dptw[t^2, \tau^2]\) denote the sizes of a minimum candidate solutions for \(\tau^1\) and \(\tau^2\), respectively, we have \(|S_{t^1}| \geq \dptw[t^1, \tau^1]\) and \(|S_{t^2}| \geq \dptw[t^2, \tau^2]\). Substituting into the above expression yields
\(
    \dptw[t, \tau] = |S_t| \;\geq\; \dptw[t^1, \tau^1] + \dptw[t^2, \tau^2] - |Y|.
\)
This inequality holds for any \(\langle \tau^1, \tau^2 \rangle \in J(\tau)\), where \(J(\tau)\) denotes the set of all pairs satisfying Definition~\ref{def:join-compatible}. Therefore,
\(
    \dptw[t, \tau] \;\geq\; \min_{\langle \tau'^1, \tau'^2 \rangle \in J(\tau)}
    \big\{ \dptw[t^1, \tau'^1] + \dptw[t^2, \tau'^2] \big\} - |Y|.
\)
This completes the proof of the reverse inequality.  
By combining both directions, we conclude the correctness of the recursive update at join nodes.
\end{proof}

We are now in a position to present the proof of Theorem~\ref{thm:LD-tw-algo}
which states that \LD\ admits an algorithm with running time
$2^{2^{\calO(\tw)}} \cdot n$.

\begin{proof}[Proof of Theorem~\ref{thm:LD-tw-algo}]
The algorithm first computes a nice tree-decomposition of width $\calO(\tw)$ with $\calO(n)$ nodes
in time $2^{\calO(\tw)} \cdot n$~\cite{niceTW,K21}.
It then evaluates the dynamic programming table in a bottom-up manner, starting from 
the leaf nodes 
of the decomposition and proceeding towards the root node \(r\).  
Finally, it outputs the set corresponding to \(\dptw[r, \tau]\), 
where \(\tau = \langle \emptyset, \emptyset, \emptyset, \emptyset, \emptyset \rangle\).

The fact that each DP-state can be computed correctly follows from 
Lemmas~\ref{lemma:leaf-node}, \ref{lemma:introduce-node}, 
\ref{lemma:forget-node}, and \ref{lemma:join-node}.
Together with the properties of nice tree-decompositions, 
these lemmas imply that the set returned by the algorithm is 
indeed an optimal locating-dominating set of \(G\).
This establishes the correctness of the algorithm.

We now analyse the running time.  
The total number of possible DP-states at a node is bounded by 
\(2^{2^{\calO(\tw)}}\), and all other operations at a node 
can be performed in time polynomial in the number of states.  
Therefore, since there are $\calO(n)$ nodes, the overall running time is 
\(2^{2^{\calO(\tw)}} \cdot n\), as claimed.
This concludes the proof of Theorem~\ref{thm:TC-tw-algo}.
\end{proof}

\subsection{Lower Bound}
\label{sub-sec:locating-dom-tw-lb}
In this subsection, we prove the lower bound for \LD
mentioned in Theorem~\ref{thm:TW}. 
For convenience, we restate this part of the statement as follows.

\begin{theorem*}[Restating part of Theorem~\ref{thm:TW}]
Unless the \ETH\ fails,
\LD does not admit an algorithm 
running in time  $2^{2^{o(\tw)}} \cdot {\rm poly}(n)$, where $\tw$ is the treewidth and $n$ is the order of the input graph.
\end{theorem*}

To prove the above theorem, we present a reduction 
from a variant of \textsc{$3$-SAT} called
\textsc{$(3,3)$-SAT}.
In this variation, an input is a boolean satisfiability
formula $\psi$ in conjunctive normal form 
such that each clause contains \emph{at most}\footnote{We remark 
that if each clause contains \emph{exactly} $3$ variables,
and each variable appears $3$ times, then the problem
is polynomial-time solvable \cite[Theorem 2.4]{DBLP:journals/dam/Tovey84}}
$3$ variables, and each variable appears at most $3$ times.
Consider the following reduction from an instance $\phi$
of \textsc{$3$-SAT} with $n$ variables and $m$ clauses
to an instance $\psi$ of \textsc{$(3, 3)$-SAT} mentioned in 
\cite{DBLP:journals/dam/Tovey84}:
For every variable $x_i$ that appears $k > 3$ times,
the reduction creates $k$ many new variables
$x^1_i, x^2_i, \dots, x^{k}_i$,
replaces the $j^{th}$ occurrence of $x_i$ by $x^{j}_i$, 
and adds the series of new clauses to encode
$x^1_i \Rightarrow x^2_i \Rightarrow \cdots \Rightarrow x^k_i 
\Rightarrow x^1_i$.
For an instance $\psi$ of \textsc{$3$-SAT},
suppose $k_i$ denotes the number of times a variable $x_i$
appeared in $\phi$.
Then, $\sum_{i \in [n]} k_i \le 3 \cdot m$.
Hence, the reduced instance
$\psi$ of \textsc{$(3,3)$-SAT} has at most $3m$ variables
and $4m$ clauses.
Using the \ETH~\cite{DBLP:journals/jcss/ImpagliazzoP01} and 
the sparsification lemma~\cite{DBLP:journals/jcss/ImpagliazzoPZ01}, 
we have the following result.

\begin{proposition}
\label{prop:3-3-SAT-ETH-lb}
\textsc{$(3,3)$-SAT}, with $n$ variables and $m$ clauses, 
does not admit an algorithm running in time $2^{o(m+n)}$,
unless the \emph{\ETH} fails.
\end{proposition}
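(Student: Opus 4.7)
The plan is to prove the proposition by composing the standard reduction from \textsc{$3$-SAT} to \textsc{$(3,3)$-SAT} (already sketched above) with the \ETH\ together with the Sparsification Lemma, and then tracing the parameter blow-ups carefully.

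First, I would recall the two classical ingredients. By the \ETH\ of Impagliazzo and Paturi, \textsc{$3$-SAT} on $n$ variables does not admit a $2^{o(n)}$-time algorithm. By the Sparsification Lemma of Impagliazzo, Paturi, and Zane, we may further assume, for any $\varepsilon>0$, that our hard \textsc{$3$-SAT} instances have $m \leq c_\varepsilon \cdot n$ clauses; in particular, ruling out $2^{o(n+m)}$-time algorithms for \textsc{$3$-SAT} is equivalent to ruling out $2^{o(n)}$-time algorithms. Hence there is no $2^{o(n+m)}$-time algorithm for \textsc{$3$-SAT} under the \ETH.

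Next, I would plug in the reduction recalled in the excerpt: starting from a \textsc{$3$-SAT} instance $\phi$ with $n$ variables and $m$ clauses, we obtain a \textsc{$(3,3)$-SAT} instance $\psi$ with at most $n'\le 3m$ variables and at most $m'\le 4m$ clauses. This construction is clearly computable in polynomial time in $n+m$, and it is immediate from the implication chain $x_i^1 \Rightarrow x_i^2 \Rightarrow \cdots \Rightarrow x_i^k \Rightarrow x_i^1$ that $\phi$ is satisfiable if and only if $\psi$ is satisfiable: any satisfying assignment of $\phi$ lifts by setting all copies $x_i^j$ to the value of $x_i$, while conversely the cyclic implications force all copies of a variable to take the same truth value in any satisfying assignment of $\psi$.

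Finally, I would combine the two. Suppose, for contradiction, that \textsc{$(3,3)$-SAT} admits an algorithm $\mathcal{A}$ running in time $2^{o(n'+m')}$. Given a \textsc{$3$-SAT} instance with $n$ variables and $m$ clauses (which, by sparsification, we may assume satisfies $m \leq c\cdot n$), we produce $\psi$ in polynomial time and run $\mathcal{A}$ on it. The total running time is $2^{o(n'+m')} \cdot (n+m)^{\calO(1)} \le 2^{o(m)} \cdot (n+m)^{\calO(1)} \le 2^{o(n)}\cdot n^{\calO(1)}$, contradicting the \ETH\ by the first paragraph. This yields the claimed lower bound; I do not foresee any serious obstacle here, as the whole argument is a routine composition of a linear-parameter reduction with \ETH\ plus sparsification, with the only thing to check being the linear dependence $n'+m' = \calO(n+m)$, which is immediate from the bounds $n'\le 3m$ and $m'\le 4m$.
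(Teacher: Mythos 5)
Your proposal is correct and follows essentially the same route as the paper: the paper also derives the bound by combining the \ETH\ with the sparsification lemma and the Tovey-style reduction (cyclic implications $x_i^1 \Rightarrow \cdots \Rightarrow x_i^k \Rightarrow x_i^1$), using exactly the bounds of at most $3m$ variables and $4m$ clauses in the reduced instance. Your write-up just spells out the satisfiability-equivalence and the parameter bookkeeping slightly more explicitly, which is fine.
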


We highlight that every variable appears positively and negatively
at least once.
Otherwise, if a variable appears only positively (respectively, only negatively)
then we can assign it \true\ (respectively, \false) 
and safely reduce the instance by removing the clauses 
containing this variable.
Hence, instead of the first, second, or third appearance of the variable,
we use the first positive, first negative, second positive, or
second negative appearance of the variable.  

\subparagraph*{Reduction.}
The reduction takes as input an instance $\psi$ of 
\textsc{$(3,3)$-SAT} with $n$ variables and 
outputs an instance $(G,k)$ of \LD\ such that $\tw(G) = \calO(\log(n))$. 
Suppose $X = \{x_1, \dots, x_n\}$ is the collection 
of variables and $C = \{C_1, \dots, C_m\}$ is the collection
of clauses in $\psi$.
Here, we consider 
$\langle x_1, \dots, x_n\rangle$ and
$\langle C_1, \dots, C_m \rangle$ to be arbitrary but fixed
orderings of variables and clauses in $\psi$.
For a particular clause, the first order specifies
the first, second, or third (if it exists) variable in the clause 
in a natural way.
The second ordering specifies the first/second positive/negative
appearance of variables in $X$ in a natural way.

We will make use of \emph{set-representation gadgets} as defined in~\cite{FoucaudGK0IST24}. Intuituvely speaking, such gadgets form a logarithmic-size separator that allows to connect two linear-size sets of vertices and encode their interactions.

The reduction constructs a graph $G$ as follows.
\begin{itemize}
\item 
To construct a variable gadget for $x_i$,
it starts with two claws $\{\alpha_i^0, \alpha_i^1, \alpha_i^2, \alpha_i^3\}$
and $\{\beta_i^0, \beta_i^1, \beta_i^2, \beta_i^3\}$ centered at
$\alpha_i^0$ and $\beta_i^0$, respectively. 
(Recall that a claw is the star $K_{1, 3}$.) 
It then adds four vertices
$x_i^1, \neg x_i^1, x_i^2, \neg x_i^2$,
and the corresponding edges, as shown in 
Figure~\ref{fig:locating-dom-tw-lb}.
Let $A_i$ be the collection of these twelve vertices and $A = \bigcup_{i=1}^n A_i$.
Define $X_i := \{x_i^1, \neg x_i^1, x_i^2, \neg x_i^2\}$.

\item To construct a clause gadget for $C_j$, the reduction
starts with a star graph centered at $\gamma_j^0$ and
with four leaves $\{\gamma_j^1, \gamma_j^2, \gamma_j^3, \gamma_j^4\}$.
It then adds three vertices $c^1_j, c^2_j, c^3_j$ and 
the corresponding edges shown in Figure~\ref{fig:locating-dom-tw-lb}. 
Let $B_j$ be the collection of these eight vertices and define $B = \bigcup\limits_{j=1}^m B_j$.

\item Let $p$ be the smallest positive integer such that $4n \leq 
\binom{2p}{p}$. 
Define $\calF_p$ as the collection of subsets of $[2p]$ that contains 
exactly $p$ integers (such a collection $\calF_p$ is called a \emph{Sperner 
family}). 
Define $\setrep: \bigcup_{i=1}^n X_i \to 
\calF_{p}$ 
as an injective function by arbitrarily assigning a set in $\calF_p$ to 
a vertex $x^{\ell}_i$ or $\neg x^{\ell}_i$, for every $i \in [n]$ and 
$\ell \in [2]$. 
In other words, every appearance 
of a literal is assigned a distinct subset in $\calF_p$.
\item The reduction adds a \emph{connection portal} $V$, which is a clique 
on $2p$ vertices $v_1, v_2,  \dots, v_{2p}$. For every vertex $v_q$ in $V$, 
the reduction adds a pendant vertex $u_q$ adjacent to $v_q$.
\item For each vertex $x_i^{\ell} \in X$ where $i \in [n]$ and $\ell \in [2]$, 
the reduction adds edges $(x^{\ell}_i, v_q)$ for every 
$q \in \setrep(x^{\ell}_i)$.
Similarly, it adds edges $(\neg x^{\ell}_i, v_q)$ for every 
$q \in \setrep(\neg x^{\ell}_i)$.

\item For a clause $C_j$, 
suppose variable $x_i$ appears positively for the $\ell^{th}$ time 
as the $r^{th}$ variable in $C_j$.
Then, the reduction adds edges across $B$ and $V$ such that the vertices $c_j^r$ and $x^{\ell}_i$ have the 
same neighborhood in $V$, namely, the set 
$\{v_q : q \in \setrep(x^{\ell}_i) \}$.
For example, $x_i$ appears positively for the second time
as the third variable in $C_j$.
Then, the vertices $c_j^{3}$ and $x_{\ell}^2$ should have the same neighborhood in $V$.
Similarly, it adds edges for the negative appearance of the variables.
\end{itemize}

\begin{figure}[t]
\centering
\includegraphics[scale=0.65]{./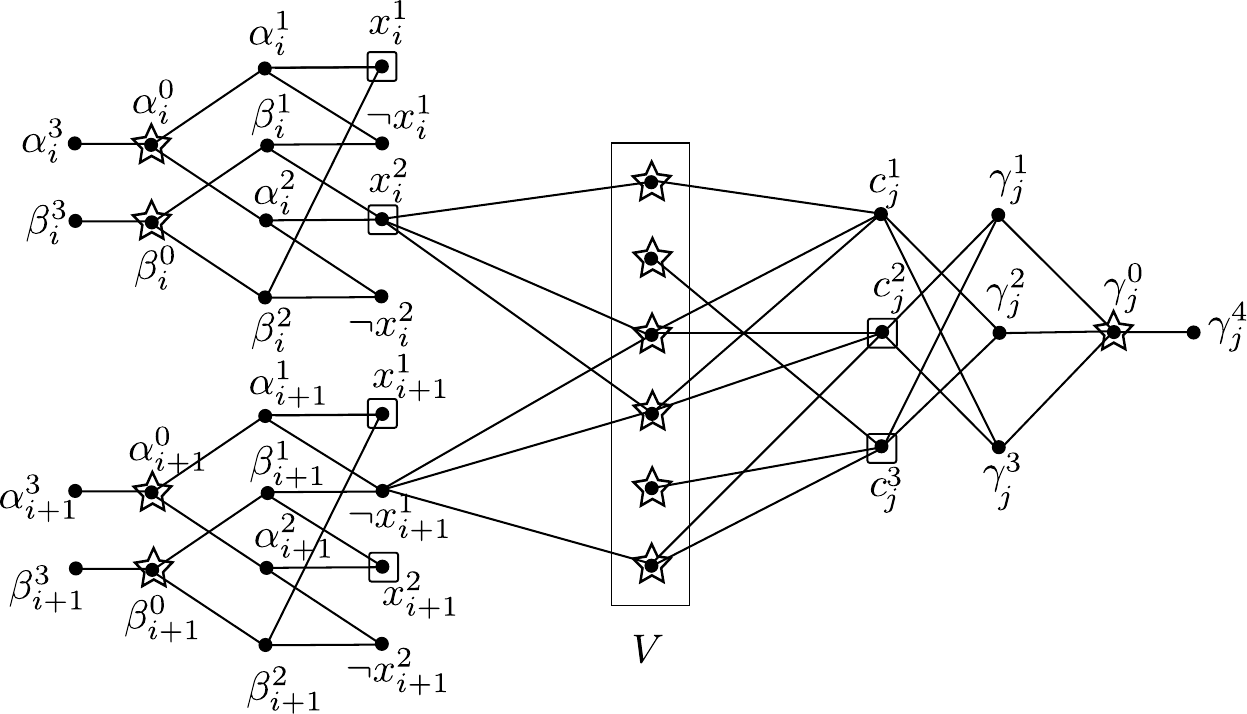}
\caption{Illustration of the reduction used in Subsection~\ref{sub-sec:locating-dom-tw-lb}.
For the sake of clarity, we do not explicitly show the pendant 
vertices adjacent to vertices in $V$.
The variable and clause gadgets are on the left-side and right-side
of $V$, respectively.
In this example, we consider a clause 
$C_j = x_i \lor \neg x_{i + 1} \lor x_{i+2}$.
Moreover, suppose this is the second positive appearance 
of $x_i$ and the first negative appearance of $x_{i + 1}$,
and $x_i$ corresponds to $c^1_j$ and $x_{i+1}$ corresponds 
to $c^2_j$.
Suppose $V$ contains $6$ vertices indexed from top to bottom,
and the set corresponding to these two appearances are 
$\{1, 3, 4\}$ and $\{3, 4, 6\}$ respectively. 
Vertices with a star boundary are those that we can assume to be
in any locating-dominating set, without loss of generality.
The square boundaries correspond to the selection of other vertices in the solution.
In the above example, it corresponds to setting both $x_i$ and $x_{i+1}$
to \true.
On the clause side, the selection corresponds to selecting $x_i$  
to satisfy the clause $C_j$.
\label{fig:locating-dom-tw-lb}}
\end{figure}

This concludes the construction of $G$. 
The reduction sets $k = 4n + 3m + 2p$ and 
returns $(G, k)$ as the reduced instance of {\LD}.

We now prove the validity of the reduction in the following lemmas.

\begin{lemma}
\label{lem:forward-3,3-sat-to-LD}
If $\psi$ be a \yes-instance of \textsc{$(3,3)$-SAT},
then $(G, k)$ is a \yes-instance of \LD.
\end{lemma}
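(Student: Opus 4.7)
The plan is to construct a locating-dominating set $S$ of $G$ of size exactly $k=4n+3m+2p$ directly from a given satisfying assignment $\sigma$ of $\psi$. Observation~\ref{obs:nbr-of-pendant-vertex-in-sol} applied to the pendants $u_q$, to the pendant leaves of the two claws in each variable gadget, and to a pendant leaf of each clause star forces every portal vertex $v_q$, every claw-center $\alpha_i^0,\beta_i^0$, and every clause-center $\gamma_j^0$ into $S$. On top of this skeleton, for each variable $x_i$ I would add $\{x_i^1,x_i^2\}$ to $S$ if $\sigma(x_i)=\true$ and $\{\neg x_i^1,\neg x_i^2\}$ otherwise; for each clause $C_j$, I pick one literal that satisfies $C_j$, say appearing as the $r$-th literal of $C_j$, and add to $S$ the two vertices of $\{c_j^1,c_j^2,c_j^3\}\setminus\{c_j^r\}$. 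Summing contributions yields $|S|=2p+4n+3m=k$.

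Verifying that $S$ dominates $G$ is routine: the claws are dominated by their centers, the $u_q$ by $v_q$, each literal vertex (selected or not) has its entire $V$-neighborhood inside $S$, and each $c_j^r\notin S$ likewise has its $V$-neighborhood in $S$. For the location condition, I would first handle pairs of vertices within the same gadget. The asymmetric wiring of claw leaves and literal vertices in each variable gadget (as shown in Figure~\ref{fig:locating-dom-tw-lb}) is tailored so that, once exactly two coherent literal vertices from $\{x_i^1,\neg x_i^1,x_i^2,\neg x_i^2\}$ are placed in $S$, every non-selected vertex of the gadget receives a distinct $S$-neighborhood; the analogous check for each clause gadget uses the two selected $c_j^r$-vertices against the leaves $\gamma_j^1,\dots,\gamma_j^4$.

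The main obstacle, and the conceptual heart of the argument, is distinguishing cross-gadget pairs $\{y,z\}$ of outside vertices whose $S$-neighborhood lies entirely in $V$. By construction such vertices must be literal vertices or $c_j^r$-vertices, and their neighborhoods in $V$ are precisely the subsets assigned by $\setrep$. Since $\setrep$ injects into the Sperner family $\calF_p$, any two vertices corresponding to different literal appearances receive distinct, pairwise incomparable $V$-sets and are therefore already distinguished by $S\cap V$. The only pair whose $V$-neighborhoods genuinely coincide is a literal vertex together with the clause vertex $c_j^r$ encoding the same literal appearance. But this is precisely where $\sigma$ is used: a clause vertex $c_j^r$ was left out of $S$ only when its literal satisfies $C_j$ under $\sigma$, and by the rule for choosing the literal vertices of $S$, the matching literal vertex is then in $S$. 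Hence at most one vertex of such a pair lies outside $S$, so the location condition holds on all outside pairs and $S$ is a locating-dominating set of size $k$.
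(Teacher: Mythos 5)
Your proposal is correct and follows essentially the same route as the paper's proof: the same solution set (portal plus centers, the two literal vertices matching the assignment per variable, and two clause vertices omitting one chosen satisfied literal), the same size count, and the same location argument splitting into within-gadget pairs and cross-gadget pairs distinguished via the injective $\setrep$ assignment, with the satisfying assignment used exactly where the paper uses it. The only cosmetic differences (invoking Observation~\ref{obs:nbr-of-pendant-vertex-in-sol} to motivate including the centers, and picking an arbitrary satisfying literal instead of the smallest-indexed true variable) do not change the argument.
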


\begin{proof}
Suppose $\pi: X \mapsto \{\true, \false\}$ is a satisfying assignment
of $\psi$.
We construct a vertex subset $S$ of $G$ from the 
satisfying assignment on $\phi$ in the following manner:
Initialize $S$ by adding all the vertices in $V$.
For variable $x_i$, if $\pi(x_i) = \true$,
then include $\{\alpha_i^0, \beta_i^0, x^1_i, x^2_i\}$ in $S$
otherwise include $\{\alpha_i^0, \beta_i^0, \neg x^1_i, \neg x^2_i\}$
in $S$.
For any clause $C_j$, 
if its first variable is set to \true\ then include 
$\{\gamma^0_j, c^2_j, c^3_j\}$
in $S$,
if its second variable is set to \true\ then include 
$\{\gamma^0_j, c^1_j, c^3_j\}$ in $S$, otherwise
include $\{\gamma^0_j, c^1_j, c^2_j\}$ in $S$.
If more than one variable of a clause $C_j$ is set to \true,
we include the vertices corresponding to the smallest indexed variable
set to \true.
This concludes the construction of $S$.

It is easy to verify that $|S| = 4n+3m+2p = k$. 
In the remainder of this proof, we argue that $S$ is 
a locating-dominating set of $G$. 
To do so, we first show that $S$ is a dominating set of $G$. 
Notice that $V \subseteq S$ dominates the pendant vertices $u_q$ for all $q \in [2p]$ and all the vertices of the form $x^{\ell}_i$
for any $i \in [n]$ and $\ell \in [2]$,
and $c^{r}_{j}$ for any $j \in [m]$ and $r \in [3]$.
Moreover, the vertices $\alpha_i^0$, $\beta_i^0$ and $\gamma^i_0$ dominate the sets 
$\{\alpha_i^1,\alpha_i^2,\alpha_i^3\}$, 
$\{\beta_i^1,\beta_i^2,\beta_i^3\}$ and 
$\{\gamma^{1}_j,\gamma_j^{2},\gamma_j^{3},\gamma_j^4\}$, 
respectively. 
This proves that $S$ is a dominating set of $G$. 

We now show that $S$ is also a locating set of $G$. 
To begin with, we notice that, for $p \geq 2$, all the pendant vertices $u_q$ 
for $q \in [2p]$ are located from every other vertex in $G$ by the fact that 
$|N_G(u) 
\cap S| = 1$. Next, we divide the analysis of $S$ being a locating set 
of $G$ into the following three cases.

\begin{itemize}
\item First, consider the vertices within $A_i$'s.
Each $x^{\ell}_i$ or $\neg x^{\ell}_i$ for any literal $x_i \in 
X$ and $\ell \in [2]$ have a distinct neighborhood in $V$; hence, they are all pairwise located. 
For any $i \neq i' \in [n]$ and $\ell \neq \ell' \in [2]$, 
the pair $(\alpha_i^{\ell},\beta_{i'}^{\ell'})$ is located by $\alpha_i^0$. 
Moreover, the pair $(\alpha_i^{\ell},\alpha_i^{\ell'})$, 
for all $i \in [n]$, $\ell \neq \ell' \in [2]$ are located by either
$\{x^1_i, x^2_i\}$ or $\{\neg x^1_i, \neg x^2_i\}$, 
one of which is a subset of $S$.

\item Second, consider the vertices within $B_j$'s.
The set of vertices in $\{\gamma_j^{0}, \gamma_j^{1}, \gamma_j^{2}, 
\gamma_j^{3}, \gamma_j^4\}$ 
are pairwise located by three vertices in the set included in $S$. 
One of this vertex is $\gamma_j^{0}$ and the other two are from
$\{c_j^1, c_j^2, c_j^3\}$.
For the one vertex in the above set which is not located by
the vertices in $S$ in the clause gadget,
is located by its neighborhood in $V$. 
Finally, any two vertices of the form $c_j^{r}$ and $c_{j'}^{r'}$ 
that are not located by corresponding clause vertices in $S$
are located from one another by the fact they are 
associated with two different variables or two different appearances
of the same variable, and hence by construction,
their neighborhood in $V$ is different.

\item Third, let us consider pairs of vertices not belonging to $S$ and where one vertex belongs to a clause gadget and the other belongs to a variable gadget. To that end, we only need to consider pairs of vertices which are of distance at most~2 from each other, that is, pairs of the form $(c_j^r, x_i^\ell)$ or $(c_j^r, \neg x_i^\ell)$, where $i \in [n]$, $j \in [m]$, $r \in [3]$ and $\ell \in [2]$. Without loss of geenrality, let us consider the pair $(c_j^r, x_i^\ell)$, where $c_j^r, x_i^\ell \notin S$. This implies that the literal $x_i$ does not appear at the $\ell^{th}$ time at the $r^{th}$ position of the clause $C_j$, or else, by the assumption $c_j^r \notin S$, we would have $\pi(x_i) = \true$ making $x_i^\ell \in S$, a contradiction to our assumption. Hence, by construction, the vertices $c_j^r$ and $x_i^\ell$ have different neighborhoods in $V$ and thus, are located by $S$. A similar argument for $(c_j^r, \neg x_i^\ell)$ proves that the latter pair is also located by $S$.  
 
\end{itemize}

This proves that $S$ is a locating set of $G$.
\end{proof}

\begin{lemma}
\label{lem:backward-3,3-sat-to-LD}
If $(G, k)$ is a \yes-instance of \LD,
then $\psi$ is a \yes-instance of \textsc{$(3,3)$-SAT}.
\end{lemma}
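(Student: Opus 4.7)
The plan is to reverse-engineer a satisfying assignment $\pi$ of $\psi$ from any locating-dominating set $S$ of $G$ with $|S|\le k=4n+3m+2p$, by showing that the budget is so tight that $S$ must take one of only two canonical shapes inside each variable gadget and one of only three canonical shapes inside each clause gadget, and that the variable-gadget pattern encodes a truth value.

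First I would show that, without loss of generality, $V\subseteq S$. Each $v_q\in V$ has the private pendant $u_q$, so by Observation~\ref{obs:nbr-of-pendant-vertex-in-sol}, we may assume $v_q\in S$ for every $q\in[2p]$; this already accounts for $2p$ of the budget. Next, for every variable gadget $A_i$, the same pendant trick applied to the three leaves of each claw places $\alpha_i^0$ and $\beta_i^0$ in $S$. I would then show that at least two further vertices from $X_i=\{x_i^1,\neg x_i^1,x_i^2,\neg x_i^2\}$ must lie in $S$: since the three $\alpha$-leaves (respectively the three $\beta$-leaves) share $\alpha_i^0$ (respectively $\beta_i^0$) as a common $S$-neighbour, they can be located from one another only via their adjacencies to $X_i$. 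A local case analysis on the gadget's edges shows that the only pairs of literal vertices that simultaneously locate both the $\alpha$-triple and the $\beta$-triple are $\{x_i^1,x_i^2\}$ and $\{\neg x_i^1,\neg x_i^2\}$, so $|S\cap A_i|\ge 4$ with the intersection in one of these two forms. An analogous argument for each clause gadget gives $\gamma_j^0\in S$ and exactly two of $\{c_j^1,c_j^2,c_j^3\}$ in $S$, so $|S\cap B_j|\ge 3$.

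Summing these lower bounds yields $|S|\ge 2p+4n+3m=k$, so every bound is tight and no vertex of $G$ outside the patterns described above lies in $S$. Define $\pi(x_i)=\true$ exactly when $\{x_i^1,x_i^2\}\subseteq S$, and $\pi(x_i)=\false$ otherwise (in which case $\{\neg x_i^1,\neg x_i^2\}\subseteq S$). To verify that $\pi$ satisfies $\psi$, fix a clause $C_j$ and let $c_j^r$ be the unique vertex of $\{c_j^1,c_j^2,c_j^3\}$ not in $S$. By tightness, $c_j^r$ has no neighbour in $S\setminus V$, so it must be located purely through $N(c_j^r)\cap V$. If $c_j^r$ corresponds to the $\ell$-th positive appearance of some variable $x_i$, then by construction $N(c_j^r)\cap V=N(x_i^\ell)\cap V$; unless $x_i^\ell\in S$, the two vertices $c_j^r$ and $x_i^\ell$ would both be outside $S$ with identical $S$-neighbourhoods, contradicting that $S$ locates $G$. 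Hence $x_i^\ell\in S$, forcing $\pi(x_i)=\true$ and satisfying $C_j$; the negative-appearance case is symmetric.

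The main technical obstacle is the gadget-by-gadget case analysis in the second paragraph, in particular ruling out ``mixed'' selections such as $\{x_i^1,\neg x_i^2\}$ inside a variable gadget, or a clause selection that leaves two of the $c_j^\cdot$ out of $S$. The fact that any two claw-leaves not adjacent to an $S$-vertex in $X_i$ immediately clash, combined with the global counting argument, is exactly what compresses these local distinguishability constraints into a well-defined truth value per variable and a satisfied literal per clause.
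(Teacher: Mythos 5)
Your proposal follows essentially the same route as the paper's proof: force $V$, $\alpha_i^0$, $\beta_i^0$, $\gamma_j^0$ into $S$ via the pendant-vertex observation, use the tight budget $k=4n+3m+2p$ to pin down exactly four vertices per variable gadget and three per clause gadget in canonical shapes, read off the assignment from the chosen literal pair, and satisfy each clause via the unselected $c_j^r$, whose $S$-neighbourhood would coincide with that of the matching literal vertex (identical neighbourhoods in $V$, no $S$-neighbours elsewhere) unless that literal vertex is in $S$.

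One point in your second paragraph needs repair, though it is fixable within the same framework. You assert that the claw leaves ``can be located from one another only via their adjacencies to $X_i$,'' and conclude, \emph{before} invoking the counting argument, that at least two further vertices \emph{from $X_i$} lie in $S$ and that $S\cap A_i$ already has one of the two canonical forms. This overlooks that a leaf can also be located by being placed in $S$ itself: for instance, a set containing $\alpha_i^0,\beta_i^0,\alpha_i^1,\alpha_i^2,\beta_i^1,\beta_i^2$ (and no literal vertex) locates the whole gadget locally, so ``at least two vertices from $X_i$'' is not a valid pre-counting lower bound, and the form restriction cannot be concluded at that stage. The correct order — which is what the paper does — is: first prove the purely numeric bound $|S\cap A_i|\ge 4$ (e.g., besides the two centres, $S$ must hit the two disjoint sets $N[\alpha_i^1]\setminus\{\alpha_i^0\}$ and $N[\alpha_i^2]\setminus\{\alpha_i^0\}$ to separate $\alpha_i^1$ and $\alpha_i^2$ from the pendant $\alpha_i^3$; these hitting sets include the leaves themselves), and similarly $|S\cap B_j|\ge 3$; then the budget forces equality everywhere; and only then does the local case analysis, now with exactly two extra vertices available, rule out leaf-based and mixed selections and force $\{x_i^1,x_i^2\}$ or $\{\neg x_i^1,\neg x_i^2\}$ (and two of the $c_j^r$'s in each clause gadget). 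With this reordering your argument coincides with the paper's; the remaining steps, including the satisfiability argument, are sound as written.
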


\begin{proof}
Suppose $S$ is a locating-dominating set of $G$ of size $k = 4n+3m+2p$. 
Recall that every vertex $v$ in the connection portal $V$
is adjacent to a pendant vertex.
Hence, by Observation~\ref{obs:nbr-of-pendant-vertex-in-sol},
it is safe to assume that $S$ contains $V$.
This implies $|S \setminus V| = 4n + 3m$. 
Using the same observation, it is safe to assume that 
$\alpha^0_i, \beta^0_i$ and $\gamma^0_j$ are in $S$
for every $i \in [n]$ and $j \in [m]$.
As $\alpha^3_i$ is adjacent to only $\alpha^0_1$ in $S$,
set $S \setminus \{\alpha^0_i, \beta^0_i\}$ 
contains at least one vertex in the closed
the neighborhood of $\alpha^1_i$ and one from the closed
neighborhood of $\alpha^2_i$.
By the construction, these two sets are disjoint.
Hence, set $S$ contains at least four vertices from 
variable gadget corresponding to every variable $x_i$ in $X$.
Using similar arguments, $S$ has at least three vertices
from clause gadget corresponding to $C_j$
for every clause in $C$.
The cardinality constraints mentioned above,
implies that 
$S$ contains exactly four vertices from each
variable gadget and exactly three vertices from
each clause gadget.
Using this observation, we prove the following two claims.

\begin{claim}
We may assume that, for each variable gadget corresponding to variable $x_i$,
$S$ contains either $\{\alpha^0_i, \beta^0_i, x^1_i, x^2_i\}$
or $\{\alpha^0_i, \beta^0_i, \neg x^1_i, \neg x^2_i\}$.
\end{claim}
\begin{claimproof}
Note that $S$ is a locating-dominating set of $G$ such that 
$|S \cap A_i| = 4$.
 Let $X_i = \{x_i^1, x_i^2\}$ and $\neg X_i = \{\neg {x}_i^1, 
\neg {x}_i^2\}$.
We prove that $S$ contains exactly one of $X_i$ and $\neg X_i$.
First, we show that $|(S \cap A_i) \setminus 
\{\alpha_i^0,\alpha_i^3,\beta_i^0,\beta_i^3\}| = 2$. 
Define 
$R_i^1 = \{\alpha_i^1,x_i^1,\neg {x}_i^1\}$, 
$R_i^2 = \{\alpha_i^2,x_i^2,\neg {x}_i^2\}$,
$G_i^1 = \{\beta_i^1,\neg {x}_i^1,x_i^2\}$, and
$G_i^2 = \{\beta_i^2, \neg {x}_i^2,x_i^1\}$.
Note that sets $R_i^1$ and $R_i^2$ (respectively, $G_i^1$ and
$G_i^2$) are disjoint.
Also, $(R_i^1 \cap G_i^2) \cup (R_i^2 \cap G_i^1) = \{x_i^1, x_i^2\}$
and $(R_i^1 \cap G_i^1) \cup (R_i^2 \cap G_i^2) = \{\neg x_i^1, \neg x_i^2\}$.
To distinguish the vertices in pairs $(\alpha_i^1,\alpha_i^2)$, 
$(\alpha_i^1,\alpha_i^3)$ and $(\alpha_i^2,\alpha_i^3)$, the set 
$S$ must contain at least one vertex from each
$R_i^1$ and $R_i^2$
or at least one vertex from each $G_i^1$ and $G_i^2$.
Similarly, to distinguish the vertices in pairs 
$(\beta_i^1,\beta_i^2)$, 
$(\beta_i^1,\beta_i^3)$ and $(\beta_i^2,\beta_i^3)$, the set 
$S$ must contain at least one vertex from each
$R_i^1$ and $R_i^2$
or at least one vertex from each $G_i^1$ and $G_i^2$.
As both of these conditions need to hold simultaneously,
$S$ contains either $X_i$ or $\neg X_i$.
\end{claimproof}

\begin{claim}
We may assume that, 
for each clause gadget corresponding to clause $C_j$,
$S$ contains either $\{\gamma^0_i, c^2_i, c^3_i \}$
or $\{\gamma^0_i, c^1_i, c^3_i \}$ or
$\{\gamma^0_i, c^1_i, c^2_i \}$.
\end{claim}

\begin{claimproof}
Note that $S$ is a dominating set that contains $3$ vertices 
corresponding to the clause gadget corresponding to every clause in $C$.
Moreover, $c^1_j, c^2_j, c^3_j$ separate the
vertices in the remaining clause gadget from the rest of the graph. 
Also, as mentioned before, without loss of generality, $C$ contains
$\gamma_j^0$.
We define $C_j^\star = \{c_j^1, c_j^2, c_j^3\}$ and
prove that
$S$ contains exactly two vertices from this set.
Assume that $S \cap C_j^\star = \emptyset$,
then $S$ needs to include all the three vertices $\gamma^1_j, \gamma^2_j, \gamma^3_j$
to locate every pair of vertices in the clause gadget.
This, however, contradicts the cardinality constraints.
Assume that  $|S \cap C_j^\star| = 1$ and
without loss of generality, suppose $S \cap C_j^\star = 
\{c_j^2\}$. 
If $S \cap \{\gamma_j^{1}, \gamma_j^{2}, \gamma_j^{3}\} = \gamma_j^{2}$, 
then the pair $(\gamma_j^{1}, \gamma_j^{3})$ is not located by $S$, a 
contradiction. 
Now suppose $S \cap \{\gamma_j^{1}, \gamma_j^{2}, \gamma_j^{3}\} = \gamma_j^{1}$,
then the pair $(\gamma_j^{2}, \gamma_j^{4})$ is not located by $S$, again
a contradiction. 
The similar argument holds for other cases.
As both cases, this leads to contradictions, we have 
$|S \cap C_j^\star| = 2$.
\end{claimproof}

Using these properties of $S$, we present a way to construct 
an assignment $\pi$ for $\phi$.
If $S$ contains $\{\alpha^0_i, \beta^0_i, x^1_i, x^2_i\}$,
then set $\pi(x_i) = \true$, otherwise set $\pi(x_i) = \false$.
The first claim ensures that this is a valid assignment.
We now prove that this assignment is also a satisfying assignment.
The second claim implies that for any clause gadget corresponding
to clause $C_j$, there is exactly one vertex \emph{not} adjacent to 
any vertex in $S \cap B_j$.
Suppose $c_j^{r}$ is such a vertex and it is the $\ell^{th}$ positive
appearance of variable $x_i$.
Since $x_i^{\ell}$ and $c_j^{r}$ have identical neighborhood in $V$,
and $S$ is a locating-dominating set in $G$,
$S$ contains $x_i^{\ell}$, and hence $\pi(x_i) = \true$.
A similar reason holds when $x_i$ appears negatively.
Hence, every vertex in the clause gadget that is not dominated by vertices
of $S$ in the clause gadget corresponds to the variable
set to \true\, and this makes the clause satisfied.
This implies $\pi$ is a satisfying assignment of $\phi$
which concludes the proof of the lemma.
\end{proof}

We are now in a position to prove the conditional lower bounds about \LD\
mentioned in Theorem~\ref{thm:TW}: \LD does not admit an algorithm 
running in time  $2^{2^{o(\tw)}} \cdot {\rm poly}(n)$.


\begin{proof}[Proof of the \LD part of Theorem~\ref{thm:TW}]
Assume that there is an algorithm $\calA$ that, given an instance 
$(G, k)$ of \LD, runs in time $2^{2^{o(\tw)}} \cdot n^{\calO(1)}$
and correctly determines whether it is \yes-instance.
Consider the following algorithm that takes as input
an instance $\phi$ of \textsc{$(3, 3)$-SAT} and determines
whether it is a \yes-instance.
It first constructs an equivalent instance $(G, k)$ of \LD\
as mentioned in this subsection.
Then, it calls algorithm $\calA$ as a subroutine and returns 
the same answer.
The correctness of this algorithm follows from the correctness
of algorithm $\calA$, Lemma~\ref{lem:forward-3,3-sat-to-LD} 
and Lemma~\ref{lem:backward-3,3-sat-to-LD}.
Note that since each component of $G - V$ is of constant order,
the vertex integrity (and thus treewidth) of $G$ is $\calO(|V|)$.
By the asymptotic estimation of the central binomial coefficient, 
$\binom{2p}{p}\sim \frac{4^p}{\sqrt{\pi \cdot p}}$~\cite{Sperner}. 
To get the upper bound of $2p$, we scale down the asymptotic function and have that $4n \leq \frac{4^p}{2^p}=2^p$.
As we choose the smallest possible value of $p$ such that $2^p \geq 4n$, we can choose $p = \log n + 3$. 
Therefore, $p=\calO(\log (n))$. 
And hence, $|V| = \calO(\log (n))$ which implies $\tw(G) = \calO (\log n)$.
As the other steps, including the construction of the instance 
of \LD, can be performed in the polynomial step,
the running time of the algorithm for \textsc{$(3, 3)$-SAT}
is $2^{2^{o(\log (n))}} \cdot n^{\calO(1)} = 2^{o(n)} \cdot n^{\calO(1)}$.
This, however, contradicts Proposition~\ref{prop:3-3-SAT-ETH-lb}.
Hence, our assumption is wrong, and
\LD\ does not admit an algorithm running in time
$2^{2^{o(tw)}} \cdot |V(G)|^{\calO(1)}$, 
unless the {\ETH} fails.
\end{proof}

\section{\LD Parameterized by the Solution Size}
\label{sec:LD-sol-size}

In this section, we study the parameterized complexity of 
\LD when parameterized 
by the solution size $k$.
In the first subsection, we formally prove that the problem admits a kernel
with $\calO(2^{k})$ vertices, and hence a simple \FPT\
algorithm running in time $2^{\calO(k^2)}$.
In the second subsection, we prove that both results mentioned above are optimal
under the \ETH.

\subsection{Upper bound}

\begin{proposition}\label{prop:LD-algo}
\LD admits a kernel with $\calO(2^k)$ vertices and 
an algorithm running in time $2^{\calO(k^2)}+\calO(k\log n)$.
\end{proposition}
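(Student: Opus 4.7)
The plan is to establish both parts from the elementary observation that, in any locating-dominating set $S$, the map sending each $u \in V(G) \setminus S$ to its trace $N(u) \cap S$ is injective and never empty. Hence, if $|S| \le k$, there are at most $2^k - 1$ vertices outside $S$, and so a \yes-instance must satisfy $|V(G)| \le 2^k + k - 1$.

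This immediately gives the kernel: the reduction rule reads ``given $(G,k)$, compare $n = |V(G)|$ with $2^k + k - 1$; if $n > 2^k + k - 1$ return a trivial \no-instance, otherwise return $(G,k)$ unchanged''. The resulting kernel has $\calO(2^k)$ vertices. The only mildly subtle point is to keep the size comparison out of the $2^{\calO(k^2)}$ factor in the running time: since we only compare the integer $n$ against $2^k + k - 1$, we do not need to scan the adjacency structure of $G$, and the comparison can be carried out in $\calO(k + \log n) = \calO(k\log n)$ time.

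For the algorithm, I would, in a second phase executed only when $n \le 2^k + k - 1$, enumerate every subset $S \subseteq V(G)$ with $|S| \le k$. The number of candidates is bounded by
\[
\sum_{i=0}^{k}\binom{2^k + k - 1}{i} \;\le\; (k+1)\,(2^k + k)^k \;=\; 2^{\calO(k^2)}.
\]
For each candidate $S$, one computes the sets $N(u) \cap S$ for all $u \notin S$ in $\calO(k\cdot 2^k) = 2^{\calO(k)}$ time and verifies, by lexicographic sorting of the resulting $2^k$-bit vectors, that they are pairwise distinct and non-empty, again in $2^{\calO(k)}$ time. Combining the enumeration with the initial comparison yields the stated running time $2^{\calO(k^2)} + \calO(k \log n)$. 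There is no real obstacle in this proof beyond the bookkeeping just described, which ensures that the additive $\calO(k \log n)$ term is the only dependence on $n$.
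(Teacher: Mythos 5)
Your proposal is correct and follows essentially the same route as the paper: it uses the bound $n\leq 2^k+k-1$ (which the paper cites from Slater and you reprove directly via injectivity of the trace map $u\mapsto N(u)\cap S$), rejects larger instances to get the $\calO(2^k)$-vertex kernel, and then brute-forces all subsets of size at most $k$ for the $2^{\calO(k^2)}$ running time. The minor extra bookkeeping you supply (the $\calO(k\log n)$ comparison and the per-subset verification cost) matches the paper's accounting.
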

\begin{proof}
Slater proved that for any graph $G$ on $n$ vertices with a locating-dominating set of size $k$, we have $n\leq 2^k+k-1$~\cite{slater1988dominating}.
Hence, if $n>2^k+k-1$, we can return a trivial \no\ instance (this
check takes time $\calO(k\log n)$). 
Otherwise, we have a kernel with $\calO(2^k)$ vertices.
In this case, we can enumerate all
subsets of vertices of size $k$, and for each of them, check in
quadratic time if it is a valid solution. Overall, this takes time
${n\choose k}n^2$; since $n\leq 2^k+k-1$, this is
${2^{\calO(k)}\choose k} \cdot 2^{\calO(k)}$, which is $2^{\calO(k^2)}$.
\end{proof}

\subsection{Lower Bound}
\label{subsec:LD-lower-bound-solution-size}

In this section, we prove Theorem~\ref{thm:locating-dom-set-sol-size-lb} which
states that both the results mentioned in the previous subsection are optimal
under the \ETH. We restate it for the reader's convenience.

\locdomsetsolution*

\subparagraph*{Reduction.}

To prove the theorem,  we present a reduction that takes as input 
an instance $\psi$, with $n$ variables, of \textsc{3-SAT} and returns an instance 
$(G, k)$ of \LD such that $|V(G)| = 2^{\calO(\sqrt{n})}$ and $k = \calO(\sqrt{n})$.
By adding dummy variables in each set, we can assume that $\sqrt{n}$ is an even integer.
Suppose the variables are named $x_{i, j}$ for $i, j \in [\sqrt{n}]$.

We will make use of \emph{bit-representation gadgets}, as formalized in~\cite{FoucaudGK0IST24}, which are useful to ensure that certain sets of vertices are located from each other and also from the rest of the graph. For a set $X$ to be located, a corresponding bit-representation gadget is of logarithmic size (in terms of the size of $X$) and assign a distinct set of neighbors of the gadget to each vertex of $X$. By attaching pendant vertices to each vertex in the bit-representation gadget, we ensure (using Observation~\ref{obs:nbr-of-pendant-vertex-in-sol}) that they can be assumed to be part of any solution and hence, locate $X$.

The reduction constructs graph $G$ as follows:
\begin{itemize} 
\item 
It partitions the variables of $\psi$ into $\sqrt{n}$ many \emph{buckets}
$X_1,  X_2,  \dots ,X_{\sqrt{n}}$ such that each bucket contains exactly $\sqrt{n}$ many variables.
Let $X_i = \{x_{i, j}\ |\ j \in [\sqrt{n}]\}$ for all $i\in [\sqrt{n}]$.
\begin{itemize}
\item For every $X_i$, it constructs set $A_i$ of $2^{\sqrt{n}}$ new vertices, 
$A_i=\{a_{i,  \ell}\mid \ell \in [2^{\sqrt{n}}]\}$.
Each vertex in $A_i$ corresponds to a unique assignment of variables in $X_i$.
Let $A$ be the collection of all the vertices added in this step.

\item For every $X_i$, the reduction adds a path on three vertices $b^{\circ}_i$, $b'_i$, and $b^{\star}_i$ 
with edges $(b^{\circ}_i, b'_i)$ and $(b'_i, b^{\star}_i)$. 
Suppose $B$ is the collection of all the vertices added in this step.
\item For every $X_i$, it makes $b^{\circ}_i$ adjacent to every vertex in $A_i$.
\end{itemize}

\item 
For every clause $C_j$, the reduction adds a pair of vertices $c^{\circ}_j, c^{\star}_j$.
For a vertex $a_{i, \ell} \in A_i$ for some $i \in [\sqrt{n}]$, and $\ell \in [2^{\sqrt{n}}]$,
if the assignment corresponding to vertex $a_{i, \ell}$ satisfies clause $C_j$, 
then it adds edge $(a_{i, \ell}, c^{\circ}_j)$.

\item The reduction adds a bit-representation gadget to locate set $A$.
Once again, informally speaking, it adds some supplementary vertices such 
that it is safe to assume that these vertices are present in a locating-dominating set, and they locate every vertex in $A$. More precisely:
\begin{itemize}
\item First, set $q := \lceil \log(|A|) \rceil+1$.
This value for $q$ allows to uniquely represent each integer in $[|A|]$ by 
its bit-representation in binary (starting from $1$ and not $0$). 
\item For every $i \in [q]$, the reduction adds two vertices $y_{i, 1}$ and $y_{i, 2}$ and edge $(y_{i, 1},  y_{i, 2})$.
\item For every integer $q' \in [|A|]$, let $\bit(q')$ denote the binary representation of $q'$ using $q$ bits.
Connect $a_{i, \ell} \in A$ with $y_{i, 1}$ if the $i^{th}$ bit in $\bit((i - 1) \cdot 2^{\sqrt{n}} + \ell)$ is $1$.
\item It adds two vertices $y_{0, 1}$ and $y_{0, 2}$, and edge $(y_{0, 1}, y_{0, 2})$.
It also makes every vertex in $A$ adjacent to $y_{0, 1}$.

Let $\bitrep(A)$ be the collection of vertices $y_{i,1}$ added in this step, and let $Y$ be the collection of vertices $y_{i,1}$ and $y_{i,2}$ added in this step (over all $i \in \{0\} \cup [q]$).
\end{itemize}

\item Finally, the reduction adds a bit-representation gadget to locate set $C$.
However, it adds the vertices in such a way that for any pair $c^{\circ}_j, c^{\star}_j$, the supplementary vertices
adjacent to them are identical.
\begin{itemize}
\item The reduction sets $p := \lceil \log(|C|/2) \rceil+1$ and
for every $i \in [p]$, it adds two vertices $z_{i, 1}$ and $z_{i, 2}$ and edge $(z_{i, 1},  z_{i, 2})$.
\item For every integer $j \in [|C|/2]$, let $\bit(j)$ denote the binary representation of $j$ using $p$ bits.
Connect $c^{\circ}_{j}, c^{\star}_j \in C$ with $z_{i, 1}$ if the $i^{th}$ bit in $\bit(j)$ is $1$.
\item It adds two vertices $z_{0, 1}$ and $z_{0, 2}$, and edge $(z_{0, 1}, z_{0, 2})$.
It also makes every vertex in $C$ adjacent to $y_{0, 1}$.

Let $\bitrep(C)$ be the collection of the vertices $z_{i,1}$ 
added in this step, and let $Z$ be the collection of vertices $z_{i,1}$ and $z_{i,2}$ added in this step (over all $i \in \{0\} \cup [q]$).
\end{itemize}
\end{itemize}

This completes the construction; see Figure~\ref{fig:locating-dom-set-sol-size} for an illustration.

\begin{figure}[t]
\centering
\includegraphics[scale=0.75]{./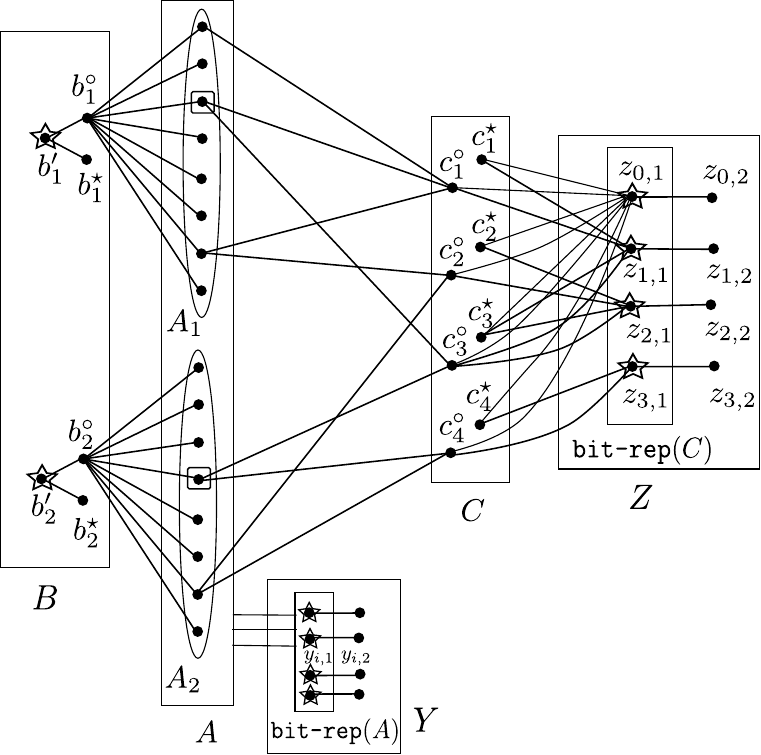}
\caption{An illustrative example of the graph constructed by the reduction used in Section~\ref{subsec:LD-lower-bound-solution-size}.
Suppose an instance $\psi$ of \textsc{3-SAT} has $n = 9$ variables and $4$ clauses.
We do not show the third variable bucket and explicit edges across $A$ and $\bitrep(A)$ for brevity. Vertices with a star boundary are those that we can assume to be
in any locating-dominating set, without loss of generality.
The square boundaries correspond to the selection of other vertices in the solution.
\label{fig:locating-dom-set-sol-size}}
\end{figure}

The reduction sets 
$$k = |B|/3 + (\lceil \log(|A|) \rceil + 1 + 1) + \lceil (\log(|C|/2) \rceil + 1 + 1) + \sqrt{n} = \calO(\sqrt{n})$$
as $|B| = 3\sqrt{n}$, $|A| = \sqrt{n} \cdot 2^{\sqrt{n}}$, and $|C| = \calO(n^3)$,
and returns $(G, k)$ as a reduced instance. 



The vertices of $G$ are naturally partitioned into the five sets
$A$, $B$, $C$, $Y$ and $Z$ defined above. 
Furthermore, we partition $B$ into $B'$, $B^{\circ}$ and $B^{\star}$
as follows: $B' = \{b'_i \mid i \in [\sqrt{n}]\}$,
$B^{\circ} = \{b^{\circ}_i \mid i \in [\sqrt{n}]\}$, and
$B^{\star} = \{b^{\star}_i \mid i \in [\sqrt{n}]\}$.
Define $Y_1$, $Y_2$, $Z_1$ and $Z_2$ in a similar way.
Note that $B^{\star}$, $Y_2$, and $Z_2$ together contain
all pendant vertices.

\begin{lemma}
\label{lemma:locating-dom-set-sol-size-forward}
If $\psi$ is a \yes-instance of \textsc{3-SAT}, then 
$(G, k)$ is a \yes-instance of \LD.
\end{lemma}
\begin{proof}
  Suppose $\pi: X \mapsto \{\true, \false\}$ is a satisfying assignment for $\psi$.
Using this assignment, we construct a locating-dominating set $S$ of $G$ of size at most $k$.
Initialise set $S$ by adding all the vertices in $B' \cup Y_1 \cup Z_1$.
At this point, the cardinality of $S$ is $k - \sqrt{n}$.
We add the remaining $\sqrt{n}$ vertices as follows:
Partition $X$ into $\sqrt{n}$ parts $X_1,\dots, X_i,  \dots, X_{\sqrt{n}}$ as specified in the reduction, and
define $\pi_i$ for every $i \in [\sqrt{n}]$ by restricting the assignment $\pi$
to the variables in $X_i$.
By the construction of $G$, there is a vertex $a_{i, \ell}$ in $A$ corresponding to 
the assignment $\pi_i$.
Add that vertex to $S$.
It is easy to verify that the size of $S$ is at most $k$.
We next argue that $S$ is a locating-dominating set.

The vertices in $B$ are located from all other vertices by the
vertices in $B'$. Moreover, pairs of the form
$\{\{b^{\circ}_i,\{b^{\star}_i\}$ are located by the vertices in
$A\cap S$.

Consider set $A$. 
By the property of a bit-representation gadget,
every vertex in $A$ is adjacent to a unique set of vertices 
in $\bitrep(A)$. 
Consider a vertex $a_{i,\ell}$ in $A$ such that the bit-representation
of $((i - 1) \cdot 2^{\sqrt{n}} + \ell)$ contains a single $1$ at the $j^{th}$ location.
Hence, both $y_{j, 2}$ and $a_{i,\ell}$ are adjacent to the same 
vertex, viz $y_{j, 1}$ in  $\bitrep(A) \setminus \{y_{0, 1}\}$.
However, this pair of vertices is located by $y_{0, 1}$ which 
is adjacent to $a_{i,\ell}$ and \emph{not} to $y_{j, 2}$.
Also, as the bit-representation of vertices starts from $1$,
there is no vertex in $A$ which is adjacent to only $y_{0, 1}$ in 
$\bitrep(A)$. Hence, $\bitrep(A)$ locates all pairs of vertices in $A\cup Y$.

Using similar arguments for $C$, $\bitrep(C)$ and $Z$ and the properties of 
bit-representation gadgets, we can conclude that 
$\bitrep(A) \cup \bitrep(C)$ locates all pairs of vertices of $G$, 
apart from the pairs of the form $(b^{\circ}_j, b^{\star}_j)$
and $(c^{\circ}_j, c^{\star}_j)$. 

\begin{table}[t]
\centering
\begin{tabular}{|l|c|c|c|c|c|c|c|c|c|c|}
\hline
Set & $B'$ & $B^{\circ}$ & $B^{\star}$ & $A$ & $Y_1$ & $Y_2$ & $C^{\circ}$ & $C^{\star}$ & $Z_1$ & $Z_2$ \\  
\hline
Dominated by & $B'$ & $B'$ & $B'$ & $Y_1$ & $Y_1$ & $Y_1$ & $Z_1$ & $Z_1$ & $Z_1$ & $Z_1$ \\  
\hline
Located by & - & $B' + A$ & $B'$ & $Y_1$ & - & $Y_1$ & $Z_1 + A$ & $Z_1$ & - & $Z_1$ \\  
\hline
\end{tabular}
\caption{Partition of $V(G)$ and the corresponding set that dominates and locates the vertices in each part.
\label{table:loc-dom-set-sol-size} }
\end{table}

By the construction, the sets mentioned
in the second row of Table~\ref{table:loc-dom-set-sol-size}
dominate the vertices mentioned in the sets in the respective first rows.
Hence, $S$ is a dominating set.
We need to prove that the location of only those vertices of a set which are dominated by other vertices of the same set.
First, consider the vertices in $B^{\circ} \cup B^{\star}$.
Recall that every vertex of the form $b^{\circ}_i$ and $b^{\star}_i$ is
adjacent to $b'_i$ for every $i \in [\sqrt{n}]$.
Hence, the only pair of vertices that needs to be located are of the form
$b^{\circ}_i$ and $b^{\star}_i$.
However, as $S$ contains at least one vertex from $A_i$, a vertex in $S$ is adjacent to $b^{\circ}_i$ and not adjacent to $b^{\star}_i$.
Now, consider the vertices in $A$ and $Y_2$.
Note that every vertex in $Y_2$ is adjacent to precisely one vertex in $Y_1$. However, every vertex in $A$ is adjacent to at least two vertices in $Y_1$ (one of which is $y_{0, 1}$).
Hence, every vertex in $A \cup Y_2$ is located.
Using similar arguments, every vertex in $C \cup Z_2$ is also located.

The only thing that remains to argue is that every pair of vertices $c^{\circ}_j$ and $c^{\star}_j$ is located.
As $\pi$ is a satisfying assignment, at least one of its restrictions, say $\pi_i$, is a satisfying
assignment for clause $C_j$.
By the construction of the graph, the vertex corresponding to $\pi_i$ is adjacent to $c^{\circ}_j$
but not adjacent to $c^{\star}_j$.
Also, such a vertex is present by the construction of $S$.
Hence, there is a vertex in $S \cap A_i$ that locates $c^{\circ}_j$ from $c^{\star}_j$.
This concludes the proof that $S$ is a locating-dominating set of $G$ of size $k$.
Hence, if $\psi$ is a \yes-instance of \textsc{3-SAT}, then $(G, k)$ is a \yes-instance
of \LD.
\end{proof}

\begin{lemma}
\label{lemma:locating-dom-set-sol-size-backward}
If $(G, k)$ is a \yes-instance of \LD, then $\psi$ is a \yes-instance of \textsc{3-SAT}
\end{lemma}
\begin{proof}
Suppose $S$ is a locating-dominating set of $G$ of size at most $k$.
We construct a satisfying assignment $\pi$ for $\psi$.
Recall that $B^{\star}_i$, $Y_2$, and $Z_2$ contain exactly all pendant vertices of $G$.
By Observation~\ref{obs:nbr-of-pendant-vertex-in-sol}, it is safe to assume that
every vertex in $B'$, $Y_1$, and $Z_1$ is present in $S$.

Consider the vertices in $B^{\circ}$ and $B^{\star}$.
As mentioned before, every vertex of the form $b^{\circ}_i$ and $b^{\star}_i$ 
is adjacent to $b'_i$, which is in $S$.
By the construction of $G$, only the vertices in $A_i$ are adjacent to $b^{\circ}_i$
but not adjacent to $b^{\star}_i$.
Hence, $S$ contains at least one vertex in $A_i \cup \{b^{\circ}_i, b^{\star}_i\}$.
As the number of vertices in $S \setminus (B' \cup Y_1 \cup Y_2)$ is at most $\sqrt{n}$, 
$S$ contains exactly one vertex from $A_i \cup \{b^{\circ}_i, b^{\star}_i\}$ for every $i \in [\sqrt{n}]$.
Suppose $S$ contains a vertex from $\{b^{\circ}_i, b^{\star}_i\}$.
As the only purpose of this vertex is to locate a vertex in this set, 
it is safe to replace this vertex with any vertex in $A_i$. 
Hence, we can assume that $S$ contains exactly one vertex in $A_i$ for every $i \in [\sqrt{n}]$.

For every $i \in [\sqrt{n}]$, let $\pi_i: X_i \mapsto \{\true, \false\}$
be the assignment of the variables in $X_i$ corresponding to the vertex of $S\cap A_i$.
We construct an assignment $\pi: X \mapsto \{\true, \false\}$ 
such that that $\pi$ restricted to $X_i$ is identical to $\pi_i$.
As $X_i$ is a partition of variables in $X$, and every vertex in $A_i$
corresponds to a valid assignment of variables in $X_i$,
it is easy to see that $\pi$ is a valid assignment.
It remains to argue that $\pi$ is a satisfying assignment.
Consider a pair of vertices $c^{\circ}_j$ and $c^{\star}_j$ 
corresponding to clause $C_j$.
By the construction of $G$, both these vertices have identical 
neighbors in $Z_1$, which is contained in $S$.
The only vertices that are adjacent to $c^{\circ}_j$ and
are not adjacent to $c^{\star}_j$ are in $A_i$ for some $i\in [\sqrt{n}]$
and correspond to some assignment that satisfies clause $C_j$.
As $S$ is a locating-dominating set of $G$, there is at least one vertex
in $S \cap A_i$, that locates  $c^{\circ}_j$ from $c^{\star}_j$.
Alternately, there is at least one vertex in $S \cap A_i$ that corresponds 
to an assignment that satisfies clause $C_j$.
This implies that if $(G, k)$ is a \yes-instance of \LD,
then $\phi$ is a \yes-instance of \textsc{3-SAT}.
\end{proof}

We are now in a position to prove Theorem~\ref{thm:locating-dom-set-sol-size-lb} 
which states that unless the \ETH\ fails, \LD 
does not admit an algorithm running in time $2^{o(k^2)} \cdot n^{\calO(1)}$,
nor a polynomial-time kernelization algorithm that reduces the solution size 
and outputs a kernel with $2^{o(k)}$ vertices.

\begin{proof}[Proof of Theorem~\ref{thm:locating-dom-set-sol-size-lb}]
Assume that there exists an algorithm, say $\calA$, that takes as input an instance $(G, k)$
of \LD and correctly concludes whether it is a \yes-instance
in time $2^{o(k^2)}\cdot |V(G)|^{\calO(1)}$.
Consider algorithm $\calB$ that takes as input an instance $\psi$ of \textsc{3-SAT},
uses the reduction above to get an equivalent instance $(G, k)$ of 
\LD, and then uses $\calA$ as a subroutine.
The correctness of algorithm $\calB$ follows from Lemma~\ref{lemma:locating-dom-set-sol-size-forward},
Lemma~\ref{lemma:locating-dom-set-sol-size-backward}, 
and the correctness of algorithm $\calA$.
From the description of the reduction and the fact that $k = \sqrt{n}$, 
the running time of algorithm $\calB$ is 
$2^{\calO{(\sqrt{n})}} + 2^{o(k^2)} \cdot (2^{\calO{(\sqrt{n})}})^{\calO(1)} = 2^{o(n)}$.
This, however, contradicts the \ETH.
Hence, \LD does not admit an algorithm with running time $2^{o(k^2)} \cdot |V(G)|^{\calO(1)}$
unless the \ETH\ fails.
 
For the second part of Theorem~\ref{thm:locating-dom-set-sol-size-lb}, assume that such a kernelization algorithm exists.
Consider the following algorithm for \textsc{$3$-SAT}.
Given a \textsc{$3$-SAT} formula on $n$ variables, 
it uses the above reduction to get an equivalent instance of $(G, k)$ such that 
$|V(G)| = 2^{\calO(\sqrt{n})}$ and $k = \calO(\sqrt{n})$.
Then, it uses the assumed kernelization algorithm
to construct an equivalent instance $(H, k')$ such that 
$H$ has $2^{o(k)}$ vertices and $k' \le k$.
Finally, it uses a brute-force algorithm,  running in time $|V(H)|^{\calO(k')}$,
to determine whether the reduced instance, equivalently the input
instance of \textsc{$3$-SAT},  is a \yes-instance.
The correctness of the algorithm follows from the 
correctness of the respective algorithms and our assumption.
The total running time of the algorithm is 
$2^{\calO(\sqrt{n})} + (|V(G)| + k)^{\calO(1)} + |V(H)|^{\calO(k')}
= 2^{\calO(\sqrt{n})} + (2^{\calO(\sqrt{n})})^{\calO(1)} + (2^{o(\sqrt{n})})^{\calO(\sqrt{n})}
= 2^{o(n)}$.
This, however, contradicts the \ETH.
Hence, \LD does not admit a polynomial-time kernelization algorithm that reduces 
the solution size and returns a kernel with $2^{o(k)}$ vertices
unless the \ETH\ fails.
\end{proof}

\section{Modifications for \TCPB}
\label{sec:modifications-for-test-cover}

In this section, we present the small (but crucial) modifications
required to obtain similar results as mentioned in previous 
section for \TCPB.
As in this problem, one only need to `locate' elements, 
the reductions in this case are often simpler than
the corresponding reductions for \LD\ to prove a conditional lower
bound. 
We restate the problem definition for the readers' convenience.

\defproblem{\TCPB}{A set of items $U$, a collection of subsets of $U$ called \emph{tests} and denoted by $\calF$, and an integer $k$.}{Does 
there exist a collection of at most $k$ tests
such that for each pair of items, there is a test 
that contains exactly one of the two items?}

\noindent We also recall the incidence (bipartite) graph
$G$ on $n$ vertices with bipartition $\langle R, B \rangle$
of $V(G)$ such that sets $R$ and $B$ contain a vertex
for every set in $\calF$ and for every item in $U$, respectively,
and $r \in R$ and $b \in B$ are adjacent 
if and only if the set corresponding to $r$ contains 
the element corresponding to $b$.
We note that description of the incidence (bipartite) graph $G$
is sufficient to describe the instance of \TCPB.
We find it notationally cleaner to work with encoding of the instance.

\subsection{Parameterization by Treewidth}

\subsubsection{Upper Bound}
\label{sec:test-cover-tw-algo}
In this section, we outline the dynamic programming approach for {\TCPB} and use it to prove the following theorem.

\begin{theorem}
\label{thm:TC-tw-algo}
{\TCPB} admits an algorithm running in time 
\(2^{2^{\calO(\tw)}} \cdot n^{\calO(1)}\),  
where \(\tw\) is the treewidth and \(n\) is the order of the incidence graph of the input.
\end{theorem}

Let \(G\) be a bipartite graph on \(n\) vertices with a bipartition \(V(G) = \langle R, B \rangle\).  
The objective is to decide whether there exists a set of at most \(k\) vertices from \(R\) such that, for every pair \(x, y \in B\), there exists \(r \in R\) with \(rx \in E(G)\) but \(ry \notin E(G)\).  
Without loss of generality, we assume that we are given a nice tree-decomposition \(\calT = (T, \{X_t\}_{t \in V(T)})\) of \(G\) of width at most \(2{\tw}(G) + 1\).

The overall structure of the algorithm and the motivation for defining the DP-states follows the framework in Section~\ref{sub-sec:lds-tw-algo}.  
For completeness, we recall the relevant notions, adapted to the present setting.  
As before, for each \(t \in V(T)\), let \(G_t\) denote the subgraph of \(G\) corresponding to the subtree of \(T\) rooted at \(t\).  
We define \(R_t := R \cap V(G_t)\) and \(B_t := B \cap V(G_t)\).

\smallskip
For each node \(t\), the DP-state is defined in terms of a \emph{valid tuple}, which encodes the interaction between the partial solution and the vertices in the current bag.

\begin{definition}[Valid Tuple]
Let \( t \in V(T) \) be a node in the tree-decomposition. 
A tuple \(\langle Y, W, W_0, \calY, \mathbb{W} \rangle\) is \emph{valid} at \(t\) if:
\begin{itemize}[nolistsep]
	\item \(Y \subseteq R \cap X_t\) and \(W \subseteq B \cap X_t\),  
	      with \(N(Y) \cap X_t \subseteq W\) and \(W_0 \subseteq W\);
	\item \(\calY\) is a family of subsets of \(Y\);
	\item \(\mathbb{W}\) is a collection of  
	      pairs \((w_1, w_2)\) with \(w_1, w_2 \in W\), or  
	      \((w_1, +)\) with \(w_1 \in W\).
\end{itemize}
\end{definition}

\begin{definition}[Candidate Solution]
\label{def:candidate-sol-test-cover}
Let \( \tau = \langle Y, W, W_0, \calY, \mathbb{W} \rangle \) be a valid tuple at node \(t\).  
A set \(S_t \subseteq R_t\) is a \emph{candidate solution} for \(\tau\) if:
\begin{enumerate}[nolistsep]
  \item \(S_t\) locates all vertices in \(B_t \setminus X_t\);  
        that is, for any distinct \(u, v \in B_t \setminus (S_t \cup X_t)\),  
        both \(N_{G_t}(u) \cap S_t\) and \(N_{G_t}(v) \cap S_t\) are non-empty and distinct.
  \item \(Y = S_t \cap X_t\), and \(W = N(S_t) \cap X_t\).  
        Moreover, every vertex \(w \in W \setminus W_0\) has a neighbour in \(S_t \setminus Y\).
  \item \(A \in \calY\) if and only if there exists a unique \(u \in B_t \setminus (S_t \cup X_t)\)  
        such that \(N_{G_t}(u) \cap S_t = A\).
  \item For \(\mathbb{W}\):  
        \begin{itemize}[nolistsep]
          \item \((w_1, w_2) \in \mathbb{W}\) if and only if \(w_1, w_2 \in W\) and  
                \(N_{G_t}(w_1) \cap S_t = N_{G_t}(w_2) \cap S_t\);
          \item \((w_1, +) \in \mathbb{W}\) if and only if $w_1\in W$ and there exists a unique  
                \(u \in V(G_t) \setminus (S_t \cup X_t)\) with  
                \(N_{G_t}(w_1) \cap S_t = N_{G_t}(u) \cap S_t\).
        \end{itemize}
\end{enumerate}
\end{definition}

For a valid tuple \(\tau\) at node \(t\), we define \(\dptw[t, \tau]\) as the minimum size of a candidate solution for \(\tau\); if none exists, set \(\dptw[t, \tau] := \infty\).

The subsequent case analysis and recursive algorithms (and their proof of correctness) 
follow exactly as in Section~\ref{sub-sec:lds-tw-algo},  
with the additional restriction that \(Y \subseteq R\) and \(W \subseteq B\).  
Any DP-state violating these constraints can be discarded immediately.  
For instance, if a vertex \(x \in B\) is introduced at \(t\),  
the case \(x \in Y\) is ignored.  
Since this condition can be checked locally, the remainder of the argument proceeds as before, establishing Theorem~\ref{thm:TC-tw-algo}.

\subsubsection{Lower Bound}
In this subsection, we prove the lower bound
mentioned in Theorem~\ref{thm:TW}, 
which we restate for convenience.

\begin{theorem*}[Restating part of Theorem~\ref{thm:TW}]
Unless the \ETH\ fails,
\TCPB\ does not admit an algorithm 
running in time  $2^{2^{o(\tw)}} \cdot {\rm poly}(n)$, where $\tw$ is the treewidth and $n$ is the order of the incidence graph of the input.
\end{theorem*}

The reduction is a modification
of the reduction presented in Subsection~\ref{sub-sec:locating-dom-tw-lb}.

\subparagraph*{Reduction.}
For notational convenience, instead of specifying an instance
of \TCPB, we specify the incidence graph as mentioned 
in the definition.
The reduction takes as input an instance $\psi$ of 
\textsc{$(3,3)$-SAT} with $n$ variables and 
outputs an instance $(G, \langle R, B\rangle, k)$ of \TCPB\ such that $\tw(G) = \calO(\log(n))$. 
Suppose $X = \{x_1, \dots, x_n\}$ is the collection 
of variables and $C = \{C_1, \dots, C_m\}$ is the collection
of clauses in $\psi$.
Here, we consider 
$\langle x_1, \dots, x_n\rangle$ and
$\langle C_1, \dots, C_m \rangle$ be arbitrary but fixed
orderings of variables and clauses in $\psi$.
For a particular clause, the first order specifies
the first, second, or third (if it exists) variable in the clause 
in a natural way.
The second ordering specifies the first/second positive/negative
appearance of variables in $X$ in a natural way. 

The reduction constructs a graph $G$  
after initializing $R = B = \emptyset$. We again make use of the set-representation gadget from~\cite{FoucaudGK0IST24}, as in Section~\ref{sub-sec:locating-dom-tw-lb}.
\begin{itemize}
\item 
To construct a variable gadget for $x_i$,
it adds a blue vertex $\beta_i$ to $B$,
two red vertices $\alpha^1_i, \alpha^2_i$ to $R$, and
four blue vertices  $x_i^1, \neg x_i^1, x_i^2, \neg x_i^2$ to $B$.
It then adds the edges as shown in Figure~\ref{fig:test-cover-tw-lb}.
Let $A_i$ be the collection of these seven vertices.
Define $X_i := \{x_i^1, \neg x_i^1, x_i^2, \neg x_i^2\}$.

\item To construct a clause gadget for $C_j$, the reduction
adds three blue vertices $\{\gamma_j^1, \gamma_j^2, \gamma_j^3\}$ 
to $B$,
three red vertices $\{\delta_j^1, \delta_j^2, \delta_j^3\}$ 
to $R$, and three vertices
$c^1_j, c^2_j, c^3_j$ to $B$.
It adds the edges as shown in Figure~\ref{fig:test-cover-tw-lb}. 
Let $B_j$ be the collection of these nine vertices.

\item Let $p$ be the smallest positive integer such that $4n \leq 
\binom{2p}{p}$. 
Define $\calF_p$ as the collection of subsets of $[2p]$ that contains 
exactly $p$ integers (such a collection $\calF_p$ is called a \emph{Sperner 
family}). 
Define $\setrep: \bigcup_{i=1}^n X_i \mapsto 
\calF_{p}$ 
as an injective function by arbitrarily assigning a set in $\calF_p$ to 
a vertex $x^{\ell}_i$ or $\neg x^{\ell}_i$, for every $i \in [n]$ and 
$\ell \in [2]$. 
In other words, every appearance 
of a literal is assigned a distinct subset in $\calF_p$.
\item The reduction adds a \emph{connection portal} $V$, which 
is an independent set on $2p$ red vertices $v_1, v_2,  \dots, v_{2p}$, to $R$. 
For every vertex $v_q$ in $V$, 
the reduction adds a pendant blue vertex $u_q$ adjacent to
the red vertex $v_q$.
\item For every vertex $x_i^{\ell} \in X$ where $i \in [n]$ and $\ell \in [2]$, 
the reduction adds edges $(x^{\ell}_i, v_q)$ for every 
$q \in \setrep(x^{\ell}_i)$.
Similarly, it adds edges $(\neg x^{\ell}_i, v_q)$ for every 
$q \in \setrep(\neg x^{\ell}_i)$.
\item For a clause $C_j$, 
suppose variable $x_i$ appears positively for the $\ell^{th}$ time 
as the $r^{th}$ variable in $C_j$.
For example, $x_i$ appears positively for the second time
as the third variable in $C_j$.
Then, the reduction adds edges across $B$ and $V$ 
such that the vertices $c_j^r$ and $x^{\ell}_i$ have the 
same neighborhood in $V$, namely, the set 
$\{v_q : q \in \setrep(x^{\ell}_i) \}$.
Similarly, it adds edges for the negative appearance of the variables.
\item The reduction adds an isolated blue vertex $x_0$.
\end{itemize}

\begin{figure}[t]
\centering
\includegraphics[scale=0.65]{./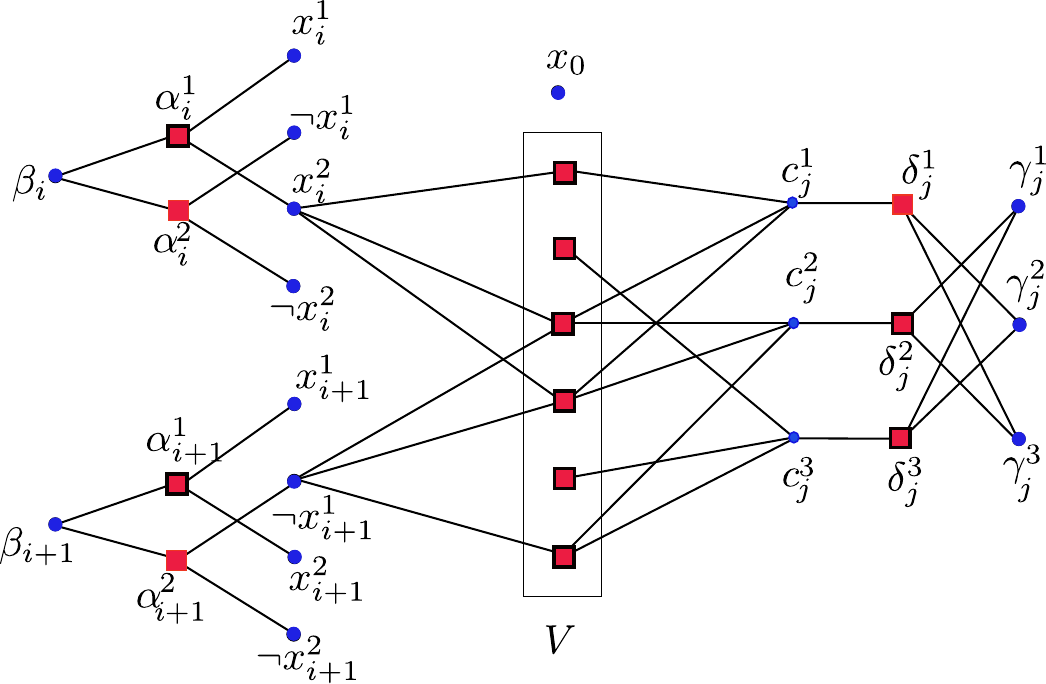}
\caption{
An illustrative example of the graph constructed by the reduction.
Red (squared) nodes denote the tests, whereas blue (filled circle) nodes
the elements.
For clarity, we do not explicitly show the pendant blue vertices adjacent to vertices in $V$.
The variable and clause gadgets are on the left and right sides
of $V$, respectively.
This is an adaptation of the same example mentioned in 
Subsection~\ref{sub-sec:locating-dom-tw-lb}.
Red vertices with a thick boundary are part of
a solution.
\label{fig:test-cover-tw-lb}}
\end{figure}

This concludes the construction of $G$. 
The reduction sets $k = n + 2m + 2p$ and 
returns $(G, \langle R, B \rangle, k)$ as the reduced instance of \TCPB.

We now prove the correctness of the reduction in the following lemma.

\begin{lemma}
\label{lem:3,3-sat-to-TCPB}
$\psi$ is a \yes-instance of \textsc{$(3,3)$-SAT} if and only if $(G, \langle R, B \rangle, k)$ is a \yes-instance of \TCPB.
\end{lemma}

\begin{proof}
Let us first assume that  $\psi$ be a \yes-instance of \textsc{$(3,3)$-SAT}. Moreover, suppose $\pi: X \mapsto \{\true, \false\}$ is a satisfying assignment
of $\psi$.
We construct a vertex subset $S$ of $G$ from the 
satisfying assignment on $\phi$ in the following manner:
Initialize $S$ by adding all the vertices in $V$.
For variable $x_i$, if $\pi(x_i) = \true$,
then include $\alpha_i^1$ in $S$,
otherwise include $\alpha_i^2$ in $S$.
For any clause $C_j$, 
if its first variable is set to \true\ then include 
$\{\delta^2_j, \delta^3_j\}$ in $S$,
if its second variable is set to \true\ then include 
$\{\delta^1_j, \delta^3_j\}$ in $S$, otherwise
include $\{\delta^1_j, \delta^2_j\}$ in $S$.
If more than one variable of a clause $C_j$ is set to \true,
we include the vertices corresponding to the smallest indexed variable
set to \true.
This concludes the construction of $S$.

It is easy to verify that $|S| = n+2m+2p = k$. To prove that $(G, \langle R, B \rangle, k)$ is a \yes-instance of \TCPB, we show that, apart from the vertex $x_0$ which has an empty neighborhood in $S$, every other blue vertex of $G$ has a (non-empty) unique neighborhood in $S$. To begin with, we see that the vertex $\beta_i$ for each $i \in [n]$ has a unique neighbor in $S$, namely either $\alpha_i^1$ or $\alpha_i^2$. Similarly, each pendant blue vertex $u_q$, for $q \in [2p]$, has a unique neighbor in $S$, namely, the vertex $v_q$. Since $V \subset S$, by the construction of $G$, the vertices of $\bigcup_{i=1}^n X_i$ have pairwise distinct neighborhoods of size $p$ in $S$. Similarly, again by construction, the vertices of $\bigcup_{i=1}^n \{c_j^1, c_j^2, c_j^3\}$ have pairwise distinct neighborhoods of size $p$ in $S$. On the other hand, since two vertices of $\{\delta_j^1, \delta_j^2, \delta_j^3\}$ belong to $S$ for each $j \in [m]$, it implies that the vertices of $\bigcup_{j=1}^m \{\gamma_j^1, \gamma_j^2, \gamma_j^3\}$ have pairwise distinct neighborhoods in $S$. Now, looking at a pair of the form $x_i^\ell$ (or equivalently $\neg x_i^\ell$) and $c_j^r$, the pair has distinct neighborhoods in $V \subset S$ if $c_j^r$ is \emph{not} the $\ell^{th}$ occurance of the literal $x_i$. Conversely, if $c_j^r$ is the $\ell^{th}$ occurance of $x_i$, then $\delta_j^r \notin S$ implies that $\pi(x_i) = \true$ which forces $\alpha_i^1 \in S$. In other words, the pair $x_i^\ell$ and $c_j^r$ have distinct neighborhoods in $S$. Finally, pairs of the form $\beta_i, x_i^\ell$ and $c_j^r, \gamma_j^s$ have pairwise distinct neighborhoods in $S$ since the vertices $x_i^\ell$ and $c_j^r$ have neighbors in $S$ and $\beta_i$ and $\gamma_j^s$ do not. This proves the forward direction of the claim.

\smallskip

We now assume that $(G, \langle R, B \rangle, k)$ is a \yes-instance of \TCPB and that $S$ is a test cover of $G$ of size $k = n+2m+2p$. Since $x_0$ is undominated by $S$, every other blue vertex must be dominated by $S$. Hence, we can assume that $V \subset S$ since each $v_q$ is adjacent to a pendant vertex $u_q$, where $q \in [2p]$, which must be covered. Similarly, for each $i \in [n]$, either $\alpha_i^1$ or $\alpha_i^2$ belongs to $S$ in order for $S$ to cover $\beta_i$. Moreover, for every $j \in [m]$, at least two vertices of $\{\delta_j^1, \delta_j^2, \delta_j^3\}$ must be in $S$ in order for the blue vertices in $\{\gamma_j^1, \gamma_j^2, \gamma_j^3\}$ to have pairwise distinct neighborhoods in $S$. This implies that $|\{\alpha_i^1, \alpha_i^2\} \cap S| = 1$ for all $i \in [n]$. Thus, setting $\pi(x_i) = \true$ if $\alpha_i \in S$ and $\pi(x_i) = \false$ otherwise, ensures a valid truth assignment for $\psi$. Moreover, we have $|\{\delta_j^1, \delta_j^2, \delta_j^3\} \cap S| = 2$ for all $j \in [m]$. Thus, if $\delta_j^r \notin S$ for some $j \in [m]$ and $r \in [3]$, it implies that $\alpha_i^1 \in S$ (respectively, $\alpha_i^2 \in S$), where $c_j^r$ is the $\ell^{th}$ occurence of $x_i$ (respectively, $\neg x_i$). This further implies that $\pi(x_i) = \true$ (respectively, $\pi(x_i) = \false$) and hence the clause $C_j$ is satisfied. This proves that the assignment on $\psi$ is a satisfying one and thus, proves the reverse direction of the claim.
\end{proof}


We are now in a position to prove the part of Theorem~\ref{thm:TW}
that states \TCPB\ does not admit an algorithm 
running in time  $2^{2^{o(\tw)}} \cdot {\rm poly}(n)$, unless the \ETH\ fails.

\begin{proof}[Proof of the \TCPB part of Theorem~\ref{thm:TW}]
Assume that there is an algorithm $\calA$ that, given a reduced instance 
$(G, \langle R,B \rangle, k)$ of \TCPB, runs in time $2^{2^{o(\tw)}} \cdot n^{\calO(1)}$
and correctly determines whether it is \yes-instance.
Consider the following algorithm that takes as input
an instance $\phi$ of \textsc{$(3, 3)$-SAT} and determines
whether it is a \yes-instance.
It first constructs an equivalent instance $(G, \langle R,B \rangle, k)$ of \TCPB\
as mentioned in this subsection.
Then, it calls algorithm $\calA$ as a subroutine and returns 
the same answer.
The correctness of this algorithm follows from the correctness
of algorithm $\calA$ and of the reduction (Lemma~\ref{lem:3,3-sat-to-TCPB}).
Note that since each component of $G - V$ is of constant order,
the vertex integrity (and thus treewidth) of $G$ is $\calO(|V|)$.
By the asymptotic estimation of the central binomial coefficient, 
$\binom{2p}{p}\sim \frac{4^p}{\sqrt{\pi \cdot p}}$~\cite{Sperner}. 
To get the upper bound of $2p$, we scale down the asymptotic function and have that $4n \leq \frac{4^p}{2^p}=2^p$.
As we choose the smallest possible value of $p$ such that $2^p \geq 4n$, we can choose $p = \log n + 3$. 
Therefore, $p=\calO(\log (n))$. 
And hence, $|V| = \calO(\log (n))$ which implies $\tw(G) = \calO (\log n)$.
As the other steps, including the construction of the instance 
of \TCPB\, can be performed in the polynomial step,
the running time of the algorithm for \textsc{$(3, 3)$-SAT}
is $2^{2^{o(\log (n))}} \cdot n^{\calO(1)} = 2^{o(n)} \cdot n^{\calO(1)}$.
This, however, contradicts Proposition~\ref{prop:3-3-SAT-ETH-lb}.
Hence, our assumption is wrong, and
\TCPB\ does not admit an algorithm running in time
$2^{2^{o(tw)}} \cdot |V(G)|^{\calO(1)}$, 
unless the {\ETH} fails.
\end{proof}

\subsection{Parameterization by the Solution Size}
\label{subsection:Test Cover para by sol size}

In this subsection, we present a reduction that 
is very close to the reduction used in the proof of Theorem~\ref{thm:locating-dom-set-sol-size-lb}
to prove Theorem~\ref{thm:TC-solsize}, which is restated here for convenience. 

\testcoversolsize*

\subparagraph*{Reduction.}
For notational convenience, instead of specifying an instance
of \TCPB, we specify the incidence graph as mentioned 
in the beginnign of the section.
The reduction takes as input 
an instance $\psi$, with $n$ variables and $m$ clauses, of \textsc{3-SAT} and returns a reduced instance
$(G, \langle R, B \rangle, k)$ of \TCPB
and $k = \calO(\log(n) + \log(m)) = \calO(\log(n))$,
using the sparsification lemma~\cite{DBLP:journals/jcss/ImpagliazzoPZ01}.

We again make use of bit-representation gadgets of~\cite{FoucaudGK0IST24}, as in Section~\ref{subsec:LD-lower-bound-solution-size}.

The reduction constructs graph $G$ as follows.

\begin{itemize} 
\item 
The reduction adds some dummy variables to ensure that 
$n = 2^{2q}$ for some positive integer $q$ which is a power of $2$.
This ensures that $r = \log_2(n) = 2q$ and $s = \frac{n}{r}$ 
both are integers.
It partitions the variables of $\psi$ into $r$ many \emph{buckets} $X_1,
X_2, \dots ,X_r$ such that each bucket contains $s$ many
variables.  
Let $X_i = \{x_{i, j}\ |\ j \in [s]\}$ for all $i\in [r]$. 

\item For every $X_i$, the reduction constructs a set $A_i$ of $2^s$
  many red vertices, that is, $A_i=\{a_{i, \ell}\mid \ell \in [2^s]\}$.
  Each vertex in $A_i$ corresponds to a unique assignment of the
  variables in $X_i$.  
  Moreover, let $A = \cup_{i=1}^r A_i$.
  
\item Corresponding to each $X_i$, let the reduction add a blue
  vertex $b_i$ and the edges $(b_i, a_{i, \ell})$ for all $i \in [r]$ and
  $\ell \in [2^s]$. 
  Let $B' = \{b_i \mid i \in [r]\}$.
 
 \item 
For every clause $C_j$, the reduction adds a pair of blue vertices
$c^{\circ}_j, c^{\star}_j$.  For a vertex $a_{i, \ell} \in A_i$ with
$i \in [r]$, and $\ell \in [2^s]$, if the assignment corresponding to
vertex $a_{i,\ell}$ satisfies the clause $C_j$, then the reduction
adds the edge $(a_{i,\ell}, c^{\circ}_j)$. Let $C = \{c^\circ_j,
c^\star_j \mid j \in [m] \}$.

\item The reduction adds a bit-representation gadget to locate set $C$.
However, it adds the vertices in such a way that for any pair $c^{\circ}_j, c^{\star}_j$, the supplementary vertices
adjacent to them are identical.
\begin{itemize}
\item The reduction sets $p := \lceil \log(m) \rceil+1$ and
for every $i \in [p]$, it adds two vertices, a red vertex $z_{i, 1}$ and 
a blue vertex $z_{i, 2}$, and the edge $(z_{i, 1},  z_{i, 2})$.
\item For every integer $j \in [m]$, let $\bit(j)$ denote the binary representation of $j$ using $p$ bits.
Connect $c^{\circ}_{j}, c^{\star}_j \in C$ with $z_{i, 1}$ if the $i^{th}$ bit in $\bit(j)$ is $1$.
\item Add two vertices $z_{0, 1}$ and $z_{0, 2}$, and edge $(z_{0, 1}, z_{0, 2})$.
Also make every vertex in $C$ adjacent to $z_{0, 1}$.
Let $Z$ be the collection of all the vertices added in this step, and $\bitrep(C)$, the set of red vertices in $Z$.
\end{itemize}
 
\item The reduction adds an isolated blue vertex $b_0$ (whose purpose is to remain uncovered, thereby forcing all other blue vertices to be coverec by some solution test).   
\end{itemize}

\begin{figure}[t]
\centering
\includegraphics[scale=0.75]{./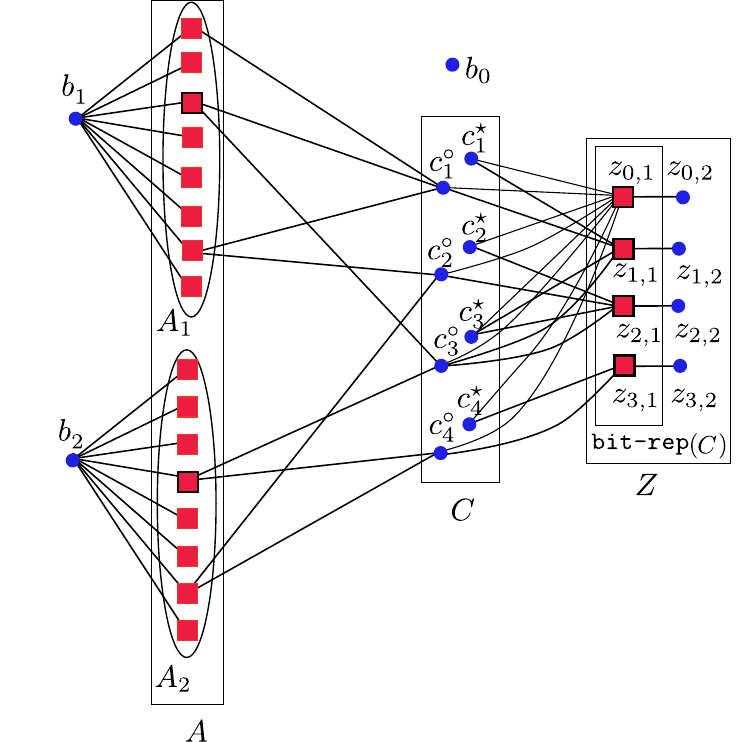}
\caption{An illustrative example of the graph constructed by the reduction in Subsection~\ref{subsection:Test Cover para by sol size}. Red vertices with a thick boundary are part of the solution.
Red (squared) vertices denote the tests whereas blue (filled circle) vertices
the elements. Vertices with a thick boundary are in a potential solution.
\label{fig:test-cover-sol-size}}
\end{figure}

This completes the construction.
We refer to Figure \ref{fig:test-cover-sol-size} for an illustration of the construction.
The reduction sets $k = r + p = 
\calO(\log(n)) + \calO(\log(m)) = \calO(\log(n))$, and
returns $(G, \langle R,B \rangle, k)$ as an instance of \TCPB.

We now prove the correctness of the reduction.


\begin{lemma}\label{lem_TC}
$\psi$ is a \yes-instance of \textsc{$3$-SAT} if and only if
$(G, \langle R, B \rangle, k)$ is a \yes-instance of 
\TCPB.
\end{lemma}
\begin{proof}
Assume first that $\psi$ is a \yes-instance of \textsc{$3$-SAT}. We
construct a subset $R' \subset R$ such that $|R'| \leq k$ in
the following manner. Since $\psi$ is a \yes-instance of $3$-SAT,
there exists a satisfying assignment $\alpha : X \to
\{\true,\false\}$ of $\psi$. Then, the assignment $\alpha$
restricted to each bucket $X_i$ gives an assignment $\alpha_i : X_i
\to \{\true, \false\}$ of the variables in $X_i$. By our construction
of $G$, the assignment $\alpha_i$ corresponds to a particular $a_{i,
  \ell} \in A_i$. We let each $a_{i,\ell}$ corresponding to each
$\alpha_i$ be in $R'$. We complete $R'$ by including in it all
vertices of $\bitrep(C)$. Then, clearly, $R' \subset R$ and $|R'| =
k$.

We first show that $R'$ dominates the set $B \setminus
\{b_0\}$. To start with, the set $\bitrep(C)$ dominates the blue vertices of $Z\cup C$. Moreover, the vertices of $B$ are dominated
by the vertices $a_{i ,\ell}$ picked in $R'$. Thus, $R'$ dominates $B
\setminus \{b_0\}$. We now show that $R'$ locates every pair of
vertices in $B$. Since $b_0$ is the only uncovered vertex, it is located. 
Each blue pendant vertex $z_{i,2}$ has
a unique neighbor $z_{i,1}$ in $R'$ and no other blue vertex is only dominated by $z_{i,1}$ (since all vertices in $C$ are dominated by at least two vertices in $R'$ thanks to $z_{0,1}$). Hence, the vertices of $Z$ are located by $R'$. The pairs $(c^\circ_j, c^\circ_{j'})$,
where $j \neq j'$, are located by $R'$ on account of each such vertex
having a unique neighborhood of size at least~2 in $\bitrep(C)$. The same argument applies to the pairs of the form $(c^\star_j, c^\star_{j'})$ and $(c^\circ_j,
c^\star_{j'})$, where $j \neq j'$. Each vertex $b_i$ of $B'$ is the only one that is adjacent to only one vertex of $R'\cap A$, and all of them are adjacent to a different one, so $B'$ is located by $R'$. Each pair $(b_i, b_{i'})$ is
located by the vertices $a_{i, \ell}$ picked in $R'$ that dominate them. Finally, the pairs
$(c^\circ_j,c^\star_j)$ are located by the vertices in $A\cap R'$, since $\alpha$ is a satisfying assignment and thus eacv vertex $c^\circ_j$ is domonated by at least one vertex of $A\cap R'$. This proves that $R'$, indeed, is a set of tests of the graph $G$.

\smallskip

Let us now assume that $(G,k)$ is a \yes-instance of \TCPB. Therefore,
let $R' \subseteq R$ be a test cover of $G$ such that $|R'| \leq
k$. Since $b_0$ is an isolated vertex of $G$, the set $R'$ dominates
$B\setminus\{b_0\}$ and locates every pair of $B$. Since, for
each $i \in [p]$, the vertex $z_{i,2}$ is pendant and dominated by $R'$,
$R'$ must contain $z_{i,1}$. This implies that $R'$
must contain at most $r$ vertices from $A$. The only vertices of $R'$
that can dominate vertex $b_i$ of $B'$ are those in $A_i$, where $i \in
[r]$. Since there are $r$ such vertices, the set $R'$ must contain at
exactly one vertex $a_{i, \ell}$, say, from  $A_i$, for each $i \in [r]$. 
Since each $a_{i,\ell} \in R'$
corresponds to a partial assignment $\alpha_i : X_i \to \{\true, \false\}$,
therefore, defining $\alpha : X \to \{\true,\false\}$ by $\alpha (x_r)
= \alpha_i (x_r)$ if $x_r \in X_i$ gives an assignment $\alpha$ on
$\psi$. Moreover, $\alpha$ is a satisfying assignment on $\psi$ since,
for each $j \in [m]$, the pair $(c^\circ_j, c^\star_j)$ is located by
some vertex $a_{i,\ell} \in A \cap R'$. Hence, by the construction of
$G$, the corresponding assignment $\alpha_i$ satisfies the clause
$C_j$. This implies that the assignment $\alpha$ satisfies each clause
of $\psi$ and hence, is a satisfying assignment of $\psi$.
\end{proof}

\medskip

We can now prove Theorem~\ref{thm:TC-solsize}, that unless the \ETH\ fails,
  \TCPB\ does not admit an algorithm running in time $2^{2^{o(k)}} \cdot (|U|+|\mathcal F|)^{\calO(1)}$, nor a polynomial-time kernelization algorithm that reduces the solution size and outputs a kernel with $2^{2^{o(k)}}$ vertices.


\begin{proof}[Proof of Theorem~\ref{thm:TC-solsize}]
Assume that there is an algorithm $\calA$ that, given an instance 
$(G,\langle R,B \rangle, k)$ of \TCPB, runs in time $2^{2^{o(k)}} \cdot (|U|+|\calF|)^{\calO(1)}$
and correctly determines whether it is a \yes-instance.
Consider the following algorithm that takes as input
an instance $\psi$ of \textsc{$3$-SAT} and determines
whether it is a \yes-instance.
It first constructs an equivalent instance $(G, \langle R,B \rangle, k)$ of \TCPB\
as mentioned in this subsection.
Note that such an instance can be constructed in time
$2^{\calO(n/\log(n))}$ as the most time consuming step
is to enumerate all the possible assignment of $\calO(n/\log(n))$
many variables in each bucket.

Then, it calls algorithm $\calA$ as a subroutine and returns 
the same answer.
The correctness of this algorithm follows from the correctness
of algorithm $\calA$ and the correctness of the reduction (Lemma~\ref{lem_TC}).
Moreover, from the reduction, we have $k = \calO(\log(n))$.
As the other steps, including the construction of the instance 
of \TCPB, can be performed in time $2^{\calO(n/\log(n))} \cdot n^{\calO(1)} = 2^{o(n)}$,
the running time of the algorithm for \textsc{$3$-SAT}
is $2^{2^{o(\log (n))}} \cdot (|R|+|B|)^{\calO(1)} = 2^{o(n)} \cdot n^{\calO(1)}$.
This, however, contradicts \ETH\ (using the sparsification lemma). 
Hence, our assumption is wrong, and
\TCPB\ does not admit an algorithm running in time
$2^{2^{o(k)}} \cdot (|U| + |\calF|)^{\calO(1)}$, 
unless the {\ETH} fails.

For the second part of Theorem~\ref{thm:TC-solsize}, assume that a kernelization algorithm
$\calB$ exists that takes as input an instance $(G, \langle R,B \rangle, k)$ of \TCPB\
and returns an equivalent instance with $2^{2^{o(k)}}$ vertices.
Then, the brute-force enumerating 
all the possible solutions works in
time $\binom{2^{2^{o(k)}}}{k} \cdot (|R|+|B|)^{\calO(1)}$,
which is $2^{k \cdot 2^{o(k)}} \cdot (|R|+|B|)^{\calO(1)}$,
which is $2^{2^{o(k)}} \cdot (|R|+|B|)^{\calO(1)}$,
contradicting the first result, and hence the \ETH.
Hence, \TCPB does not admit a polynomial-time kernelization algorithm that
returns a kernel with $2^{2^{o(k)}}$ vertices
unless the \ETH\ fails.
\end{proof}

\section{Conclusion}
\label{sec:conclusion}
We presented several results that advance our 
understanding of the algorithmic complexity of \LD and \TCPB, which we showed to have very interesting and rare parameterized complexities.
Moreover, we believe the techniques used in this article
can be applied to other identification problems
to obtain relatively rare conditional lower bounds.
The process of establishing such lower bounds
boils down to designing \emph{bit-representation gadgets}
and \emph{set-representation gadgets} for the problem in question. 

Apart from the broad question of designing such lower bounds for
other identification problems,
we mention an interesting problem left open by our work.
Can our tight double-exponential
lower bound for \LD\ parameterized by vertex integrity (and thus, treedepth, pathwidth and treewidth) be
applied to the feedback vertex set number? In our reductions, after removing a central connecting gadget of logarithmic size, we obtain a collection of small connected components, and thus the graph has small vertex integrity. However, these components are not acyclic, which is why the feedback vertex set number is unbounded. Note that a single-exponential time algorithm with respect to feedback \emph{edge} set number was presented by the authors in~\cite{DBLP:conf/ciac/ChakrabortyFMT25}.

This type of question can of course also be studied for other parameters: see~\cite{CGS21,DBLP:conf/ciac/ChakrabortyFMT25} for works on other structural parameterizations of \LD and \TCPB.

As we have two algorithms for \LD\ running in time $2^{\calO(k^2)}+\calO(k\log n)$ and $2^{2^{\calO(\tw)}}n$ respectively, and they are optimal under \ETH, it would be interesting to determine whether an algorithm running in time, say, $2^{\calO(k+\tw)}n^{\calO(1)}$ exists.

\bibliography{references}

\appendix

\end{document}